\newlength{\defbaselineskip}
\newcommand{\setlinespacing}[1]%
           {\setlength{\baselineskip}{#1 \defbaselineskip}}
\theoremstyle{plain}
\theoremstyle{definition}
\theoremstyle{plain}
\newtheorem{theorem}{Theorem}[section]
\newtheorem{lemma}[theorem]{Lemma}
\theoremstyle{definition}
\theoremstyle{remark}
\renewcommand{\theequation}{\thesection.\arabic{equation}}
\makeatletter\@addtoreset{equation}{section} \makeatother
\begin{document}

\title{Variable Selection with Broken Adaptive Ridge Regression for Interval-Censored Competing Risks Data}

\author{
Fatemeh Mahmoudi$^a$, Chenxi Li $^b$,  Kaida Cai $^c$, 
Xuewen Lu$^d$  \footnote{Corresponding author. Email: xlu@ucalgary.ca. Phone: (1-403)220-6620. Fax: (1-403)282-5150} \\
 $^a$ Department of Mathematics and Computing \\
 Mount Royal University, Calgary, Alberta, T3E 6K6, Canada\\
 \and
 $^b$ Department of Epidemiology and Biostatistics \\
Michigan State University, East Lansing, Michigan 48824, USA \\
\and
 $^c$ School of Public Health, School of Mathematics,  \\
Key Laboratory of Environmental Medicine Engineering, Ministry of Education, \\
Southeast University, Nanjing, 210009, China
\and
$^d$ Department of Mathematics and Statistics \\
University of Calgary, 
Calgary, Alberta, T2N 1N4,  Canada 
}


\date{}

\maketitle

\begin{abstract}
Competing risks data refer to situations where the occurrence of one event precludes the possibility of other events happening, resulting in multiple mutually exclusive events. This data type is commonly encountered in medical research and clinical trials, exploring the interplay between different events and informing decision-making in fields such as healthcare and epidemiology. We develop a penalized variable selection procedure to handle such complex data in an interval-censored setting. We consider a broad class of semiparametric transformation regression models, including popular models such as proportional and non-proportional hazards models. To promote sparsity and select variables specific to each event, we employ the broken adaptive ridge (BAR) penalty. This approach allows us to simultaneously select important risk factors and estimate their effects for each event under investigation. We establish the oracle property of the BAR procedure and evaluate its performance through simulation studies. The proposed method is applied to a real-life HIV cohort dataset, further validating its applicability in practice.
\end{abstract}

{\bf Keywords: Broken Adaptive Ridge Penalty; \and Competing Risks Data; \and Oracle Property;  \and  Semiparametric Transformation Regression Models; \and Variable Selection}

\section{Introduction}\label{sec:intro}
In many biomedical studies, it is not possible to observe the exact time of an event or failure, such as the onset of a disease in clinical studies. Instead, the event is only known to have occurred within a certain time interval determined by periodic clinical visits \citep{sun2006statistical}. This phenomenon is called interval censoring.

Another frequently arising complication in survival analysis is the occurrence of multiple events of interest in real-life problems. This complicated setting is known as ``competing risks data", where the occurrence of any one event precludes the other events from happening. For example, in a study on HIV/AIDS disease \citep{hudgens2001nonparametric}, two competing events are viral subtypes B and E. This HIV data set is our motivating example that involves interval-censored competing risks data, where a portion of the data has a missing cause of failure. In the analysis of this data set by \cite{hudgens2001nonparametric}, various risk factors are investigated in order to develop better precaution procedures. In this paper, we consider variable selection for a general type of interval-censored competing risks data while allowing for unknown/missing causes of failure.

A comprehensive review of different types of interval-censored data and the relevant methods can be found in \cite{sun2006statistical}. This type of censoring is known to be more difficult to analyze compared to other basic types, such as right-censoring. One of the main challenges is the development of efficient estimation procedures and the corresponding computational algorithms \citep{guo2014overview}.
For example, under the Cox model, the well-known partial likelihood method for right-censored data does not apply to interval-censored data, and a nuisance parameter must be estimated in addition to the regression coefficient parameters. \cite{finkelstein1986proportional} introduced semiparametric inference for general interval-censored data, proposing a method to jointly estimate the regression parameters and the baseline hazard function. \cite{zeng2006semiparametric} studied case II interval-censored data under the additive risk model. \cite{wang2016flexible} proposed a new method for analyzing interval-censored data under the proportional hazards model using monotone splines to approximate the cumulative baseline hazards function, and \cite{zeng2016maximum} extended the analysis of interval-censored data to a class of transformation models. Another semiparametric regression analysis for interval-censored data, including left-truncation and cure fraction, was done by \cite{shen2019semiparametric}. Recently, a new method was developed by \cite{zhou2022new} to fit the proportional hazards model to interval-censored failure time data with missing covariates, and their method addresses the challenges posed by the presence of interval censoring and missing data, and provides a practical solution for analyzing such complex data in survival analysis.

To extend inference to interval-censored competing risks data, \cite{li2016fine} used a sieve maximum likelihood estimation methodology with B-splines to model the baseline hazard functions of the cumulative incidence function (CIF) under the proportional subdistribution hazards (PSH) a.k.a. the Fine-Gray model \citep{fine1999proportional}. Following \cite{fine1999proportional}, \cite{bakoyannis2017semiparametric} proposed a class of semiparametric generalized odds rate transformation models for the cause-specific CIF. Similarly, \cite{mao2017semiparametric} considered a general class of semiparametric regression models for this type of data, incorporating potentially time-varying external covariates. This class includes both proportional and non-proportional subdistribution hazards structures, and the authors used nonparametric maximum likelihood estimation (NPMLE) while allowing for mixed-case interval censoring \citep{sun2006statistical} and partially missing information on the causes of failure. 

In biomedical studies, it is common practice to collect and maintain a considerable number of variables or risk factors in a study. However, incorporating all covariates in a regression model without filtering them based on their effectiveness may not be advantageous. This approach could lower the accuracy of the prediction and make interpretation more difficult \citep{nordhausen2009elements}. Variable selection methods have increasingly been used to tackle these issues. Among different variable selection techniques, regularization-based or penalized variable selection procedures are computationally efficient compared to traditional methods such as subset selection and forward/backward selection, and can handle estimation and variable selection simultaneously \citep{desboulets2018review}.

Variable selection has been applied to many different models and types of data. One of the pioneering works on variable selection for interval-censored data analysis was done by \cite{wu2015penalized} under a parametric model. More recently, \cite{zhao2019simultaneous} proposed a penalized variable selection method, namely Broken Adaptive Ridge regression (BAR) under the Cox regression model. Similarly, \cite{li2020adaptive} proposed the Adaptive LASSO (ALASSO) penalty for variable selection in interval-censored data analysis. \cite{li2020penalized} considered penalized estimation under a semiparametric transformation model and proposed a novel Expectation Maximization (EM) algorithm to incorporate the computation algorithm. A substantial review of existing methods for variable selection based on interval-censored data can be found in \cite{du2022variable}.

\cite{kuk2013model} extended variable selection to competing risks data by making the basic stepwise selection applicable to the PSH model. However, the journey of variable selection for competing risks data did not stop there. \cite{fu2017penalized} generalized several popular variable selection methods and their group versions to accommodate the PSH model. Later, \cite{ahn2018group} extended their work to an adaptive group bridge penalty and showed the consistency of such selection at both group and individual predictor levels . \cite{li2019variable} used quantile regression for variable selection in competing risks data.

Besides variable selection techniques for different models and data, various penalty functions have been proposed for penalized variable selection as well. Although all of them share the same objective of inducing sparsity, they feature different properties. Some of the most popular norm-based penalties are Ridge ($L_2$-based), Lasso ($L_1$-based), and Adaptive Lasso ($L_1$-based) proposed by \cite{hoerl1970ridge}, \cite{tibshirani1996regression}, and \cite{zou2006adaptive}, respectively. Among all the norms, the $L_0$ norm is known to impose a penalty on the cardinality of the predictor set directly, and it has the optimal performance in variable selection and parameter estimation \citep{shen2012likelihood}. However, despite its theoretical advantages, it is almost impossible to employ $L_0$-based variable selection methods such as Mallow's $C_p$ \citep{mallows2000some}, Akaike's information criterion (AIC) \citep{akaike1974new} or the Bayesian information criterion (BIC) \citep{schwarz1978estimating} for variable selection purposes in high-dimensional data. This is because of its limitation in computation, and instability with high-dimensional problems \citep{breiman1996heuristics}. This limitation in computation stems from the fact that it is non-convex in nature. Discovering the global optima of this problem requires an exhaustive combinatorial search for the best subset, which is computationally infeasible, even for data with moderate dimensions. A recently proposed penalty function, called Broken Adaptive Ridge, is a computationally scalable surrogate for $L_0$-penalized regression. It is an iteratively reweighted squared $L_2$-based penalty function that approximates the $L_0$ norm penalty. BAR takes advantage of this approximation and enjoys fast and efficient computation, as well as oracle properties. Since its proposal by \cite{liu2016efficient} for complete data, BAR has been studied under different models and data structures, including the Cox model for right-censored data \citep{kawaguchi2017scalable}, the linear model \citep{dai2018broken}, and the Cox model for interval-censored data \citep{zhao2019simultaneous}. BAR has also been extended to semiparametric transformation models by \cite{li2020penalized} and to semiparametric accelerated failure time models by \cite{sun2022broken}. It has been shown to possess several interesting features \citep{dai2018broken}: First, it produces a sparser and more accurate model compared to some other penalty functions. Second, it inherits the beneficial properties of the $L_0$ penalty while avoiding its pitfalls. Third, BAR has a closed-form solution, which makes it self-sufficient and independent of complicated algorithms such as the coordinate descent algorithm. Fourth, it is consistent and possesses oracle properties. Lastly, it has a grouping effect, which allows it to handle correlated predictors. 

In this study, our primary focus is to employ BAR for variable selection under a class of transformation models for interval-censored competing risks data. To achieve this, we propose an iteratively reweighted least squares algorithm that approximates the likelihood function as a least squares problem, followed by an optimization procedure to solve it.

The literature on variable selection for competing risks data has a focus on one cause of failure only. This leads to a lack of information on other causes of failure in this setting. It is desirable to assess the importance of variables in a model for all causes of failure jointly. One variable may stay in the model as it is important for one cause but not for other causes. 
In this paper, we aim to take all the risks in competing risks data into consideration, simultaneously, and propose a variable selection method for interval-censored competing risks data under a class of transformation models. Our new contributions can be considered from three aspects:
\begin{itemize}
    \item [1.] First, we use a semiparametric transformation regression model that makes our method flexible, as it contains popular models such as the proportional and non-proportional hazards models as special cases. Our method can handle variable selection and parameter estimation simultaneously. Our proposed method allows for the importance of assessment of variables for multiple risks (i.e., submodels) simultaneously, whereas the Fine-Gray model only incorporates one of the risks in the inference. For instance, the variable selection strategy in \cite{fu2017penalized}, which is built on the Fine-Gray model, is based on one of the risks only. 
    Furthermore, the Fine-Gray model requires the determination of the distribution of censoring in the model. The purposes of the proposed joint analysis are to avoid modeling the censoring distribution and gain efficiency.  
    
    
      \item [2.] The second aspect of our proposed variable selection method is the investigation of the oracle properties of BAR in the context of competing risks data. Our proofs have sharpened and improved the existing techniques in the literature for the BAR regression and yielded a semiparametric information bound for the sparse estimator of the regression parameters. 
      \item [3.] The third aspect of our proposed variable selection method is the use of BAR as a penalty function to enhance the estimation accuracy and the efficiency in computation. Employing BAR enables the variable selection procedure to enjoy a fast and efficient computational algorithm.
        

\end{itemize}
The rest of this paper is structured as follows. \Cref{unpensec2} includes an introduction to the data type, notations, and model, along with the proposed method for simultaneously variable selection and parameter selection using a penalized maximum likelihood approach. \Cref{subsubproj12} introduces a penalized EM algorithm implementing the proposed method.  \Cref{chap2asymp} outlines the asymptotic properties of the proposed method. Specifically, the proposed BAR estimators of regression parameters are proven to have the oracle property. \Cref{sim-project1} and \Cref{realdata-proj1} present the simulation studies, and real-life data analysis, respectively. \Cref{discussionproj1} includes the conclusion and discussion. Finally, we present the proofs of the asymptotic
properties in the Appendix. 

\section{Method for Penalized Variable Selection}\label{unpensec2}
We consider a study of $n$ independent subjects who are potentially exposed to experiencing one of the $K$ competing events of interest. Let $\textit{T}$ be a failure time with $\textit{K}$ competing risks (i.e., causes of failure) and suppose that $\textit{D}\in\{1,...,\textit{K}\}$ indicates the risk or cause of failure. Let $\textbf{Z}(\cdot)$ represent a $d_n$-vector of potentially time-varying external covariates and $\boldsymbol{\beta}=(\boldsymbol{\beta}_1^\top,\boldsymbol{\beta}_2^\top,\ldots,\boldsymbol{\beta}_K^\top)^\top$  denote a set of regression parameters corresponding to $K$ risks in the model, where $\boldsymbol{\beta}_k=(\beta_{1k}, \beta_{2k},\ldots,\beta_{d_nk})^\top$ corresponds to the regression parameters for the $k$th risk, $k=1,\ldots, K$,  and $d_n$ denotes the number of variables for each of the risks. The total number of regression coefficients is denoted by $p_n=Kd_n$. We assume $d_n\longrightarrow \infty$, then $p_n\longrightarrow \infty$. Within the framework of models that deal with multivariate survival data, there are three commonly used approaches to incorporate regression coefficients parameters ($\boldsymbol{\beta}$) and covariates ($\boldsymbol{Z}$) into the model:
\begin{itemize}
    \item [1.] Cause-specific regression coefficients parameters ($\boldsymbol{\beta}_k$) and cause-specific covariates ($\boldsymbol{Z}_k$) considered in  \cite{reeder2023penalized}, 
 which is the most general format and can be converted into the other two forms.
    \item [2.] Cause-specific regression coefficients parameters ($\boldsymbol{\beta}_k$) and a common covariate matrix ($\boldsymbol{Z}$). This is the approach employed in \cite{mao2017semiparametric}.
    \item [3.] A single long vector of regression coefficients parameters ($\boldsymbol{\beta}$) that contains all the different parameters in $\boldsymbol{\beta}_k$ vectors and cause-specific covariates ($\boldsymbol{Z}_k$). This is a common viewpoint utilized in many multivariate failure type models in the literature, such as those presented in \cite{lin1994cox} and \cite{sun2004additive}.
\end{itemize}
Throughout this work, we consider the second approach which provides an excellent foundation for variable selection as we can interpret the potential heterogenous effects of the same set of variables corresponding to each of the risks separately after variable selection given that we don't know which variable has an effect on which risk in the beginning. In addition, we model the competing risks data by the conditional subdistribution hazard function defined as 
\begin{eqnarray*}
    \lambda_k(t|\boldsymbol{Z})=\lim_{\Delta t\longrightarrow 0}\frac{1}{\Delta t}P(t\leq T< t+\Delta t,D=k|(T\geq t)\cup \{(T<t)\cap (D\neq k) \},\boldsymbol{Z}).
\end{eqnarray*}
Based on this conditional hazard function, we consider a general class of semiparametric regression models with time-dependent covariates, where the cumulative hazard function of $T$ given the time-dependent covariates $\textbf{Z}(\cdot)$ is defined by 
\begin{equation}
\label{cum hazard semipar}
\Lambda_k(t;\boldsymbol{Z})=G_k\left\{\int_0^t e^{\boldsymbol{\beta}_k^\top \boldsymbol{Z}(s)}d\Lambda_k(s)\right\},
\end{equation}
$G_k(\cdot)$ is a known increasing function and $\Lambda_k(\cdot)$ is an arbitrary increasing function with $\Lambda_k(0)=0$. The transformation function has the form $G_k(x)=-\log \int_0^\infty \exp (-x\zeta_k)\phi(\zeta_k)d\zeta_k$, where $\phi(\zeta_k)$ is a known density function on $[0,\infty)$. A popular choice for $\phi(\zeta_k)$ is the gamma density function with mean 1 and variance $r_k$ for $k=1,\ldots,K$. In this case, $G_k(x)$ falls into the class of logarithmic transformation functions described as
\begin{equation}
\label{transformation-function}
  G_k(x)=\begin{cases}
    \frac{1}{r_k}\log(1+r_k x), & r_k>0,\\
    x, & r_k=0.
  \end{cases}
\end{equation}
When $r_k=0$, the transformation model is corresponding to the Cox PH model and when $r_k=1$, it is the proportional odds model. 

Additionally, following \cite{mao2017semiparametric}, suppose there exists a random sequence of examination times denoted by $U_1<\cdots<U_J$. Define $\boldsymbol\Delta=(\Delta_1,\ldots,\Delta_J)^\top$ where $\Delta_j=I(U_{j-1}< T \leq U_j)$; $j=1, \ldots,J$ and $U_0=0$. In addition, define $\widetilde{D}$ as $DI(\boldsymbol{\Delta \neq 0})$ so that it represents the cause of failure for the events that are observed to happen between two examination times. Since we allow for missing causes of failure in this study, another variable, $\xi$, needs to be considered to account for the cause of failure being missing. Finally, the observed data for a random sample of $n$ subjects is $\mathcal{O}_i=(J_i,\boldsymbol{U}_i,\boldsymbol{\Delta}_i,\xi_i,\xi_i\widetilde{D}_i,\boldsymbol{Z}_i)$ where $i=1,\ldots,n$. For the $i$th subject :
\begin{itemize}
    \item [1.] $J_i$: the total number of examination times.
    \item [2.] $\boldsymbol{U}_i=(U_{i0}, U_{i1}, \ldots, U_{i,J_{i}})^\top$: the vector of examination times.
    \item [3.] $\boldsymbol{\Delta}_i=(\Delta_{i1}, \Delta_{i2}, \ldots, \Delta_{i,J_{i}})^\top$: the vector of zero and ones showing whether the event time was censored or observed between any of the examination times.
    \item [4.] $\xi_i$: takes the value of zero when the cause of failure is missing and one otherwise.
    \item [5.] $\xi_i\widetilde{D}_i$: takes the value of zero if the cause of failure is missing and $k$ if the cause of failure is known $(k=1,\ldots, K)$.
\end{itemize}
The NPMLE approach is utilized to estimate two sets of parameters in \eqref{cum hazard semipar}, $\boldsymbol{\beta}=(\boldsymbol{\beta}_1^\top,\ldots,\boldsymbol{\beta}_K^\top)^\top$ and $\boldsymbol{\Lambda}=(\Lambda_1,\ldots,\Lambda_K)$. Assuming that $(T,D) \perp \!\!\! \perp (\boldsymbol{U},J)$, conditional on $\boldsymbol{Z}(\cdot)$, the likelihood function is constructed using three contributions from three different scenarios:
(i) The event of interest is observed (no censoring), and the cause of failure is known (i.e., $k$):  $I(\xi_i \widetilde{D}_i=k, \Delta_{ij}=1)=1$.
(ii) The event of interest is observed, but the cause of failure is missing:
$I(\xi_i=0, \Delta_{ij}=1)=1$.
(iii) The event of interest is censored, and as a result, there is no information available on the cause of failure: $I(\boldsymbol{\Delta}_i=\boldsymbol{0})=1$. 
The observed likelihood function for $\boldsymbol{\beta}$ and $\Lambda$ can be expressed as follows.
\begingroup
\allowdisplaybreaks
\begin{eqnarray*}
\label{L_1}
L_n(\boldsymbol{\beta},\boldsymbol{\Lambda})&=&\prod_{i=1}^n\Bigg[\prod_{j=1}^{J_i} \prod_{k=1}^K\Bigg(\exp\Bigg[-G_k\Bigg\{\int_{0}^{U_{i,{j-1}}}e^{\boldsymbol{\beta}_k^\top \boldsymbol{Z}_{i}(t)}d\Lambda_k(t)\Bigg\}\Bigg]\\
&-&\exp\Bigg[-G_k\Bigg\{\int_{0}^{U_{ij}} e^{\boldsymbol{\beta}_k^\top \boldsymbol{Z}_{i}}d\Lambda_k(t)\Bigg\}\Bigg]\Bigg)^{I(\xi_i \widetilde{D}_i=k,\Delta_{ij}=1)}\\
&\times&\Bigg\{ \sum_{k=1}^K\Bigg(\exp\Bigg[-G_k\Bigg\{\int_{0}^{U{i,{j-1}}}e^{\boldsymbol{\beta}_k^\top \boldsymbol{Z}_{i}(t)}d\Lambda_k(t)\Bigg\}\Bigg]\\
&-&\exp\Bigg[-G_k\Bigg\{\int_{0} ^{U_{ij}}e^{\boldsymbol{\beta}_k^\top \boldsymbol{Z}_{i}(t)}d\Lambda_k(t)\Bigg\}  \Bigg]\Bigg)\Bigg\}^{I(\xi_i=0,\Delta_{ij}=1)}\\
&\times&\Bigg(\sum_{k=1}^K \exp\Bigg[-G_k\Bigg\{\int_{0}^{U_{i,J_i}} e^{\boldsymbol{\beta}_k^\top\boldsymbol{ Z}_{i}(t)}d\Lambda_k(t)\Bigg\}\Bigg]-K+1\Bigg)^{I(\boldsymbol{\Delta}_i=0)}\Bigg].
\end{eqnarray*}
\endgroup
Now, assume that $(L_i, R_i]$ is the interval among $(U_{i0},U_{i1}],\ldots,(U_{i,J_i},\infty]$ that contains $T_i$, and let 
$t_{kj}$ $(j=1,\ldots,m_k)$ denote the distinct values of $L_i$ and $R_i$ with $\xi_i\widetilde{D}i=k$ or $\xi_i=0$. In addition, assume that $\lambda_{kj}$ is the size of the jump at $t_{kj}$ where $t_{k1}<\ldots<t_{k,m_k}$ for $k=1,2,\ldots,K$ and $j=1,\ldots,m_k$. Therefore, $\boldsymbol{Z}_{ikj}=\boldsymbol{Z}_i(t_{kj})$, and then the likelihood function can be expressed as follows,
\begingroup
\allowdisplaybreaks
\begin{eqnarray}
\label{L_1}
L_n(\boldsymbol{\beta},\boldsymbol{\Lambda})&=&\prod_{k=1}^K \prod_{i:\xi_i \widetilde{D}_i=k}\Bigg[\exp\Bigg\{-G_k\Bigg(\sum_{t_{kj}\leq L_i}\lambda_{kj}e^{\boldsymbol{\beta}_k^\top \boldsymbol{Z}_{ikj}}\Bigg)\Bigg\}\nonumber\\
&-&\exp\Bigg\{-G_k\Bigg(\sum_{t_{kj}\leq R_i} \lambda_{kj}e^{\boldsymbol{\beta}_k^\top \boldsymbol{Z}_{ikj}}\Bigg) \Bigg\}\Bigg]\nonumber\\
&\times& \prod _{i:\xi_i=0}\Bigg(\sum_{k=1}^K \Bigg[\exp\Bigg\{ -G_k\Bigg(\sum_{t_{kj}\leq L_i}\lambda_{kj}e^{\boldsymbol{\beta}_k^\top \boldsymbol{Z}_{ikj}}\Bigg)\Bigg\}\nonumber\\
&-&\exp \Bigg\{-G_k\Bigg(\sum_{t_{kj}\leq R_i} \lambda_{kj}e^{\boldsymbol{\beta}_k^\top \boldsymbol{Z}_{ikj}}\Bigg)  \Bigg\}\Bigg]\Bigg)\nonumber\\
&\times &\prod_{i:R_i=\infty} \Bigg[\sum_{k=1}^K \exp\Bigg\{ -G_k\Bigg(\sum_{t_{kj}\leq L_i} \lambda_{kj}e^{\boldsymbol{\beta}_k^\top \boldsymbol{Z}_{ikj}} \Bigg)\Bigg\}-K+1\Bigg].
\end{eqnarray}
\endgroup
In order to construct the objective function for variable selection and estimation, let 
\begin{eqnarray}
    \label{loglikforproof}\ell_n(\boldsymbol{\beta},\boldsymbol{\Lambda})=\log\{L_n(\boldsymbol{\beta},\boldsymbol{\Lambda})\}.
\end{eqnarray}
For fixed $\boldsymbol{\beta}$, denote $\widehat{\boldsymbol{\Lambda}}(\boldsymbol{\beta})=\text{argmax}_{\boldsymbol{\Lambda}}\ell_n(\boldsymbol{\beta},
\boldsymbol{\Lambda})$. We define profile log-likelihood as 
\begin{eqnarray*}
    \ell_p(\boldsymbol{\beta})=\max_{\boldsymbol{\Lambda}}\ell_n(\boldsymbol{\beta},\boldsymbol{\Lambda})=\ell_n(\boldsymbol{\beta},\widehat{\boldsymbol{\Lambda}}(\boldsymbol{\beta})).
\end{eqnarray*}
We propose to adopt the penalized likelihood method by minimizing the following penalized objective function:
\begin{eqnarray}
    \label{loglikbasedobj}
    -\ell_p(\boldsymbol{\beta})+\sum_{k=1}^{K}\sum_{j=1}^{d_n}p_{\tau_n}(|\beta_{jk}|),
\end{eqnarray}
where $p_{\tau_n}(\cdot)$ denotes a penalty function and $\tau_n$ is a non-negative tuning parameter that controls the model's complexity. Directly Minimizing \eqref{loglikbasedobj} is challenging because the parameters are high-dimensional and there is not a closed-form solution. 

Our proposed BAR method iteratively performs the following penalized likelihood  estimation, 
\begin{eqnarray*}
  \hat{\boldsymbol{\beta}}^{(m+1)}=
  \arg \min  \ell_{pp}(\boldsymbol{\beta}|\hat{\boldsymbol{\beta}}^{(m)})
  \equiv   \arg \min \left\{
  -\ell_p(\boldsymbol{\beta})+\tau_n\sum_{k=1}^{K}\sum_{j=1}^{d_n}\frac{\beta_{jk}^2}{(\hat{\beta}_{jk}^{(m)})^2}
  \right\},
\end{eqnarray*}
where $\boldsymbol{\beta}^{(0)}$ represents a consistent estimator of $\boldsymbol{\beta}$ with all the components being non-zero. 
Below we will discuss how we obtain this consistent estimator. 
If the iterative estimation converges numerically, i.e., $\hat{\boldsymbol{\beta}}^{(m)}$ converges to some $\hat{\boldsymbol{\beta}}^*$ as $m\to\infty$,
we expect 
\begin{eqnarray*}
(\hat{\beta}_{jk}^{(m+1)})^2 / (\hat{\beta}_{jk}^{(m)})^2
\rightarrow 
I(\hat{\beta}_{jk}^* \neq 0). 
\end{eqnarray*}
as $m$ goes to infinity. Hence, BAR is considered a surrogate for the $L_0$-penalization approach, which is generally viewed as impractical due to being an NP-hard problem. BAR has been shown to possess the desirable features of $L_0$ norm penalization while avoiding its computational infeasibility \citep{dai2018broken}. Additionally, BAR involves an adaptively reweighted procedure that allows for the weighted penalty strength to be intensified for zero components and reduced for nonzero ones simultaneously. This is the reason why BAR is powerful in selecting relevant variables in a variable selection problem. We also consider the Lasso penalty \citep{tibshirani1997lasso} function defined as 
\begin{eqnarray*}
    p_{\tau_n}(|\beta_{jk}|)=\tau_n|\beta_{jk}|,
\end{eqnarray*} 
and ALasso penalty given by 
\begin{eqnarray*}
    p_{\tau_n}(|\beta_{jk}|)=\tau_n\frac{|\beta_{jk}|}{|\widetilde{\beta}_{jk}|^\psi},
\end{eqnarray*}
where $\widetilde{\beta}_{jk}$ is a consistent estimator of $\beta_{jk}$ \citep{zou2006adaptive} and $\psi>0$ is a constant, usually, $\psi=1$ is taken.

\section{Variable Selection Based on the EM algorithm}
\label{subsubproj12}
For the estimation of the parameters in the model under the case of fixed dimension $d_n=d$, \cite{mao2017semiparametric} introduced a novel EM algorithm that extends Turnbull's self-consistency formula to regression analysis with interval-censored competing risks. 
Based on their unpenalized estimation procedure, we propose an EM-embedded method for simultaneous variable selection and parameter estimation to eliminate the computation burden.
To construct the complete-data log-likelihood in the EM algorithm, let $N_{ki}(s_{ik,j-1},s_{ikj}]$ count the number of events of $k$th type that have happened in the interval of $(s_{ik,j-1},s_{ikj}]$ for the $i$th subject, and the sub-intervals are defined by partitioning the interval $(L_i, R_i]$ into $(s_{ik0},s_{ik1}],\ldots,(s_{ik,j_{ik}-1},s_{ik,j_{ik}}]$. Here, $s_{ik0}<\ldots<s_{ik,j_{ik}}$ represent the distinct values of $t_{kj}$ in the interval $(L_i, R_i]$. Then, treating $N_{ki}$ as unobserved data, the complete-data log-likelihood can be expressed as
\begin{eqnarray}
\label{completelogproj1}
    &&\sum_{i=1}^n\Big\{\sum_{k=1}^K\sum_{j=1}^{j_{ik}}I(R_i<\infty)N_{ki}(s_{ik,j-1},s_{ikj}]\log\Delta F(s_{ikj};\boldsymbol{Z}_i,\boldsymbol{\beta}_k,\Lambda_k)\nonumber\\
    &&+I(R_i=\infty)\log S(L_i;\boldsymbol{Z}_i,\boldsymbol{\beta},\boldsymbol{\Lambda})\Big\},
\end{eqnarray}
where $F_k(t;\boldsymbol{Z}_i,\boldsymbol{\beta}_k,\Lambda_k)=1-\exp\{-\Lambda_k(t|\boldsymbol{Z}_i)\}$, $S(t;\boldsymbol{Z}_i,\boldsymbol{\beta},\boldsymbol{\Lambda})=1-\sum_{k=1}^K F_k(t;\boldsymbol{Z}_i,\boldsymbol{\beta}_k,\Lambda_k)$ is the overall survival function, 
and 
$\Delta F_k(t;\boldsymbol{Z}_i,\boldsymbol{\beta}_k,\Lambda_k)$ is the jump size of $F_k(\cdot;\boldsymbol{Z}_i,\boldsymbol{\beta}_k,\Lambda_k)$ at $t$.
Let $\widetilde{\omega}_{ikj}$ be the conditional probability that the $i$th subject experiences a failure of the $k$th cause within the interval $(s_{ik,j-1},s_{ikj}]$ given the subject's failure information.
If $\xi_i \widetilde{D}_i=k^{\prime}$, then
\begin{eqnarray*}
\widetilde{\omega}_{ikj}&=&E\left\{ N_{ki}(s_{ik,j-1},s_{ikj}]\Bigg| N_{k^{'}
i}(L_i,R_i]=1 \right\}\nonumber\\
&=&
I(k=k^{'})\frac{\Delta F_k(s_{ikj};\boldsymbol{Z}_i,\boldsymbol{\beta}_k,\Lambda_k)}{\sum_{l=1}^{j_{ik}}\Delta F_k(s_{ikl};\boldsymbol{Z}_i,\boldsymbol{\beta}_k,\Lambda_k)},
\end{eqnarray*}
and if $\xi_i=0$, then
\begin{eqnarray*}
\widetilde{\omega}_{ikj}&=&E\left\{ N_{ki}(s_{ik,j-1},s_{ikj}]\Bigg|\sum_{k^{'}=1}^{K}N_{k^{'}
i}(L_i,R_i]=1 \right\}\nonumber\\
&=&
\frac{\Delta F_k(s_{ikj};\boldsymbol{Z}_i,\boldsymbol{\beta}_k,\Lambda_k)}{\sum_{k'=1}^{K}\sum_{l=1}^{j_{ik'}}\Delta F_k(s_{ik'l};\boldsymbol{Z}_i,\boldsymbol{\beta}_{k'},\Lambda_{k'})}.
\end{eqnarray*}
Finally, if $R_i=\infty$, then $\widetilde{\omega}_{ikj}=0$.

Thus, in the second step of the EM algorithm (maximization step), we aim to maximize
\begin{eqnarray}
\label{M-step}
&&\sum_{i=1}^n\left\{\sum_{k=1}^K \sum_{j=1}^{j_{ik}} \widetilde{\omega}_{ikj} \log \Delta F_k(s_{ikj};\boldsymbol{Z}_i,\boldsymbol{\beta}_k,\Lambda_k)\right\}\nonumber\\
&&+I(R_i=\infty) \log S(L_i;\boldsymbol{Z}_i,\boldsymbol{\beta},\boldsymbol{\Lambda}).
\end{eqnarray}
By utilizing the first-order approximation of $\Delta F_k(s_{ikj};\boldsymbol{Z}_i,\boldsymbol{\beta}_k,\Lambda_k)$ as
\begin{eqnarray*}
    \widetilde{G}_k(\sum_{j'=1}^j e^{\boldsymbol{\beta}_k^\top\boldsymbol{Z}_{ikj'}}\lambda_{kj'})e^{\boldsymbol{\beta}_k^\top \boldsymbol{Z}_{ikj}}\lambda_{kj},
\end{eqnarray*} 
we rewrite the objective function in \eqref{M-step} as:
\begin{eqnarray}
\label{Q-function}
\ell_n^{\ast}(\boldsymbol{\beta},\{\lambda_{kj}\})&=&\sum_{k=1}^K\sum_{i=1}^n\sum_{j=1}^{m_k} \widetilde{\omega}_{ikj}\left\{\log 
\lambda_{kj}+\boldsymbol{\beta}_k^\top\boldsymbol{Z}_{ikj} +\log \widetilde{G}_k\left( \sum_{j'=1}^je^{\boldsymbol{\beta}_k^\top\boldsymbol{Z}_{ikj'}}\lambda_{kj'}\right) \right\}\nonumber\\
&+&\sum_{i:R_i=\infty}\log\left[\sum_{k=1}^K\exp\left\{ -G_k\left(\sum_{t_{kj}\leq L_i}\lambda_{kj} e^{\boldsymbol{\beta}_k^\top\boldsymbol{Z}_{ikj}}\right)\right\} -K+1 \right],
\end{eqnarray}
where $\widetilde{G}_k(x)=G^{(1)}_k(x)e^{-G_k(x)}$, $G_k^{(1)}(x)$ denotes the first derivative of $G_k(x)$ with respect to $x$. 

To estimate $\lambda_{kj}$, we fix $\boldsymbol{\beta}$ and set the derivative of \eqref{completelogproj1} with respect to $\lambda_{kj}$ to zero to obtain an updating formula for $\lambda_{kj}$ below.  
\begin{eqnarray}
\label{updatelamproj1}
\widetilde{\lambda}_{kj}(\boldsymbol{\beta})&=&\left(\sum_{i=1}^n \widetilde{\omega}_{ikj}\right)\Bigg[\sum_{i=1}^n\sum_{j^{\prime}=j}^{m_k} \widetilde{\omega}_{ikj^{\prime}}e^{\boldsymbol{\beta}_k^{\top}\boldsymbol{Z}_{ikj^{\prime}}}\frac{\widetilde{G}^{(1)} _k}{\widetilde{G}_k}\left( \sum_{j^{\prime \prime}=1}^{j^{\prime}}e^{\boldsymbol{\beta}_k^{\top}\boldsymbol{Z}_{ikj^{\prime \prime}}}\lambda_{kj^{\prime \prime}}\right)\nonumber\\
&+&\sum_{i:R_i=\infty,L_i\geq t_{kj}}S(L_i;\boldsymbol{Z}_i,\boldsymbol{\beta},\boldsymbol{\Lambda})^{-1}\widetilde{G}_k\left(\sum_{t_{kj^{\prime}}\leq L_i}e^{\boldsymbol{\beta}_k^{\top}\boldsymbol{Z}_{ikj^{\prime }}}\lambda_{kj^{\prime}}\right) e^{\boldsymbol{\beta}_k^{\top}\boldsymbol{Z}_{ikj}}   \Bigg]^{-1}.
\end{eqnarray}
For the estimation of $\boldsymbol{\beta}$, plug $\widetilde{\lambda}_{kj}(\boldsymbol{\beta})$ into \eqref{Q-function}, and obtain the profile log-likelihood for $\boldsymbol{\beta}$ in \eqref{finalQ-function}. Then, using a one-step Newton-Raphson algorithm can lead to the estimate of $\boldsymbol{\beta}$. The algorithm is cycled among 
$\{\widetilde{\omega}_{ikj}\}$, $\boldsymbol{\beta}$  and $\{\widetilde{\lambda}_{kj(\boldsymbol{\beta})} \}$.
Note that although we denote \eqref{Q-function} as $\ell_n$, it is not derived by taking the logarithm of the likelihood function \eqref{L_1}, but it is the objective function in the M-step of the EM algorithm. To facilitate the computation, our variable selection procedure is embedded in the EM algorithm by using this objective function.
In order to obtain a sparse estimator for $\boldsymbol{\beta}$, it is necessary to minimize the penalized objective function shown in \eqref{Qbeta}.
Given an initial value of $\boldsymbol{\beta}$, we compute $\{\widetilde{\lambda}_{kj} \}$ and $\{\widetilde{\omega}_{ikj} \}$ and fix them at the current values, then, we construct the following profile objective function that is going to be used in our penalized variable selection optimization problem,
\begin{eqnarray}
\label{finalQ-function}
\ell_{p}^{\ast}(\boldsymbol{\beta})&=&\sum_{k=1}^K\sum_{i=1}^n\sum_{j=1}^{m_k} \widetilde{\omega}_{ikj}\left\{\log 
\widetilde{\lambda}_{kj}+\boldsymbol{\beta}_k^\top\boldsymbol{Z}_{ikj} +\log \widetilde{G}_k\left( \sum_{j'=1}^je^{\boldsymbol{\beta}_k^\top\boldsymbol{Z}_{ikj'}}\widetilde{\lambda}_{kj'}\right) \right\}\nonumber\\
&+&\sum_{i:R_i=\infty}\log\left[\sum_{k=1}^K\exp\left\{ -G_k\left(\sum_{t_{kj}\leq L_i}\widetilde{\lambda}_{kj} e^{\boldsymbol{\beta}_k^\top\boldsymbol{Z}_{ikj}}\right)\right\} -K+1 \right].
\end{eqnarray}
During the penalized estimation procedure, this profile objective function is updated by pluging-in the newly estimated $\boldsymbol{\beta}$ values into $\{\widetilde{\lambda}_{kj} \}$ and $\{\widetilde{\omega}_{ikj}\}$ and keeping $\boldsymbol{\beta}$  shown in the expression of $\ell_{p}^{\ast}(\boldsymbol{\beta})$ as the argument of the function. 
By utilizing \eqref{finalQ-function}, we propose to minimize the penalized profile objective function for variable selection, which is defined as
\begin{eqnarray}
    \label{Qbeta}
    \ell_{pp}^{\ast}(\boldsymbol{\beta})&=&-\ell_{p}^{\ast}(\boldsymbol{\beta})+\sum_{k=1}^{K}\sum_{j=1}^{d_n} p_{\tau_n}(|\beta_{jk}|)\nonumber\\
    &=&-\ell_{p}^{\ast}(\boldsymbol{\beta})+\sum_{a=1}^{p_n}p_{\tau_n}(|\beta_{a}|),
\end{eqnarray}
where $\{\beta_a\}|_{1\leq a\leq p_n}=\{ \beta_{jk}\}|_{1\leq j \leq d_n,~1\leq k \leq K}$.
To obtain the penalized estimator, we propose minimizing \eqref{Qbeta}. To solve \eqref{Qbeta} with different penalty functions, we need to employ different optimization algorithms. For LASSO and ALASSO, we employ the well-known shooting algorithm  \citep{fu1998penalized},
and the modified shooting algorithm proposed by \cite{zhang2007adaptive}, respectively. For BAR, a closed-form solution as described below can be used instead of utilizing complex computational algorithms,  which significantly simplifies the computation process. 

Our strategy is to approximate the profile objective function \eqref{finalQ-function} by a second-order Taylor expansion and solve an iterative reweighted least square problem subject to penalties at each iteration. For any fixed tuning parameter $\tau_n$, we propose the following computation algorithm to minimize the objective function
\begin{itemize}
    \item [\textit{Step 1.}] Follow the EM algorithm described in \Cref{subsubproj12} by choosing the initial estimators $\boldsymbol{\beta}^{(0)}=\boldsymbol{0}$ and $\lambda_{kj}^{(0)}=1/n$ for $j=1,\ldots,m_k$ and $k=1,\ldots,K$, and obtain the estimates of $\boldsymbol{\beta}$, $\boldsymbol{\lambda}$, and $\boldsymbol{\omega}$ as $\widetilde{\boldsymbol{\beta}}$, $\widetilde{\boldsymbol{\lambda}}$, and $\widetilde{\boldsymbol{\omega}}$ without imposing a penalty or by an initial ridge regression estimator as \cite{kawaguchi2020surrogate} did in their work. 
    \item [\textit{Step 2.}] At the step 0, fix $\widetilde{\boldsymbol{\Phi}}=(\widetilde{\boldsymbol{\lambda}},\widetilde{\boldsymbol{\omega}})$ and set the initial estimator $\widehat{\boldsymbol{\beta}}^{(0)}=\widetilde{\boldsymbol{\beta}}=(\widetilde{\boldsymbol{\beta}}_1^\top,\ldots,\widetilde{\boldsymbol{\beta}}_K^\top)^\top$.
    \item [\textit{Step 3.}] At step $m+1$, compute the following four components including $\boldsymbol{u}(\boldsymbol{\beta})$, $\boldsymbol{H}(\boldsymbol{\beta})$, $\boldsymbol{X}(\boldsymbol{\beta})$, and $\boldsymbol{W}(\boldsymbol{\beta})$ based on the current value of $\widehat{\boldsymbol{\beta}}^{(m)}$. The gradient vector is presented by $\boldsymbol{u}(\boldsymbol{\beta})$:
    \begin{eqnarray*}
    \boldsymbol{u}(\boldsymbol{\beta})=(\boldsymbol{u}^\top_1(\boldsymbol{\beta}),\ldots, \boldsymbol{u}^\top_K(\boldsymbol{\beta})=(\partial \ell_p^{\ast}(\boldsymbol{\beta})/{\partial \boldsymbol{\beta}_1^\top},\ldots,\partial \ell_p^{\ast}(\boldsymbol{\beta})/{\partial \boldsymbol{\beta}_K^\top})^\top_{1\times Kd_n}
    \end{eqnarray*}
      with $Kd_n$ elements denoting the number of regression coefficient parameters for all the $K$ risks (in this work, we consider $K=2$) in total.   
      Hessian matrix $\boldsymbol{H}(\boldsymbol{\beta})$ is the second derivative of $\ell_p^{\ast}(\boldsymbol{\beta})$ given by
      \begin{eqnarray*}
    \boldsymbol{H} (\boldsymbol{\beta})=\begin{bmatrix}   {\boldsymbol{H}^{11}_{(d_n\times d_n)}(\boldsymbol{\beta})} & \ldots&{\boldsymbol{H}^{1K}_{(d_n\times d_n )}(\boldsymbol{\beta})}\\
    \vdots&\ddots&\vdots\\
    {\boldsymbol{H}^{K1}_{(d_n\times d_n )}(\boldsymbol{\beta})} & \ldots&{\boldsymbol{H}^{KK}_{(d_n\times d_n )}(\boldsymbol{\beta})}\\
    \end{bmatrix}_{(Kd_n\times Kd_n)},
      \end{eqnarray*}
    where $\boldsymbol{H}^{kk'}(\boldsymbol{\beta})=\partial^2\ell_p^{\ast}(\boldsymbol{\beta})/{{\partial\boldsymbol{\beta}_k\partial\boldsymbol{\beta}_{k'}^\top}}$ that is a square matrix with $1\le k,k'\le d_n$. 
The pseudo response vector, denoted by $\boldsymbol{W}(\boldsymbol{\beta})$, is calculated as follows
    $$\boldsymbol{W}(\boldsymbol{\beta}) =(\boldsymbol{X^\top}(\boldsymbol{\beta}))^{-1}\left\{ -\boldsymbol{H}(\boldsymbol{\beta}) \boldsymbol{\beta} +\boldsymbol{u}(\boldsymbol{\beta})\right\},$$
    where $-\boldsymbol{H}(\boldsymbol{\beta}) =\boldsymbol{X}^\top (\boldsymbol{\beta})\boldsymbol{X}(\boldsymbol{\beta})$, and $\boldsymbol{X}(\boldsymbol{\beta})$ is an upper triangular matrix that is obtained through the Cholesky decomposition of $\boldsymbol{H}(\boldsymbol{\beta})$.
    The matix $\boldsymbol{X^\top}(\boldsymbol{\beta})$ 
    may not be invertible, $(\boldsymbol{X^\top}(\boldsymbol{\beta}))^{-1}$ represents the generalized inverse. 
    \item [\textit{Step 4.}] Approximate $-\ell_p^{\ast}(\boldsymbol{\beta})$ by the second-order Taylor expansion as
    \begin{eqnarray*}
        -\ell_p^{\ast}(\boldsymbol{\beta})=\frac{1}{2} (\boldsymbol{W}(\boldsymbol{\beta}) -\boldsymbol{X}(\boldsymbol{\beta})\boldsymbol{\beta})^\top (\boldsymbol{W}(\boldsymbol{\beta}) -\boldsymbol{X}(\boldsymbol{\beta})\boldsymbol{\beta}).
    \end{eqnarray*}
    \item [\textit{Step 5.}] Minimize the objective function, \eqref{Qbeta} by substituting its approximation for $-\ell_p^{\ast}(\boldsymbol{\beta})$ in the previous step. The closed-form solution of BAR to obtain the penalized estimate at each step is
    \begin{eqnarray*}
    \widehat{\boldsymbol{\beta}}^{(m+1)}&=&
    \left\{\boldsymbol{X}(\boldsymbol{\beta})^\top\boldsymbol{X}(\boldsymbol{\beta})+\tau_n \boldsymbol{D}(\boldsymbol{\beta}) \right\}^{-1}\boldsymbol{X}^\top (\boldsymbol{\beta})\boldsymbol{W}(\boldsymbol{\beta}),
    \end{eqnarray*}
    where 
    \begin{eqnarray*}
    \boldsymbol{D}(\boldsymbol{\beta})=\text{diag}\left(
    \frac{1}{
    {{\beta}}_{11}^2},\ldots,
    \frac{1}{
    {{\beta}}_{d_n1}^2},\ldots,
    \frac{1}{{\beta}_{1K}^2},\ldots,
    \frac{1}{
    {\beta}_{d_n K}^2}\right)
    \end{eqnarray*}
    is a square matrix with $Kd_n$ rows and columns and $\boldsymbol{\beta}=\widehat{\boldsymbol{\beta}}^{(m)}$ which is the penalized estimate of $\boldsymbol{\beta}$ at the $m$th step.
    
    Note: As the successive values of $\beta_{jk}$ for $j=1,\ldots,d_n$ approach their limit, the weight matrix $\boldsymbol{D}(\boldsymbol{\beta})$ will inevitably encounter a situation where division by an extremely small non-zero value occurs, potentially leading to a so-called arithmetic overflow \citep{dai2018broken, kawaguchi2020surrogate}. To address this issue, a commonly adopted solution involves introducing a slight perturbation. Specifically, the matrix $\boldsymbol{D}(\boldsymbol{\beta})$ is replaced by 
    \begin{eqnarray*}
    \text{diag}\left(
    \frac{1}{
    {({\beta}}_{11}^2+\delta^2)},\ldots,
    \frac{1}{
    {({\beta}}_{d_n1}^2+\delta^2)},\ldots,
    \frac{1}{({\beta}_{1K}^2+\delta^2)},\ldots,
    \frac{1}{
    {(\beta}_{d_n K}^2+\delta^2)}\right),    
    \end{eqnarray*}
    where $\delta=10^{-6}$ in our study, to prevent numerical instability.
    \item [\textit{Step 6.}] Update $\lambda_{jk}$ at the $(m+1)$th step based on \eqref{updatelamproj1} as well as $\omega_{ikj}$.
    \item [\textit{Step 7.}] Return to \textit{Step 3} and repeat the procedure until the convergence criterion is satisfied. The penalized BAR estimator can then be obtained by iterating the above procedure until convergence is achieved, i.e., $\widehat{\boldsymbol{\beta}}^{*}=\lim_{m \to \infty}\widehat{\boldsymbol{\beta}}^{(m)}$.
    In our numerical studies, the convergence criterion is set to stop the iteration when $\|\widehat{\boldsymbol{\beta}}^{(m+1)}-\widehat{\boldsymbol{\beta}}^{(m)}\|<10^{-6}$.
\end{itemize}
Note that this algorithm can be readily used for many different penalty functions. The difference would be in the last two steps where one needs to utilize appropriate optimization algorithms (e.g., shooting algorithm) instead of BAR's closed-form solution. 

The selection of the tuning parameter $\tau_n$ is essential in implementing the proposed penalized  variable selection method. The performance of variable selection is highly influenced by the tuning parameter value as this parameter controls the balance between the goodness of fit and sparsity of the model. Very large values of $\tau_n$ result in all the parameters becoming zero while very small values do not provide sufficient sparsity in the model. Therefore, it is crucial to use an appropriate method to find the optimal tuning parameter. Various data-driven methods, such as the Akaike information criterion (AIC), the Bayesian information criterion (BIC), and generalized cross-validation (GCV), can be used to choose the tuning parameter. We propose to use the generalized cross-validation (GCV) method \citep{craven1978smoothing}. The GCV method was initially introduced to reduce the computational burden by weighting the ordinary leave-one-out cross-validation. Subsequently, the GCV method was adapted to perform tuning parameter selection in variable selection, as proposed by \cite{cai2005variable, fan2001variable, huang2009group}. It is defined as
\begin{equation}
    \text{GCV}(\tau_n,\widehat{\boldsymbol{\beta}})=\frac{-\ell_p^{\ast}( \widehat{\boldsymbol{\beta}})}{n\left[1-s(\tau_n,\boldsymbol{\widehat{\beta}})/n\right]^2},
\end{equation}
where $\widehat{\boldsymbol{\beta}}$ represents the vector of the penalized estimates, and $s(\tau_n,\widehat{\boldsymbol{\beta}})=\text{tr}\{(\boldsymbol{H}(\widehat{\boldsymbol{\beta}})+\eta(\tau_n,\widehat{\boldsymbol{\beta}}))^{-1}\boldsymbol{H}(\widehat{\boldsymbol{\beta}})\}$ is the number of effective parameters.  \begin{eqnarray*}
\eta(\tau_n,\widehat{\boldsymbol{\beta}})=\tau_n r(\widehat{\boldsymbol{\beta}}),
\end{eqnarray*}
and 
\begin{eqnarray*}
r(\widehat{\boldsymbol{\beta}})=\text{diag}\Bigg(\frac{\nabla_{\beta_{1,1}} p_{\tau_n}(\widehat{\boldsymbol{\beta}})}{|\widehat{\beta}_{11}\mid},\ldots,\frac{\nabla_{\beta_{d_n1}} p_{\tau_n}(\widehat{\boldsymbol{\beta}})}{\mid\widehat{\beta}_{d_n1}\mid},\ldots,\frac{\nabla_{\beta_{1K}} p_{\tau_n}(\widehat{\boldsymbol{\beta}})}{\mid\widehat{\beta}_{1K}\mid},\ldots,\frac{\nabla_{\beta_{d_nK}} p_{\lambda_n}(\widehat{\boldsymbol{\beta}})}{\mid\widehat{\beta}_{d_nK}\mid}\Bigg),
\end{eqnarray*}
$\nabla$ denotes the first derivative of the penalty function $p_{\tau_n}$ with respect to elements of $|\boldsymbol{\beta}|$.
\section{The Asymptotic Properties of the Proposed BAR Penalized Estimator}
\label{chap2asymp}
This section focuses on studying the behaviour of the proposed BAR estimator denoted as $\widehat{\boldsymbol{\beta}}^*$, as the sample size approaches infinity. The aim is to investigate the asymptotic properties of the estimator. In order to do this, we denote the true values of $\boldsymbol{\beta}$ by $$\boldsymbol{\beta}_0=(\boldsymbol{\beta}^\top_{01},\ldots,\boldsymbol{\beta}^\top_{0K})^\top=(\beta_{0,1,1},\ldots,\beta_{0,d_n,1},\ldots,\beta_{0,1,K},\ldots,\beta_{0,d_n,K})^\top.$$ 
Now, without loss of generality, assume 
\begin{eqnarray*}
\boldsymbol{\beta}_0={(\boldsymbol{\beta}_{0s1}^\top,\boldsymbol{\beta}_{0s2}^\top)}^\top,
\end{eqnarray*} 
where $\boldsymbol{\beta}_{0s1}={(\beta_{0s1,1},\ldots,\beta_{0s1,q_n})}^\top$ and $\boldsymbol{\beta}_{0s2}={(\beta_{0s2,q_n+1},\ldots,\beta_{0s2,p_n})}^\top$. $\boldsymbol{\beta}_{0s1}$ represents $q_n$ ($q_n<<p_n$) nonzero true values and $\boldsymbol{\beta}_{0s2}$ denotes the zero true values among all $K$ risks. Additionally, we define $\boldsymbol{\beta}={(\boldsymbol{\beta}_{s1}^\top,\boldsymbol{\beta}_{s2}^\top)}^\top$ and let 
${\widehat{\boldsymbol{\beta}}}^*=(\widehat{\boldsymbol{\beta}}_{s1}^{* \top},
\widehat{\boldsymbol{\beta}}_{s2}^{* \top})^\top$ denote the BAR estimator.

Define 
\begin{eqnarray*}
    \boldsymbol{\Omega}_n(\boldsymbol{\beta})=-\Ddot{\ell}_p(\boldsymbol{\beta})=\boldsymbol{X}^\top(\boldsymbol{\beta})\boldsymbol{X}(\boldsymbol{\beta})
\end{eqnarray*}
as the Cholesky decomposition, and
\begin{eqnarray*}
    \boldsymbol{v}_n(\boldsymbol{\beta})=\dot{\ell}_p(\boldsymbol{\beta})-\Ddot{\ell}_p(\boldsymbol{\beta})\boldsymbol{\beta}.
\end{eqnarray*}
Let 
\begin{eqnarray*}
\boldsymbol{\Omega}_n(\boldsymbol{\beta}_{s1})=\boldsymbol{\Omega}_n(\boldsymbol{\beta})\Big|_{\boldsymbol{\beta}=(\boldsymbol{\beta}_{s1}^\top,\boldsymbol{\beta}_{s2}^\top=\boldsymbol{0}^\top)^\top}
\end{eqnarray*}
and 
\begin{eqnarray*}
\boldsymbol{v}_n(\boldsymbol{\beta}_{s1})=\boldsymbol{v}_n(\boldsymbol{\beta})\Big|_{\boldsymbol{\beta}=(\boldsymbol{\beta}_{s1}^\top,\boldsymbol{\beta}_{s2}^\top=\boldsymbol{0}^\top)^\top}.
\end{eqnarray*}
Note that $\boldsymbol{\Omega}_n(\boldsymbol{\beta})$ and $\boldsymbol{v}_n(\boldsymbol{\beta})$ can be written as
    \begin{eqnarray*}
    \boldsymbol{\Omega}_n(\boldsymbol{\beta}) = \begin{pmatrix} \boldsymbol{\Omega}^{(1)}_n(\boldsymbol{\beta}) & \boldsymbol{\Omega}^{(12)}_n(\boldsymbol{\beta}) \\
    \boldsymbol{\Omega}^{(21)}_n(\boldsymbol{\beta}) & \boldsymbol{\Omega}^{(2)}_n(\boldsymbol{\beta}) \end{pmatrix},   
    \end{eqnarray*} 
    and
    \begin{eqnarray*} 
    \boldsymbol{v}_n(\boldsymbol{\beta}) =
    \begin{pmatrix}
    \boldsymbol{v}^{(1)}_n(\boldsymbol{\beta}) \\ \boldsymbol{v}^{(2)}_n(\boldsymbol{\beta})
    \end{pmatrix},    
    \end{eqnarray*}
    respectively, where $\boldsymbol{\Omega}_n^{(1)}(\cdot)$ is a $q_n\times q_n$ leading submatrix of $\boldsymbol{\Omega}_n(\cdot)$, and $\boldsymbol{v}_n^{(1)}(\cdot)$ contains the first $q_n$ elements of $\boldsymbol{v}_n(\cdot)$.
To establish the asymptotic properties, it is necessary to satisfy the following conditions.
\begin{itemize}
\item [C1.] (i) The set $\mathcal{B}$ is a compact subset of $\mathbb{R}^{p_n}$, and $\boldsymbol{\beta}_0$ is an inferior point of $\mathcal{B}$.

(ii) There exists $\boldsymbol{Z}_0 > 0$, such that $P(\left\lVert\boldsymbol{Z} \right\rVert \leq \boldsymbol{Z}_0) = 1$, i.e., $\boldsymbol{Z}$ is bounded. The matrix $\mathbb{E}(\boldsymbol{Z} \boldsymbol{Z}^\top)$ is non-singular. 
\item [C2.] The union of the supports of $L$ and $R$ is contained in an interval $[u,v]$ with $0 < u < v < \infty$ and there exists a positive number $\zeta$ such that $P(R - L \geq \zeta) = 1$.
\item [C3.] The functions $\Lambda_{k}(\cdot)$, $k = 1,\ldots,K$, are continuously differentiable up to order $r$ in $[u,v]$, and satisfy $1/a < \Lambda_{k}(u) < \Lambda_{k}(v) < a$ for some positive constant $a$, for every $k \in \{1,\ldots, K \}$.
\item [C4.] For $\boldsymbol{\Omega}_n(\boldsymbol{\beta})$, there exists a compact neighbourhood $\mathcal{B}_0$ of the true value of $\boldsymbol{\beta}_0$ and a $p_n \times p_n$ positive-definite matrix, $\textbf{I}(\boldsymbol{\beta})$ such that
$$
\sup_{\beta \in \mathcal{B}_0} \left\lVert n^{-1} \boldsymbol{\Omega}_n(\boldsymbol{\beta}) - \textbf{I}(\boldsymbol{\beta}) \right\rVert \xrightarrow{\text{a.s.}} 0.
$$
\item [C5.] Define $\lambda_{\min}(\boldsymbol{\beta}) = \lambda_{\min}(n^{-1} \boldsymbol{\Omega}_n(\boldsymbol{\beta}))$ and $\lambda_{\max}(\boldsymbol{\beta}) = \lambda_{\max}(n^{-1} \boldsymbol{\Omega}_n(\boldsymbol{\beta}))$, where $\lambda_{\min}(\cdot)$ and $\lambda_{\max}(\cdot)$ denote the smallest and largest eigenvalues of the matrix. There exists a constant $c_0 > 0$, for $\mathcal{B}_0$ given in (C4), such that
$$
c_0^{-1} < \inf_{\boldsymbol{\beta} \in \mathcal{B}_0} \{ \lambda_{\min}(\boldsymbol{\beta}) \} \leq \sup_{\boldsymbol{\beta} \in \mathcal{B}_0} \{ \lambda_{\max}(\boldsymbol{\beta})  \} < c_0
$$
for a sufficiently large $n$. 
\item [C6.] As $n \rightarrow \infty, p_n q_n/\sqrt{n} \rightarrow 0, \tau_n \sqrt{p_n/n} \rightarrow 0$ and $\tau^2_n / (p_n \sqrt{n}) \rightarrow \infty$. 
\item [C7.] There exist positive constants $a_0$ and $a_1$ such that $a_0 \leq |\beta_{0j}| \leq a_1, 1 \leq j \leq q_n$. 
\item [C8.] The initial estimator $\widehat{\boldsymbol{\beta}}^{(0)}$ satisfies $\left\lVert \widehat{\boldsymbol{\beta}}^{(0)} - \boldsymbol{\beta}_0 \right\rVert = O_p(\sqrt{p_n /n })$. 
\item [C9.] For every $n$, the observations $\{v_{ni}, i=1,\ldots,n\}$ are independent and identically distributed with the probability density $f_n(v_{ni}; \boldsymbol{\beta}, \boldsymbol{\Lambda})$, which has common support and the model is identifiable. The parameter space is $\boldsymbol{\Theta} = \{ \boldsymbol{\nu}: \boldsymbol{\nu} = (\boldsymbol{\beta}, \boldsymbol{\Lambda}) \in \mathcal{B} \otimes \boldsymbol{\varphi}\}$, $\boldsymbol{\beta}_0$ is an interior point of $\mathcal{B}$, then for almost all $v_{ni}$, the density $f_n$ admits all third derivatives $\partial f_n(v_{ni}; \boldsymbol{\beta}, \boldsymbol{\Lambda})/\partial \beta_j \partial \beta_k \partial \beta_h $ for all $\boldsymbol{\beta} \in \mathcal{B}$. Furthermore, there are functions $M_{njkh}$ such that
\begin{eqnarray*}
\left\vert \frac{\partial \log f_n(v_{ni}; \boldsymbol{\beta}, \boldsymbol{\Lambda})}{\partial \beta_j \partial \beta_k \partial \beta_h} \right\vert \leq M_{njkh}(v_{ni})
\end{eqnarray*}
for all $\boldsymbol{\beta} \in \mathcal{B}$ and $\boldsymbol{\Lambda} \in \boldsymbol{\varphi}$, and
\begin{eqnarray*}
E_{\boldsymbol{\beta}, \boldsymbol{\Lambda}}\{ M^2_{njkh}(v_{ni}) \} < M_d < \infty.
\end{eqnarray*}
\end{itemize} 

Condition (C8) is crucial for establishing the oracle property of BAR.  The theory of semiparametric maximum likelihood estimation  may ensure that such an  initial estimator exists with the the desired convergence rate, for example, see \cite{lian2014scad} for a similar result in a different setting. Other conditions are required for establishing consistency of the sieve maximum likelihood estimator of the nuisance parameters and the asymptotic properties of the BAR estimator. The probability density function $f_n(\cdot)$  in (C9) is the $i$th term in the observed data likelihood function. The same condition is used in \cite{fan2004nonconcave} for the complete data models including the generalized linear model. 

The theorem below establishes the oracle property of the estimator $\widehat{\boldsymbol{\beta}}^*$.
\begin{theorem}\label{theorem1} 
    \textbf{(Oracle Property)} Assuming that the regularity conditions (C1)-(C9) are satisfied, then, with probability tending to 1, the BAR estimator 
    $\widehat{\boldsymbol{\beta}}^*=(\widehat{\boldsymbol{\beta}}_{s1}^{* \top},
    \widehat{\boldsymbol{\beta}}_{s2}^{* \top})^\top$ has the following properties:
\begin{itemize}
    \item [(i).] $\widehat{\boldsymbol{\beta}}^*_{s2}=0$.
    \item [(ii).] $\widehat{\boldsymbol{\beta}}^*_{s1}$ exists and is the unique fixed point of the equation
    \begin{eqnarray*}
        \boldsymbol{\beta}_{s1}=(\boldsymbol{\Omega}_n^{(1)}(\boldsymbol{\beta}_{s1})+\tau_n \boldsymbol{D}(\boldsymbol{\beta}_{s1}))^{-1}\boldsymbol{v}_n^{(1)}(\boldsymbol{\beta}_{s1}),
    \end{eqnarray*}
    where $\boldsymbol{D}(\boldsymbol{\beta}_{s1})=diag\{\beta_{1}^{-2},\ldots,\beta_{q_n}^{-2}\}$ and $\beta_1,\ldots,\beta_{q_n}$ represent the $q_n$ non-zero elements from risk 1 ($k=1$) to risk $K$ ($k=K$).
    \item [(iii).] For any $\textbf{b}_n$ being a $q_n$-vector, assume that $\left\lVert\boldsymbol{b}_n\right\rVert =1$. Then 
    \begin{eqnarray*}
        \sqrt{n}\textbf{b}^\top_n \boldsymbol{\Sigma}^{-\frac{1}{2}}(\widehat{\boldsymbol{\beta}}^*_{s1} - \boldsymbol{\beta}_{0s1}) \xrightarrow{d} N(0,1),
    \end{eqnarray*} 
    where
    \begin{eqnarray*}
        \boldsymbol{\Sigma} = {(I^{(1)}\left(\boldsymbol{\beta}_0)\right)}^{-1}
        ={(I^{(1)}\left(\boldsymbol{\beta}_{0s1})\right)}^{-1},
        \end{eqnarray*}
        i.e., $\widehat{\boldsymbol{\beta}}^*_{s1}$ is asymptotically normal with asymptotic variance $\boldsymbol{\Sigma}/n$. $I^{(1)}(\boldsymbol{\beta}_0)$ is the leading $q_n\times q_n$ submatrix of $I(\boldsymbol{\beta}_0)$, which indicates that the semiparametric information bound for the true sparse model  is achieved by the BAR penalty and the BAR estimator possesses the oracle property. 
\end{itemize}
\end{theorem}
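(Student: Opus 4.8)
The plan is to recast the whole theorem as a statement about the fixed points of a single generalized-ridge map. Combining the definitions $-\ddot{\ell}_p(\boldsymbol{\beta})=\boldsymbol{\Omega}_n(\boldsymbol{\beta})=\boldsymbol{X}^\top(\boldsymbol{\beta})\boldsymbol{X}(\boldsymbol{\beta})$ and $\boldsymbol{v}_n(\boldsymbol{\beta})=\dot{\ell}_p(\boldsymbol{\beta})+\boldsymbol{\Omega}_n(\boldsymbol{\beta})\boldsymbol{\beta}$ from Section~\ref{chap2asymp} with the identity $\boldsymbol{X}^\top(\boldsymbol{\beta})\boldsymbol{W}(\boldsymbol{\beta})=\boldsymbol{v}_n(\boldsymbol{\beta})$ behind the closed-form update of Section~\ref{subsubproj12}, the BAR recursion collapses to
\[
\widehat{\boldsymbol{\beta}}^{(m+1)}=\big(\boldsymbol{\Omega}_n(\widehat{\boldsymbol{\beta}}^{(m)})+\tau_n\boldsymbol{D}(\widehat{\boldsymbol{\beta}}^{(m)})\big)^{-1}\boldsymbol{v}_n(\widehat{\boldsymbol{\beta}}^{(m)}),
\]
so the limit $\widehat{\boldsymbol{\beta}}^*$ is a fixed point of $\Psi(\boldsymbol{\beta})=(\boldsymbol{\Omega}_n(\boldsymbol{\beta})+\tau_n\boldsymbol{D}(\boldsymbol{\beta}))^{-1}\boldsymbol{v}_n(\boldsymbol{\beta})$. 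First I would introduce the shrinking neighbourhood $\mathcal{N}_n=\{\boldsymbol{\beta}:\|\boldsymbol{\beta}_{s1}-\boldsymbol{\beta}_{0s1}\|\le c\sqrt{p_n/n},\ \|\boldsymbol{\beta}_{s2}\|\le c\sqrt{p_n/n}\}$, which contains the initial estimator with probability tending to $1$ by (C8), and show that $\Psi$ maps $\mathcal{N}_n$ into itself so that the iteration stays trapped there.

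For part~(i) the key is the block decomposition of $\Psi$. Applying block inversion to the lower block of the update, the pieces $\boldsymbol{\Omega}_n^{(21)}\widehat{\boldsymbol{\beta}}^{(m)}_{s1}$ cancel and one is left with a Schur-complement representation
\[
\widehat{\boldsymbol{\beta}}^{(m+1)}_{s2}\approx\tau_n^{-1}\,\text{diag}\big((\widehat{\beta}^{(m)}_a)^2\big)\,\big(\boldsymbol{\Omega}_n^{(2\cdot1)}\,\widehat{\boldsymbol{\beta}}^{(m)}_{s2}+\boldsymbol{s}_n^{(2)}\big),
\]
because the blow-up of the penalty $\tau_n\boldsymbol{D}^{(2)}=\tau_n\,\text{diag}((\widehat{\beta}^{(m)}_a)^{-2})$ dominates the denominator and its inverse reintroduces the factor $(\widehat{\beta}^{(m)}_a)^2$ (here $\boldsymbol{s}_n^{(2)}$ is the efficient-score term for the zero block, of smaller order). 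This is the BAR self-squaring: the factor $\max_a(\widehat{\beta}^{(m)}_a)^2\le\|\widehat{\boldsymbol{\beta}}^{(m)}_{s2}\|^2$ shrinks with the iterate, and with $\|\boldsymbol{\Omega}_n^{(2\cdot1)}\|=O(n)$ one obtains the per-iteration bound $\|\widehat{\boldsymbol{\beta}}^{(m+1)}_{s2}\|\lesssim(p_n/\tau_n)\,\|\widehat{\boldsymbol{\beta}}^{(m)}_{s2}\|$, where $p_n/\tau_n\to0$ by (C6). The block therefore contracts super-linearly to its only fixed point inside $\mathcal{N}_n$, namely $\boldsymbol{0}$, giving $\widehat{\boldsymbol{\beta}}^*_{s2}=\boldsymbol{0}$; the off-diagonal coupling enters only through the Schur complement $\boldsymbol{\Omega}_n^{(2\cdot1)}=\boldsymbol{\Omega}_n^{(2)}-\boldsymbol{\Omega}_n^{(21)}(\boldsymbol{\Omega}_n^{(1)})^{-1}\boldsymbol{\Omega}_n^{(12)}$, whose operator norm I would bound uniformly via the eigenvalue condition (C5).

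For part~(ii), restricting $\Psi$ to $\boldsymbol{\beta}_{s2}=\boldsymbol{0}$ reproduces exactly the map $\boldsymbol{\beta}_{s1}\mapsto(\boldsymbol{\Omega}_n^{(1)}(\boldsymbol{\beta}_{s1})+\tau_n\boldsymbol{D}(\boldsymbol{\beta}_{s1}))^{-1}\boldsymbol{v}_n^{(1)}(\boldsymbol{\beta}_{s1})$, and on the ball $\|\boldsymbol{\beta}_{s1}-\boldsymbol{\beta}_{0s1}\|\le c\sqrt{p_n/n}$ I would verify it is a contraction: by (C7) the nonzero coordinates stay bounded away from $0$, so $\tau_n\boldsymbol{D}$ and its derivative are $O(\tau_n)$ and, after scaling by $(\boldsymbol{\Omega}_n^{(1)})^{-1}=O_p(1/n)$, vanish by the $\tau_n\sqrt{p_n/n}\to0$ part of (C6), while the Lipschitz variation of $\boldsymbol{\Omega}_n^{(1)}$ and $\boldsymbol{v}_n^{(1)}$ over the ball is controlled by the third-derivative envelope (C9) with (C4)--(C5), the $p_nq_n/\sqrt{n}\to0$ condition absorbing the diverging dimension. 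Banach's fixed-point theorem then yields existence, uniqueness, and the stated characterization. For part~(iii), premultiplying the fixed-point equation by $\boldsymbol{\Omega}_n^{(1)}+\tau_n\boldsymbol{D}$ and using $\boldsymbol{v}_n^{(1)}(\boldsymbol{\beta}_{s1})=\dot{\ell}_p^{(1)}(\boldsymbol{\beta}_{s1})+\boldsymbol{\Omega}_n^{(1)}(\boldsymbol{\beta}_{s1})\boldsymbol{\beta}_{s1}$ cancels the $\boldsymbol{\Omega}_n^{(1)}\boldsymbol{\beta}_{s1}$ terms and leaves the penalized score identity $\dot{\ell}_p^{(1)}(\widehat{\boldsymbol{\beta}}^*_{s1})=\tau_n\boldsymbol{D}(\widehat{\boldsymbol{\beta}}^*_{s1})\widehat{\boldsymbol{\beta}}^*_{s1}$, whose right-hand side is $O_p(\tau_n)$ coordinatewise by (C7) and hence negligible after $n^{-1/2}$ scaling by (C6). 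A one-term Taylor expansion of $\dot{\ell}_p^{(1)}$ about $\boldsymbol{\beta}_{0s1}$, with remainder controlled by (C9) and the Hessian replaced by its limit via (C4), then gives
\[
\sqrt{n}\,(\widehat{\boldsymbol{\beta}}^*_{s1}-\boldsymbol{\beta}_{0s1})=\big(n^{-1}\boldsymbol{\Omega}_n^{(1)}\big)^{-1}\,n^{-1/2}\dot{\ell}_p^{(1)}(\boldsymbol{\beta}_{0s1})+o_p(1),
\]
and the profile-score central limit theorem $n^{-1/2}\dot{\ell}_p^{(1)}(\boldsymbol{\beta}_{0s1})\xrightarrow{d}N(\boldsymbol{0},\textbf{I}^{(1)}(\boldsymbol{\beta}_0))$, encoding the semiparametric efficiency of the profile likelihood, delivers the conclusion through the Cram\'er--Wald device with $\boldsymbol{b}_n$, the limiting covariance $(\textbf{I}^{(1)}(\boldsymbol{\beta}_0))^{-1}$ being the semiparametric information bound.

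The hard part will be establishing part~(i) uniformly over the $p_n-q_n$ diverging zero coordinates while simultaneously controlling the cross-block Schur complement $\boldsymbol{\Omega}_n^{(2\cdot1)}$: the fixed-$p$ BAR literature argues one coordinate at a time, whereas here the rate conditions of (C6) must be shown to make the quadratic self-squaring overpower both the score magnitude and the growing dimension at once. Making $\mathcal{N}_n$ invariant under $\Psi$ and verifying that the profile quantities $\boldsymbol{\Omega}_n$ and $\boldsymbol{v}_n$ inherit the required smoothness and information-bound limits from the underlying NPMLE of the nuisance $\boldsymbol{\Lambda}$ (so that $\dot{\ell}_p$ genuinely behaves like an efficient parametric score) is the step that sharpens the existing arguments and where the bookkeeping is heaviest.
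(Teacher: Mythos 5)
Your outline follows essentially the same route as the paper's proof: the BAR iteration is viewed as the map $J(\boldsymbol{\beta})=\{\boldsymbol{\Omega}_n(\boldsymbol{\beta})+\tau_n\boldsymbol{D}(\boldsymbol{\beta})\}^{-1}\boldsymbol{v}_n(\boldsymbol{\beta})$ trapped in a shrinking set (your $\mathcal{N}_n$ is the paper's $H_n$, entered via (C8)); the zero block contracts per iteration (the paper's Lemma~\ref{lemma1}(ii); your factor $p_n/\tau_n\to 0$ does follow from (C6), since $\tau_n^2\gg p_n\sqrt{n}$ and $p_n\ll\sqrt{n}$ give $\tau_n\gg p_n$); the nonzero block is handled by a contraction/Banach argument whose derivative control needs exactly the matrix-calculus identity and third-derivative envelope (C9) that the paper isolates in Lemmas~\ref{lemma2}--\ref{lemma3}; and for (iii), your penalized-score identity $\dot{\ell}_p^{(1)}(\widehat{\boldsymbol{\beta}}^*_{s1})=\tau_n\boldsymbol{D}(\widehat{\boldsymbol{\beta}}^*_{s1})\widehat{\boldsymbol{\beta}}^*_{s1}$ is a slightly cleaner algebraic route to the paper's $\pi_1+\pi_2$ decomposition, producing the same $O_p(\tau_n\sqrt{q_n/n})=o_p(1)$ bias bound and the same efficient-score CLT via Cram\'er--Wold.

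There is, however, one genuine gap. You write that ``the limit $\widehat{\boldsymbol{\beta}}^*$ is a fixed point of $\Psi$,'' and in part (ii) you conclude from Banach's theorem applied to the \emph{restricted} map ($\boldsymbol{\beta}_{s2}=\boldsymbol{0}$) that $\widehat{\boldsymbol{\beta}}^*_{s1}$ is its unique fixed point. But $\Psi$ is not defined, let alone continuous, at the limit: the weights $\widehat{\beta}_a^{-2}$ on the $s2$-block diverge as $\widehat{\boldsymbol{\beta}}^{(m)}_{s2}\to\boldsymbol{0}$, so you cannot pass to the limit in $\widehat{\boldsymbol{\beta}}^{(m+1)}=\Psi(\widehat{\boldsymbol{\beta}}^{(m)})$ to identify the limit of the \emph{unrestricted} iteration's $s1$-block with the fixed point $\widehat{\boldsymbol{\alpha}}^*$ of the restricted map $f$. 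This identification is exactly what the paper's proof of part (ii) supplies: it first shows $\lim_{\boldsymbol{\beta}_{s2}\to 0}\boldsymbol{\gamma}^*(\boldsymbol{\beta})=\boldsymbol{0}$ and $\lim_{\boldsymbol{\beta}_{s2}\to 0}\boldsymbol{\alpha}^*(\boldsymbol{\beta})=f(\boldsymbol{\beta}_{s1})$, upgrades this to uniform convergence $\eta_m=\sup_{\boldsymbol{\beta}_{s1}\in H_{n1}}\lVert\boldsymbol{\alpha}^*(\boldsymbol{\beta}_{s1},\widehat{\boldsymbol{\beta}}^{(m)}_{s2})-f(\boldsymbol{\beta}_{s1})\rVert\to 0$, and then runs the perturbed-contraction recursion $h_{m+1}\le \eta_m+c_3^{-1}h_m$ (with $\sup\lVert\dot f\rVert\le 1/c_3<1$) to conclude $\lVert\widehat{\boldsymbol{\beta}}^{(m)}_{s1}-\widehat{\boldsymbol{\alpha}}^*\rVert\to 0$. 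Without this step your argument establishes only that $f$ has a unique fixed point in $H_{n1}$, not that the actual BAR limit equals it; since part (iii) expands around $\widehat{\boldsymbol{\alpha}}^*$, the normality claim would be left hanging as well.

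A secondary, repairable looseness: the ``$\approx$'' in your Schur-complement display for part (i) should be made rigorous by a Neumann-series bound, which works inside $\mathcal{N}_n$ because $\lambda_{\min}(\tau_n\boldsymbol{D}^{(2)})\ge \tau_n n/(c^2p_n)\gg n\gtrsim\lVert\boldsymbol{\Omega}^{(2)}_n\rVert$ by (C5)--(C6); this is the analogue of the paper's bound on $\boldsymbol{m}_{\boldsymbol{\gamma}^*(\boldsymbol{\beta})}/\boldsymbol{\beta}_{s2}$ in Lemma~\ref{lemma1}, and once done your per-iteration contraction of the $s2$-block matches the paper's geometric-decay argument in the proof of part (i).
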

\section{Simulation Study}
\label{sim-project1}
To assess the theoretical results of the proposed method, we conduct simulation studies and report the oracle results comparing them with the LASSO, ALASSO, and BAR. 
We consider the logarithmic transformation functions $$G_1(x)=\frac{1}{r_1}\log (1+r_1 x)$$ and
$$G_2(x)=\frac{1}{r_2}\log (1+r_2 x),$$ with $r_1$, and $r_2$ representing the transformation parameters for each risk assuming that we have two risks in the competing risks data. The model's transformation parameters $(r_1,r_2)$ are set as $(0,0)$, $(0.5,0.5)$, and $(1,1)$. We set the cumulative hazard functions corresponding to two risks as $\Lambda_1(t)=\Lambda_2(t)=0.2(1-e^{-t})$, and employ the inverse probability method to generate a time to event variable,
\begin{eqnarray*}
T_k=-\log\Bigg(1-\frac{G_k^{-1}(-\log (1-p_kV))}{0.2e^{\boldsymbol{\beta}_k^\top \boldsymbol{Z}}} \Bigg),
\end{eqnarray*}
where $p_k=1-\exp[-G_k(0.2 e^{\boldsymbol{\beta_k}^\top\boldsymbol{Z}})],~k=1,2$ is the probability by which we generate status 1 and 2 for competing events, and $V\sim \text{Uniform}(0,1)$. The associated covariates are marginally standard normal with mean zero and a variance-covariance matrix where the value of each element at the $(i,j)$th position is $\rho^{|i-j|}$.
Two examination times are generated for interval censoring. The first examination time is generated from $U_1 \sim \text{Uniform}(0.1, 1.5)$ and the second examination time from 
$U_2=U_1+dU$, where $dU \sim \text{Uniform}(0.1, 1.6)$,  $U_1$ and $dU$ are independent.

Four different situations are considered to test the performance of our proposed method. First, we generate $n=200$ subjects using the true parameter values of $\boldsymbol{\beta}_0=(\boldsymbol{\beta}_{01}^\top,\boldsymbol{\beta}_{02}^\top)^\top$, $\boldsymbol{\beta}_{01}=(0.8,0.6,0.8,\boldsymbol{0}_{d_n-3}^\top)^\top$, $\boldsymbol{\beta}_{02}=-\boldsymbol{\beta}_{01}$ where $d_n=14$ (i.e., $p_n=28$) and $\rho=0.2$ to test the performance of our proposed method under a weak correlation among the covariates. We repeat the same experiment with $\rho=0.8$ to evaluate how well our proposed approach performs when there is a strong correlation among the covariates. We repeat the simulation $100$ times, the results of these two scenarios are reported in \autoref{tab:sim1proj1}.  In addition, we consider another setting where we increase the sample size to $400$ and increase $p_n$ to $56$ (i.e., $d_n=28$ for each risk) with both weak and strong correlation among the covariates. The results of this setting are presented in \autoref{tab:sim2proj1}.  

We evaluate the performance of the model using several criteria, including the average number of nonzero estimates of parameters with true nonzero values, referred to as true positives (TP), the average number of nonzero estimates of parameters with true zero values, known as false positives (FP), and the average number of misclassified variables (MCV). Additionally, we report the Median of mean squared errors (MMSE). $k$th MSE corresponds to the $k$th risk and is calculated as $(\widehat{\boldsymbol{\beta}}_k-\boldsymbol{\beta}_{0k})^\top\boldsymbol{\Sigma}_k(\widehat{\boldsymbol{\beta}}_k-\boldsymbol{\beta}_{0k})$, where $\boldsymbol{\Sigma}_k$ is the population covariance matrix of the covariates. Eventually, the final MSE is the sum of the MSE's for $k=1,\ldots,K$. $\widehat{\boldsymbol{\beta}}_k$ represents the penalized estimate of the regression parameters for the $k$th risk. In addition to MMSE, its standard deviation (SD) is also recorded. As shown in \autoref{tab:sim1proj1}, BAR has achieved a smaller MCV in most situations with different transformation parameters and correlation values among the covariates. It can also be observed that MMSE is smaller in BAR results although ALASSO can be considered as a competing method in terms of its general performance judged by MCV and MMSE.
\begin{table}[H]
\caption{Results of simultaneous estimation and variable selection with three sets of transformation parameters, $(0,0)$, $(0.5,0.5)$, and $(1,1)$. In this table, we assume that $n=200$, and $d_n=14$ corresponding to each risk (i.e., $p_n=28$). Data are generated from two scenarios. $\rho=0.2$ for weak correlation among covariates and $\rho=0.8$ for strong correlation among them.}
\label{tab:sim1proj1}
\resizebox{\textwidth}{!}{%
\begin{tabular}{ccccccccccccccccccc}
\hline
Penalty &  & $(r_1,r_2)$                &  & TP   &  & FP   &  & MCV  &  & MMSE (SD)      &  & TP   &  & FP   &  & MCV  &  & MMSE (SD)     \\ \hline
        &  &                            &  & \multicolumn{15}{c}{$n=200, p_n=2\times 14, q_n=6$}                                                  \\ \cline{5-19} 
        &  &                            &  & \multicolumn{7}{c}{$\rho=0.2$}               &  & \multicolumn{7}{c}{$\rho=0.8$}              \\ \cline{5-11} \cline{13-19} 
\multicolumn{19}{c}{}                                                                                                                      \\
LASSO   &  & \multirow{4}{*}{(0,0)}     &  & 5.84 &  & 1.72 &  & 1.88 &  & 0.949 (0.261)  &  &  4.21  &  &   2.10   &  &  3.89    &  & 2.541 (0.671)               \\
ALASSO  &  &                            &  & 5.54 &  & 0.90 &  & 1.36 &  & 0.858 (0.295)  &  & 3.66 &  & 0.62 &  & 2.96 &  & 1.889 (0.490) \\
BAR     &  &                            &  & 4.80 &  & 0.16 &  & 1.36 &  & 0.843 (0.306)  &  & 3.54 &  & 0.42 &  & 2.88 &  & 1.572 (0.538) \\
Oracle  &  &                            &  & 6.00 &  & 0.00 &  & 0.00 &  & 0.838 (0.232)  &  & 6.00 &  & 0.00 &  & 0.00 &  & 1.237 (0.345) \\
        &  &                            &  &      &  &      &  &      &  &                &  &      &  &      &  &      &  &               \\
LASSO   &  & \multirow{4}{*}{(0.5,0.5)} &  & 5.80 &  & 1.70 &  & 1.90 &  & 1.0556 (0.302) &  &   4.00   &  &  2.12    &  &  4.12    &  &    2.617 (0.585)           \\
ALASSO  &  &                            &  & 5.48 &  & 1.14 &  & 1.66 &  & 0.896 (0.314)  &  & 3.40 &  & 0.40 &  & 3.00 &  & 1.793 (0.431) \\
BAR     &  &                            &  & 4.72 &  & 0.18 &  & 1.46 &  & 0.859 (0.385)  &  & 3.16 &  & 0.12 &  & 2.96 &  & 1.792 (0.619) \\
Oracle  &  &                            &  & 6.00 &  & 0.00 &  & 0.00 &  & 0.849 (0.356)  &  & 6.00 &  & 0.00 &  & 0.00 &  & 1.390 (0.421) \\
        &  &                            &  &      &  &      &  &      &  &                &  &      &  &      &  &      &  &               \\
LASSO   &  & \multirow{4}{*}{(1,1)}     &  & 5.46 &  & 1.56 &  & 2.10 &  & 1.301 (0.334)  &  &    3.75  &  &  2.14    &  &   4.39   &  &     2.924 (0.479)          \\
ALASSO  &  &                            &  & 4.76 &  & 0.54 &  & 1.78 &  & 1.534 (0.403)  &  & 3.16 &  & 0.72 &  & 3.56 &  & 1.945 (0.432) \\
BAR     &  &                            &  & 4.70 &  & 0.68 &  & 1.98 &  & 1.081 (0.532)  &  & 3.16 &  & 0.62 &  & 3.46 &  & 1.748 (0.558) \\
Oracle  &  &                            &  & 6.00 &  & 0.00 &  & 0.00 &  & 0.914 (0.421)  &  & 6.00 &  & 0.00 &  & 0.00 &  & 1.470 (0.437) \\ \hline
\end{tabular}%
}
\end{table}
\noindent Consistent with expectations, LASSO performs well in terms of True Positive (TP) values, making it a practical method for detecting non-zero variables. However, when considering MCV and MMSE, BAR and ALASSO outperform LASSO. The BAR method is highly conservative, resulting in low False Positive (FP) values. This indicates that it is a reliable method for ensuring that the important variables in a model are correctly identified during the variable selection procedure.
\begin{table}[H]
\caption{Results of simultaneous estimation and variable selection with three sets of transformation parameters, $(0,0)$, $(0.5,0.5)$, and $(1,1)$. In this table, we assume that $n=400$, and $d_n=28$ corresponding to each risk (i.e., $p_n=56$). Data are generated from two scenarios. $\rho=0.2$ for weak correlation among covariates and $\rho=0.8$ for strong correlation among them.}
\label{tab:sim2proj1}
\resizebox{\textwidth}{!}{%
\begin{tabular}{ccccccccccccccccccc}
\hline
Penalty &  & $(r_1,r_2)$                &  & TP   &  & FP   &  & MCV  &  & MMSE (SD)      &  & TP   &  & FP   &  & MCV  &  & MMSE (SD)     \\ \hline
        &  &                            &  & \multicolumn{15}{c}{$n=400, p_n=2\times 28, q_n=6$}                                                  \\ \cline{5-19} 
        &  &                            &  & \multicolumn{7}{c}{$\rho=0.2$}               &  & \multicolumn{7}{c}{$\rho=0.8$}              \\ \cline{5-11} \cline{13-19} 
\multicolumn{19}{c}{}                                                                                                                      \\
LASSO   &  & \multirow{4}{*}{(0,0)}     &  & 5.80 &  & 1.53 &  & 1.73 &  & 0.894 (0.215)  &  &   4.56   &  &   1.96   &  &   3.40   &  &      2.124 (0.482)         \\
ALASSO  &  &                            &  & 5.90 &  & 0.46 &  & 0.56 &  & 0.638 (0.142)  &  & 4.44 &  & 1.10 &  & 2.66 &  & 1.340 (0.387) \\
BAR     &  &                            &  & 5.62 &  & 0.02 &  & 0.40 &  & 0.562 (0.165)  &  & 3.62 &  & 0.04 &  & 2.42 &  & 1.310 (0.533) \\
Oracle  &  &                            &  & 6.00 &  & 0.00 &  & 0.00 &  & 0.515 (0.101)  &  & 6.00 &  & 0.00 &  & 0.00 &  & 0.993 (0.326) \\
        &  &                            &  &      &  &      &  &      &  &                &  &      &  &      &  &      &  &               \\
LASSO   &  & \multirow{4}{*}{(0.5,0.5)} &  & 5.66 &  & 1.47 &  & 1.81 &  & 1.024 (0.302) &  &  4.52    &  &   2.23   &  &   3.71   &  &       2.321 (0.510)        \\
ALASSO  &  &                            &  & 5.92 &  & 0.44 &  & 0.52 &  & 0.708 (0.193)  &  & 4.38 &  & 1.04 &  & 2.66 &  & 1.524 (0.372) \\
BAR     &  &                            &  & 5.58 &  & 0.02 &  & 0.44 &  & 0.613 (0.063)  &  & 3.44 &  & 0.06 &  & 2.62 &  & 1.512 (0.579) \\
Oracle  &  &                            &  & 6.00 &  & 0.00 &  & 0.00 &  & 0.570 (0.084)  &  & 6.00 &  & 0.00 &  & 0.00 &  & 0.997 (0.398) \\
        &  &                            &  &      &  &      &  &      &  &                &  &      &  &      &  &      &  &               \\
LASSO   &  & \multirow{4}{*}{(1,1)}     &  & 5.48 &  & 1.39 &  & 1.91 &  & 1.287 (0.354)  &  &   4.41   &  &    2.71  &  &   4.30   &  &      2.547 (0.507)         \\
ALASSO  &  &                            &  & 5.82 &  & 0.78 &  & 0.96 &  & 0.865 (0.204)  &  & 4.18 &  & 1.38 &  & 3.20 &  & 1.620 (0.368) \\
BAR     &  &                            &  & 5.36 &  & 0.06 &  & 0.70 &  & 0.756 (0.251)  &  & 3.42 &  & 0.04 &  & 2.62 &  & 1.555 (0.556) \\
Oracle  &  &                            &  & 6.00 &  & 0.00 &  & 0.00 &  & 0.612 (0.114)  &  & 6.00 &  & 0.00 &  & 0.00 &  & 1.037 (0.401) \\ \hline
\end{tabular}%
}
\end{table}
\noindent A similar pattern among three penalty functions, LASSO, ALASSO, and BAR can be observed in \autoref{tab:sim2proj1}. However, it can be seen that the performance of all the methods improves with an increase in the sample size although the number of variables has also increased considerably. 
\section{Real Data Analysis}
\label{realdata-proj1} 
In this section, a sample of 1119 injecting drug users in a cohort study carried out by the Bangkok Metropolitan Administration (BMA) is used to illustrate the proposed variable selection method on competing risks data.
This study was started in 1995 aiming to investigate the feasibility of conducting phase 3 of the vaccine trial in the injecting drug user population in Bangkok, Thailand. This study originally had two goals: first, to assess the rate of complete follow-up cases in the study, and second, to find more effective HIV prevention measures by determining the important risk factors. 
All subjects in the study were HIV seronegative injecting drug users,
The subjects were followed from 1995 to 1998 at 15 BMA drug treatment clinics. Blood tests were conducted on each participant approximately every 4 months after recruitment to detect evidence of HIV-1 seroconversion (i.e., the detection of HIV-1 antibodies in the serum). 
The blood tests were examined to detect HIV antibodies and to determine if the seroconversions were of viral subtype B or subtype E. Among 117 subjects for whom  seroconversion was observed, there are 6 subjects with unknown viral subtypes or missing cause of failure, 24 subjects with viral subtype B, and 87 subjects with viral subtype E. 
\autoref{tab:chap2-dic} provides a dictionary of the covariates and other variables observed in this data set. In \autoref{tab:chap2-dic}, $Z_l$, $l=1,\ldots,d_n$, $d_n=8$ corresponds to each of the covariates (i.e., risk factors/variables). 

We treat this data set as interval-censored competing risks data and consider subtypes B and E as two competing risks while allowing for the missing cause of failure for the event of interest (HIV seroconversion).
To select the optimal model that $r_1$ and $r_2$ produce, we implement the EM algorithm and compute the log-likelihood value with different transformation parameters.
\begin{table}[H]
\caption{Description of the BMA data. Con(Cat), TI, and TV represent continuous(categorical), time-invariant, and time-varying covariates, respectively.}
\label{tab:chap2-dic}
\resizebox{\columnwidth}{!}{%
\begin{tabular}{lllll}
\hline
\begin{tabular}[c]{@{}l@{}}Variables in the \\ observed data\end{tabular} &
   &
  Type &
   &
  Description \\ \hline
$(J, U)$ &
   &
  - &
   &
  \begin{tabular}[c]{@{}l@{}}Examination times revealing the total number \\ of visit times ($J$) and date\\ of visits for each subject ($U$)\end{tabular} \\ \cline{1-1} \cline{3-3} \cline{5-5} 
$(\Delta, D, \xi)$ &
   &
  - &
   &
  \begin{tabular}[c]{@{}l@{}}Status that reveals if the subject is/is not infected\\ ($\Delta$), virus subtype of B or E \\ ($D$), and if the cause of failure is \\ missing ($\xi$)\end{tabular} \\ \cline{1-1} \cline{3-3} \cline{5-5} 
 &
   &
   &
   &
   \\
$Z_1$: Age &
   &
  Con-TI &
   &
  Age (in years) at registration \\
$Z_2$: First Age &
   &
  Con-TI &
   &
  Age at the first time using drugs (in years) \\
$Z_3$: Gender &
   &
  Cat-TI &
   &
  Gender (0: male, 1: female) \\
$Z_4$: Needle &
   &
  Cat-TV &
   &
  History of needle sharing (0: no, 1: yes) \\
$Z_5$: Jail &
   &
  Cat-TV &
   &
  \begin{tabular}[c]{@{}l@{}}Number of times imprisoned since last seen \\ (0: none, 1: one or more than one)\end{tabular} \\
$Z_6$: Income &
   &
  Cat-TI &
   &
  \begin{tabular}[c]{@{}l@{}}Monthly income (0: less than or equal to 5000 \\ baht, 1: more than 5000 baht)\end{tabular} \\
$Z_7$: Syringe &
   &
  Cat-TV &
   &
  \begin{tabular}[c]{@{}l@{}}History of injecting drug while being in prison \\ (0: no, 1: yes)\end{tabular} \\
$Z_8$: Inject Freq &
   &
  Cat-TV &
   &
  \begin{tabular}[c]{@{}l@{}}Frequency of injecting drugs (0: none, 1: at least \\ one time)\end{tabular} \\ \hline
\end{tabular}%
}
\end{table}
\begin{table}[H]
\caption{Unpenalized analysis on the BMA data with selected transformation parameters, $r_1=0.6$ and $r_2=1.8$, considering the competing events of interest, subtype B and subtype E.}
\label{selectedmodel}
\resizebox{\textwidth}{!}{%
\begin{tabular}{cclccclccc}
\hline
 &  &  & \multicolumn{3}{c}{Subtype B} &  & \multicolumn{3}{c}{Subtype E} \\ \cline{4-6} \cline{8-10} 
 & \begin{tabular}[c]{@{}c@{}}Variables\end{tabular} &  & Estimate & Std. Error & $p$-value &  & Estimate & Std. Error & $p$-value \\ \hline
 &  &  &  &  &  &  & \multicolumn{1}{l}{} & \multicolumn{1}{l}{} & \multicolumn{1}{l}{}\\
\multirow{8}{*}{$r_1=0.6, r_2=1.8$} & Age &  & 0.006 & 0.344  & 0.999 &  & 0.084 & 0.183& 0.617 \\
 & First Age &  & -0.016 & 0.260  & 0.959 &  & 0.068 & 0.128 & 0.553 \\
 & Gender &  & 1.131 & 0.519 & 0.025 &  & -1.176 & 0.706 & 0.091 \\
 & Needle &  & 0.143 & 0.566  & 0.798 &  & 0.372 & 0.307 & 0.240 \\
 & Jail &  & 0.295 & 0.505 & 0.555 &  & 0.208 & 0.290 & 0.459 \\
 & Income &  & -0.262 & 0.661 & 0.710 &  & -0.120 & 0.359 & 0.629 \\
 & Syringe &  & 0.044 & 0.436 & 0.999 &  & 0.316 & 0.247 & 0.193 \\
 & Inject Freq &  & 0.306 & 1.164 & 0.441 &  & 0.123 & 0.555 & 0.777 \\
 &  &  &  &  &  &  & \multicolumn{1}{l}{} & \multicolumn{1}{l}{} & \multicolumn{1}{l}{} \\ \hline
\multicolumn{1}{l}{} &  &  &  &  &  &  & \multicolumn{1}{l}{} & \multicolumn{1}{l}{} & \multicolumn{1}{l}{}
\end{tabular}%
}
\end{table}
\noindent We use a grid of $r_1, r_2$ over the range of $(0,3]$ with increments of $0.2$ and select the parameters that maximize the log-likelihood function, we obtain the $r_1=0.6$, $r_2=1.8$. The unpenalized results of the selected model is represented in \autoref{selectedmodel}. Then, based on the selected model, we perform variable selection using three penalty functions, LASSO, ALASSO, and BAR. We select the tuning parameter using the GCV criterion presented in \Cref{subsubproj12}. 

Based on the results in \autoref{BMAvarselresult}, it can be seen that all three penalty functions select gender to be an important variable in the model for both competing events, subtype B and subtype E. Also, LASSO, ALASSO, and BAR agree on selecting the variable syringe for  the second competing event. While LASSO and ALASSO select gender, needle, and jail for the second competing risk, subtype E, BAR shrinks these variables to zero and as it is expected, BAR produces the most sparse model among these three penalty functions. 
\begin{table}[H]
\caption{Variable selection result on the BMA data employing the selected set of transformation parameters for subtypes B and E.}
\label{BMAvarselresult}
\resizebox{\textwidth}{!}{%
\begin{tabular}{cclccclccc}
\hline
 &  &  & \multicolumn{3}{c}{Subtype B} &  & \multicolumn{3}{c}{Subtype E} \\ \cline{4-6} \cline{8-10} 
 & \begin{tabular}[c]{@{}c@{}}Variables\end{tabular} &  & LASSO & ALASSO & BAR &  & LASSO & ALASSO & BAR \\ \hline
 &  &  &  &  &  &  & \multicolumn{1}{l}{} & \multicolumn{1}{l}{} & \multicolumn{1}{l}{}\\
\multirow{8}{*}{$r_1=0.6, r_2=1.8$} & Age &  & -0.053 & 0  & 0 &  & 0 & 0& 0 \\
 & First Age &  & 0 & 0  & 0 &  & 0.078 & 0 & 0 \\
 & Gender &  & 0.310 & 0.759 & 0.957 &  & -0.238 & -0.567 & -0.770 \\
 & Needle &  & 0 & 0  & 0 &  & 0.151 & 0.307 & 0 \\
 & Jail &  & 0 & 0 & 0&  & 0.061 & 0.083 & 0 \\
 & Income &  & 0 & 0 & 0 &  & 0 & 0 & 0 \\
 & Syringe &  & 0 & 0 & 0 &  & 0.172 & 0.139 & 0.248 \\
 & Inject Freq &  & 0 & 0 & 0 &  & 0 & 0 & 0 \\
 &  &  &  &  &  &  & \multicolumn{1}{l}{} & \multicolumn{1}{l}{} & \multicolumn{1}{l}{} \\ \hline
\multicolumn{1}{l}{} &  &  &  &  &  &  & \multicolumn{1}{l}{} & \multicolumn{1}{l}{} & \multicolumn{1}{l}{}
\end{tabular}%
}
\end{table} 
\section{Discussion and Concluding Remarks}\label{discussionproj1}
We have considered interval-censored competing risks data and proposed a penalized variable selection technique to estimate and select variables, simultaneously under a semiparametric transformation model. The semiparametric transformation model is a general term referring to a class of models that encompasses some special types including the proportional hazards and proportional odds models. Employing this model makes our model more flexible to be able to take different forms. We employed the broken adaptive ridge regression method for variable selection and proposed an iteratively reweighted least square algorithm to approximate the likelihood function as a least square problem along with an optimization procedure to solve the variable selection problem using LASSO, Adaptive LASSO, and BAR. Unlike the works published in the literature, we took all the risks in a competing risks data set into consideration. Despite the Fine-Gray model, our approach allows for the assessment of the variables corresponding to all the risks or submodels and therefore, there is no need for determining the censoring distribution like in the Fine-Gray model. We established the oracle properties of the BAR estimator for interval-censored competing risks data. We improved the existing techniques in the proofs and obtained a semiparametric information bound for the sparse estimator of the true model parameters. Our numerical results demonstrate that the proposed methods outperform existing competitors.



\section*{Author contributions}
All the authors contributed equally.

\section*{Acknowledgments}
Omitted for peer review.

\section*{Financial disclosure}
Omitted for peer review.

\section*{Conflict of interest}

The authors declare that they have no conflict of interest.

\bibliographystyle{apalike}    

\pagebreak
\section*{Appendix: Proofs of the Asymptotic Properties in Theorem~\ref{theorem1}}
\label{chap3sec9}
\medskip
\renewcommand{\theequation}{A.\arabic{equation}}
\setcounter{equation}{0}
Suppose that the log-likelihood of our model is $\ell_n(\boldsymbol{\beta},\boldsymbol{\Lambda}) = \log \mathcal{L}_n(\boldsymbol{\beta}, \boldsymbol{\Lambda})$ defined in \eqref{loglikforproof}. Let $(\widetilde{\boldsymbol{\beta}}, \widetilde{\boldsymbol{\Lambda}})$ be the unpenalized estimates of $(\boldsymbol{\beta}, \boldsymbol{\Lambda})$ obtained by using the semiparametric or parametric methods.

The total number of variables denoted by $p_n$ is considered as diverging, i.e., $p_n \longrightarrow \infty$ and $q_n \longrightarrow \infty$, when $n \longrightarrow \infty$, but $p_n$ and $q_n$ satisfy condition (C6). 

We assume $\boldsymbol{\beta} = (\boldsymbol{\beta}^\top_{s1}, \boldsymbol{\beta}^\top_{s2})^\top$, and the corresponding true value 
$\boldsymbol{\beta}_0= (\boldsymbol{\beta}^\top_{0s1}, \boldsymbol{\beta}^\top_{0s2})^\top$, 
where \begin{eqnarray*}
\boldsymbol{\beta}_{s1}=(\beta_{s1,1},\ldots,\beta_{s1,q_n})^\top 
\end{eqnarray*} 
consists of all nonzero coefficients (which is a $q_n$-vector of parameters) across all the three transitions and \begin{eqnarray*}
\boldsymbol{\beta}_{s2}=(\beta_{s2,q_n+1},\ldots,\beta_{s2,p_n})^\top   
\end{eqnarray*} 
is a $(p_n - q_n)$-vector of parameters, consisting of all zero coefficients. Vector of $\boldsymbol{\beta}=(\beta_{1},\ldots,\beta_{p_n})^\top$ represents all the parameters in the model (containing both non-zero and zero ones). For the simultaneous estimation and variable selection, we consider the penalized function
\begin{eqnarray*}
    \ell_{pp}(\boldsymbol{\beta})
=-\ell_p(\boldsymbol{\beta})+\sum_{k=1}^{K}\sum_{j=1}^{d_k} p_{\lambda_n}(\boldsymbol{\beta}_{j,k}),
\end{eqnarray*}
where $\ell_p(\boldsymbol{\beta}) =\max_{\boldsymbol{\Lambda}}\ell_n(\boldsymbol{\beta},\boldsymbol{\Lambda})$.

Utilizing BAR penalty function, we have 
\begin{eqnarray}
\label{obj-proofs}
\ell_{pp}(\boldsymbol{\beta} | \check{\boldsymbol{\beta}}) =-\ell_p(\boldsymbol{\beta})
+\lambda_n\sum_{k=1}^{3}\sum_{j=1}^{d_k} \frac{ \beta_{j,k}^2}{\check{\beta}_{j,k}^2}= - \ell_p(\boldsymbol{\beta}) + \lambda_n \sum^{p_n}_{j=1} \frac{\beta^2_j}{\check{\beta}^2_j}.
\end{eqnarray}

To establish the asymptotic properties, first, we show that minimizing \eqref{obj-proofs} is asymptotically equivalent to minimizing the following penalized least-squared function
\begin{eqnarray*}
\frac{1}{2}\left\lVert \textbf{W}(\check{\boldsymbol{\beta}}) - \textbf{X} (\check{\boldsymbol{\beta}})\boldsymbol{\beta} \right\rVert^2 + \lambda_n \sum^{p_n}_{j=1} \frac{\beta^2_j}{\check{\beta}^2_j}, 
\end{eqnarray*}
using Cholesky decomposition. 

Since $(\widetilde{\boldsymbol{\beta}},\widetilde{\boldsymbol{\Lambda}}) = \max_{(\boldsymbol{\beta},\boldsymbol{\Lambda})} \ell_n(\boldsymbol{\beta},\boldsymbol{\Lambda})$, 
\begin{eqnarray*}
\widetilde{\boldsymbol{\beta}} = \max_{\boldsymbol{\beta}} \ell_n(\boldsymbol{\beta}, \widetilde{\boldsymbol{\Lambda}}) = \max_{\boldsymbol{\beta}} \ell_n(\boldsymbol{\beta}\vert \widetilde{\boldsymbol{\Lambda}}),   
\end{eqnarray*}
where   
$\ell_n(\boldsymbol{\beta} | \widetilde{\boldsymbol{\Lambda}}) = \log\{ \mathcal{L}_n (\boldsymbol{\beta},\widetilde{\boldsymbol{\Lambda}})\}$ and $\ell_n(\boldsymbol{\beta}|\boldsymbol{\Lambda}) = \ell_n(\boldsymbol{\beta},\boldsymbol{\Lambda})$.

Define $\dot{\ell}_n(\boldsymbol{\beta}|\boldsymbol{\Lambda}) = \partial \ell_n(\boldsymbol{\beta}|\boldsymbol{\Lambda}) / \partial \boldsymbol{\beta}$, and $\Ddot{\ell}_n(\boldsymbol{\beta}|\boldsymbol{\Lambda}) = \partial^2 \ell_n(\boldsymbol{\beta}|\boldsymbol{\Lambda})/\partial \boldsymbol{\beta} \partial \boldsymbol{\beta}^\top$. Then, $(\widetilde{\boldsymbol{\beta}}, \widetilde{\boldsymbol{\Lambda}})$ satisfies $\dot{\ell}_n(\widetilde{\boldsymbol{\beta}}|\widetilde{\boldsymbol{\Lambda}}) = 0$. 

By the first-order Taylor expansion, we have
\begin{eqnarray*}
0 = \dot{\ell}_n(\widetilde{\boldsymbol{\beta}}|\widetilde{\boldsymbol{\Lambda}}) \approx \dot{\ell}_n(\boldsymbol{\beta}|\widetilde{\boldsymbol{\Lambda}}) + \Ddot{\ell}_n(\boldsymbol{\beta} | \widetilde{\boldsymbol{\Lambda}})
(\widetilde{\boldsymbol{\beta}} - \boldsymbol{\beta}),  
\end{eqnarray*}
which yields
\begin{eqnarray*}
\widetilde{\boldsymbol{\beta}} - \boldsymbol{\beta} \approx [- \Ddot{\ell}_n(\boldsymbol{\beta} | \widetilde{\boldsymbol{\Lambda}})]^{-1}\dot{\ell}_n(\boldsymbol{\beta}|\widetilde{\boldsymbol{\Lambda}}).   
\end{eqnarray*}
On the other hand, by the second-order Taylor expansion,
\begin{eqnarray*}
\ell_n(\widetilde{\boldsymbol{\beta}}|\widetilde{\boldsymbol{\Lambda}}) \approx \ell_n(\boldsymbol{\beta}|\widetilde{\boldsymbol{\Lambda}}) + [\Dot{\ell}_n(\boldsymbol{\beta}|\widetilde{\boldsymbol{\Lambda}})]^\top (\widetilde{\boldsymbol{\beta}} - \boldsymbol{\beta}) + (\widetilde{\boldsymbol{\beta}} - \boldsymbol{\beta})^\top \frac{\Ddot{\ell}_n(\boldsymbol{\beta} | \widetilde{\boldsymbol{\Lambda}})}{2} (\widetilde{\boldsymbol{\beta}} - \boldsymbol{\beta}).
\end{eqnarray*}
Thus we have
\begin{equation*}
    \begin{split}
        \ell_p(\boldsymbol{\beta}) = \ell_n(\boldsymbol{\beta} | \widetilde{\boldsymbol{\Lambda}})
        & \approx \ell_n(\widetilde{\boldsymbol{\beta}} | \widetilde{\boldsymbol{\Lambda}}) - [\Dot{\ell}_n(\boldsymbol{\beta} | \widetilde{\boldsymbol{\Lambda}})]^\top [- \Ddot{\ell}_n (\boldsymbol{\beta} | \widetilde{\boldsymbol{\Lambda}})]^{-1} \Dot{\ell}_n(\boldsymbol{\beta} | \widetilde{\boldsymbol{\Lambda}}) \\ 
        & + \frac{1}{2} [\Dot{\ell}_n(\boldsymbol{\beta} | \widetilde{\boldsymbol{\Lambda}})]^\top [- \Ddot{\ell}_n (\boldsymbol{\beta} | \widetilde{\boldsymbol{\Lambda}})]^{-1} [- \Ddot{\ell}_n (\boldsymbol{\beta} | \widetilde{\boldsymbol{\Lambda}})] [- \Ddot{\ell}_n (\boldsymbol{\beta} | \widetilde{\boldsymbol{\Lambda}})]^{-1} \Dot{\ell}_n(\boldsymbol{\beta} | \widetilde{\boldsymbol{\Lambda}}).
    \end{split}
\end{equation*}
Hence
\begin{eqnarray*}
    \ell_p(\boldsymbol{\beta})= -\frac{1}{2} [\Dot{\ell}_n(\boldsymbol{\beta} | \widetilde{\boldsymbol{\Lambda}})]^\top [- \Ddot{\ell}_n (\boldsymbol{\beta} | \widetilde{\boldsymbol{\Lambda}})]^{-1} \Dot{\ell}_n(\boldsymbol{\beta} | \widetilde{\boldsymbol{\Lambda}}) + C,
\end{eqnarray*}
where $C = \ell_n(\widetilde{\boldsymbol{\beta}} | \widetilde{\boldsymbol{\Lambda}})$ is a constant independent of $\boldsymbol{\beta}$. Therefore, maximizing $\ell_p(\boldsymbol{\beta})$ is equivalent to minimizing
\begin{eqnarray*}
\ell_p(\boldsymbol{\beta}) =\frac{1}{2} [\Dot{\ell}_n(\boldsymbol{\beta} | \widetilde{\boldsymbol{\Lambda}})]^\top [- \Ddot{\ell}_n (\boldsymbol{\beta} | \widetilde{\boldsymbol{\Lambda}})]^{-1} \Dot{\ell}_n(\boldsymbol{\beta} | \widetilde{\boldsymbol{\Lambda}}).
\end{eqnarray*}
Next, we show that $-\ell_p(\boldsymbol{\beta}) = \frac{1}{2}\left\lVert \textbf{W}(\boldsymbol{\beta}) - \textbf{X}(\boldsymbol{\beta}) \boldsymbol{\beta} \right\rVert^2$ by the Cholesky decomposition.

Let $\textbf{X}$ be defined by the Cholesky decomposition of $-\Ddot{\ell}_n(\boldsymbol{\beta} | \widetilde{\boldsymbol{\Lambda}})$ as $- \Ddot{\ell}_n (\boldsymbol{\beta} | \widetilde{\boldsymbol{\Lambda}}) = \textbf{X}^\top(\boldsymbol{\beta}) \textbf{X}(\boldsymbol{\beta})$ and define the pseudo-response vector $\textbf{W}(\boldsymbol{\beta}) = (\textbf{X}^\top(\boldsymbol{\beta}))^{-1} [\Dot{\ell}_n(\boldsymbol{\beta} | \widetilde{\boldsymbol{\Lambda}}) - \Ddot{\ell}_n(\boldsymbol{\beta} | \widetilde{\boldsymbol{\Lambda}}) \boldsymbol{\beta}] $. Then we have
$$
\frac{1}{2}\left\lVert \textbf{W}(\boldsymbol{\beta}) - \textbf{X}(\boldsymbol{\beta}) \boldsymbol{\beta} \right\rVert^2 = -\frac{1}{2} [\Dot{\ell}_n(\boldsymbol{\beta} | \widetilde{\boldsymbol{\Lambda}})]^\top [\Ddot{\ell}_n(\boldsymbol{\beta} | \widetilde{\boldsymbol{\Lambda}})]^{-1}[\Dot{\ell}_n(\boldsymbol{\beta} | \widetilde{\boldsymbol{\Lambda}})],
$$
unlike \cite{zhao2019simultaneous}, here we write $\textbf{W}(\boldsymbol{\beta})$ and $\textbf{X}(\boldsymbol{\beta})$ to emphasize the dependence of $\textbf{X}$ and $\textbf{W}$ on $\boldsymbol{\beta}$. Note that in terms of notation, we consider $\boldsymbol{X}(\boldsymbol{\beta})=\boldsymbol{X}(\boldsymbol{\beta}|\widetilde{\boldsymbol{\Lambda}})$, and $\boldsymbol{W}(\boldsymbol{\beta})=\boldsymbol{W}(\boldsymbol{\beta}|\widetilde{\boldsymbol{\Lambda}})$. 

This implies that minimizing \eqref{obj-proofs} is asymptotically equivalent to minimizing the following penalized least square function iteratively
\begin{eqnarray*} 
\frac{1}{2}\left\lVert \textbf{W}(\check{\boldsymbol{\beta}}) - \textbf{X}(\check{\boldsymbol{\beta}}) \boldsymbol{\beta} \right\rVert^2 + \lambda_n\sum^{p_n}_{j=1} \frac{\beta^2_j}{\check{\beta}^2_j}.
\end{eqnarray*} 

To prove Theorem~\ref{theorem1}, we first introduce the following notations. Define
\begin{equation}\label{A1}
\begin{pmatrix}
    \boldsymbol{\alpha}^*(\boldsymbol{\beta}) \\
    \boldsymbol{\gamma}^*(\boldsymbol{\beta})
\end{pmatrix} 
= J(\boldsymbol{\beta}) = \{\boldsymbol{\Omega}_n(\boldsymbol{\beta}) + \lambda_n \boldsymbol{D}(\boldsymbol{\beta})\}^{-1} \boldsymbol{v}_n(\boldsymbol{\beta}),
\end{equation}
where $\boldsymbol{\Omega}_n(\boldsymbol{\beta})=\boldsymbol{X}^\top(\boldsymbol{\beta})\boldsymbol{X}(\boldsymbol{\beta})$ and $\boldsymbol{v}(\boldsymbol{\beta})=\boldsymbol{X}^\top(\boldsymbol{\beta})\boldsymbol{W}(\boldsymbol{\beta})$. Now we partition the matrix $\{ n^{-1}\boldsymbol{\Omega}_n(\boldsymbol{\beta}) \}^{-1}$ into
$$
\{ n^{-1}\boldsymbol{\Omega}_n(\boldsymbol{\beta}) \}^{-1} = \begin{pmatrix}
\textbf{A}(\boldsymbol{\beta}) & \textbf{B}(\boldsymbol{\beta}) \\
\textbf{B}^\top(\boldsymbol{\beta}) & \textbf{G}(\boldsymbol{\beta}) \\
\end{pmatrix},
$$
where $\textbf{A}(\boldsymbol{\beta}), \textbf{B}(\boldsymbol{\beta})$ and $\textbf{G}(\boldsymbol{\beta})$ are $q_n \times q_n$, $q_n \times (p_n - q_n)$ and $(p_n - q_n) \times (p_n - q_n)$ matrices, respectively. Here we use $\boldsymbol{\Omega}_n(\boldsymbol{\beta})$ and $\boldsymbol{v}_n(\boldsymbol{\beta})$ instead of $\boldsymbol{\Omega}_n$ and $\boldsymbol{v}_n$ to emphasize the dependence of $\boldsymbol{\Omega}_n$ and $\boldsymbol{v}_n$ on $\boldsymbol{\beta}$. This is important in the subsequent proofs, particularly in Lemma~\ref{lemma2}.

Multiplying $\boldsymbol{\Omega}^{-1}_n(\boldsymbol{\beta})\left(\boldsymbol{\Omega}_n(\boldsymbol{\beta})+\lambda_n \boldsymbol{D}(\boldsymbol{\beta}) \right)$ and substituting $\boldsymbol{\beta}_{0} = (\boldsymbol{\beta}^\top_{0s1},\boldsymbol{\beta}^\top_{0s2})^\top$ on both sides of (\ref{A1}), we have
\begin{equation}\label{A2}
\begin{pmatrix}
    \boldsymbol{\alpha}^*(\boldsymbol{\beta}) - \boldsymbol{\beta}_{0s1} \\
    \boldsymbol{\gamma}^*(\boldsymbol{\beta})
\end{pmatrix}
+ \frac{\lambda_n}{n} 
\begin{pmatrix}
    \textbf{A}(\boldsymbol{\beta})\textbf{D}_1(\boldsymbol{\beta}_{s1})\boldsymbol{\alpha}^*(\boldsymbol{\beta}) + \textbf{B}(\boldsymbol{\beta})\textbf{D}_2(\boldsymbol{\beta}_{s2})\boldsymbol{\gamma}^*(\boldsymbol{\beta}) \\
    \textbf{B}^\top(\boldsymbol{\beta}) \textbf{D}_1(\boldsymbol{\beta}_{s1})\boldsymbol{\alpha}^*(\boldsymbol{\beta}) + \textbf{G}(\boldsymbol{\beta})\textbf{D}_2(\boldsymbol{\beta}_{s2})\boldsymbol{\gamma}^*(\boldsymbol{\beta}) 
\end{pmatrix}
= \widehat{\textbf{b}}(\boldsymbol{\beta}) - \boldsymbol{\beta}_0,
\end{equation}
where $\widehat{\textbf{b}}(\boldsymbol{\beta}) = \boldsymbol{\Omega}^{-1}_n(\boldsymbol{\beta}) \boldsymbol{v}_n(\boldsymbol{\beta})$, $\textbf{D}_1(\boldsymbol{\beta}_{s1}) = \text{diag}(\beta^{-2}_{s1,1},\ldots,\beta^{-2}_{s1,q_n})$ and \begin{eqnarray*}
\textbf{D}_2(\boldsymbol{\beta}_{s2}) = \text{diag}(\beta^{-2}_{s2,q_n+1},\ldots,\beta^{-2}_{s2,p_n}).   
\end{eqnarray*}
We need the following three Lemmas, Lemma~\ref{lemma1}, Lemma~\ref{lemma2}, and Lemma~\ref{lemma3} to prove Theorem~\ref{theorem1}. Note: Although our proofs follow that in \cite{zhao2019simultaneous}, we have made modifications in several places, for example, our Lemma~\ref{lemma3} is different from theirs in the following three aspects:
\begin{itemize}
    \item [1.] In \cite{zhao2019simultaneous}, $\boldsymbol{\Omega}_n^{(1)}$ and $\boldsymbol{v}_n^{(1)}$ are treated as constants while here, we have $\boldsymbol{\Omega}_n^{(1)}=\boldsymbol{\Omega}_n^{(1)}(\boldsymbol{\alpha})$ and $\boldsymbol{v}_n^{(1)}=\boldsymbol{v}_n^{(1)}(\boldsymbol{\alpha})$.
    \item [2.] The domain $H_{n1}$ is defined to have a different form from ${[1/K,K_0]}^{q_n}$.
    \item [3.] The proofs in the following are different due to the $\boldsymbol{\alpha}$ dependence of $\boldsymbol{\Omega}_n^{(1)}=\boldsymbol{\Omega}_n^{(1)}(\boldsymbol{\alpha})$ and $\boldsymbol{v}_n^{(1)}=\boldsymbol{v}_n^{(1)}(\boldsymbol{\alpha})$.
\end{itemize}
Other differences will be discussed in the course of our proofs. 

\begin{lemma}
\label{lemma1}
    Let $\delta$ be a large positive constant. Define $H_{n1} = \{\boldsymbol{\beta}_{s1}: \left\lVert \boldsymbol{\beta}_{s1} - \boldsymbol{\beta}_{0s1} \right\rVert \leq \delta \sqrt{p_n /n} \}$ and $H_{n2} = \{\boldsymbol{\beta}_{s2}: \left\lVert \boldsymbol{\beta}_{s2} - \boldsymbol{\beta}_{0s2}  || = ||\boldsymbol{\beta}_{s2} \right\rVert \leq \delta \sqrt{p_n /n} \}$, $H_n = H_{n1} \otimes H_{n2}$. Then, under conditions (C1)-(C8), with probability tending to 1, we have
\begin{enumerate}
    \item [(i).] $\sup_{\boldsymbol{\beta} \in H_n} \left\lVert \widehat{\textbf{b}}(\boldsymbol{\beta}) - \boldsymbol{\beta}_0 \right\rVert = O_p(\sqrt{p_n /n})$.
    \item [(ii).]$\sup_{\boldsymbol{\beta} \in H_n} \frac{\boldsymbol{\gamma}^*(\boldsymbol{\beta})}{\left\lVert \boldsymbol{\beta}_{s2} \right\rVert} < \frac{1}{c_1}$ for some constant $c_1 > 1$.
    \item [(iii).] $J(\cdot)$ is a mapping from $H_n$ to itself.
\end{enumerate}
\end{lemma}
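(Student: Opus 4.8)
The plan is to handle the three claims as a chain: part (i) controls the unpenalized local solution $\widehat{\textbf{b}}(\boldsymbol{\beta})$, part (ii) controls the penalized zero-block $\boldsymbol{\gamma}^*(\boldsymbol{\beta})$ through the heavy weights $\textbf{D}_2$, and part (iii) assembles the two bounds into the self-map property via the block identity (A2). For (i), I would start from the identity $\widehat{\textbf{b}}(\boldsymbol{\beta})=\boldsymbol{\Omega}_n^{-1}(\boldsymbol{\beta})\boldsymbol{v}_n(\boldsymbol{\beta})=\boldsymbol{\beta}+\boldsymbol{\Omega}_n^{-1}(\boldsymbol{\beta})\dot{\ell}_p(\boldsymbol{\beta})$, which exhibits $\widehat{\textbf{b}}(\boldsymbol{\beta})$ as a one-step Newton update from $\boldsymbol{\beta}$. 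Taylor-expanding the profile score about $\boldsymbol{\beta}_0$ as $\dot{\ell}_p(\boldsymbol{\beta})=\dot{\ell}_p(\boldsymbol{\beta}_0)-\boldsymbol{\Omega}_n(\boldsymbol{\beta}_0)(\boldsymbol{\beta}-\boldsymbol{\beta}_0)+\textbf{R}_n$ with a third-order remainder $\textbf{R}_n$ gives $\widehat{\textbf{b}}(\boldsymbol{\beta})-\boldsymbol{\beta}_0=\boldsymbol{\Omega}_n^{-1}(\boldsymbol{\beta})\dot{\ell}_p(\boldsymbol{\beta}_0)+[\textbf{I}-\boldsymbol{\Omega}_n^{-1}(\boldsymbol{\beta})\boldsymbol{\Omega}_n(\boldsymbol{\beta}_0)](\boldsymbol{\beta}-\boldsymbol{\beta}_0)+\boldsymbol{\Omega}_n^{-1}(\boldsymbol{\beta})\textbf{R}_n$. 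The first term is $O_p(\sqrt{p_n/n})$ from the score bound $\|\dot{\ell}_p(\boldsymbol{\beta}_0)\|=O_p(\sqrt{np_n})$ (standard under C9) together with $\|\boldsymbol{\Omega}_n^{-1}\|=O_p(1/n)$ from C5; the second is $O_p(p_n/n)=o(\sqrt{p_n/n})$ by Lipschitz continuity of $\boldsymbol{\Omega}_n$ (C4) and $\|\boldsymbol{\beta}-\boldsymbol{\beta}_0\|=O(\sqrt{p_n/n})$ on $H_n$; the remainder is controlled by the envelope functions $M_{njkh}$ of C9 and the rate $p_nq_n/\sqrt{n}\to0$ of C6. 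Uniformity over $H_n$ follows since each bound depends on $\boldsymbol{\beta}$ only through eigenvalue and continuity constants held uniform on $\mathcal{B}_0$ by C4--C5.

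For (ii), I would solve for $\boldsymbol{\gamma}^*(\boldsymbol{\beta})$ through the Schur complement $\textbf{S}=\boldsymbol{\Omega}^{(2)}_n+\lambda_n\textbf{D}_2-\boldsymbol{\Omega}^{(21)}_n(\boldsymbol{\Omega}^{(1)}_n+\lambda_n\textbf{D}_1)^{-1}\boldsymbol{\Omega}^{(12)}_n$ of $\boldsymbol{\Omega}_n(\boldsymbol{\beta})+\lambda_n\textbf{D}(\boldsymbol{\beta})$, rather than the raw (A2) form, so as to keep everything symmetric. The point is that on $H_{n2}$ the weight $\textbf{D}_2(\boldsymbol{\beta}_{s2})=\operatorname{diag}(\beta^{-2}_{s2,j})$ is uniformly huge: every entry exceeds $\|\boldsymbol{\beta}_{s2}\|^{-2}\ge n/(\delta^2p_n)$. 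Writing $\textbf{S}=\lambda_n\textbf{D}_2(\textbf{I}+(\lambda_n\textbf{D}_2)^{-1}\textbf{E})$ with $\|\textbf{E}\|=O_p(n)$, the condition $\tau_n^2/(p_n\sqrt{n})\to\infty$ in C6 (which forces $\lambda_n\gg p_n$ since $p_n=o(\sqrt n)$) makes $\|(\lambda_n\textbf{D}_2)^{-1}\textbf{E}\|\lesssim \|\boldsymbol{\beta}_{s2}\|^2n/\lambda_n\le \delta^2p_n/\lambda_n\to0$, so the Neumann series converges and $\|\textbf{S}^{-1}\|\lesssim\|(\lambda_n\textbf{D}_2)^{-1}\|=\max_j\beta^2_{s2,j}/\lambda_n$. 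Feeding in the right-hand side, which is $O_p(\sqrt{p_n/n})$ by (i) (the penalty cross-term being $o(\sqrt{p_n/n})$ under C6--C7), yields $\|\boldsymbol{\gamma}^*(\boldsymbol{\beta})\|/\|\boldsymbol{\beta}_{s2}\|\lesssim \|\boldsymbol{\beta}_{s2}\|\sqrt{p_n/n}/\lambda_n\le \delta\,p_n/(n\lambda_n)\to0$ uniformly, which is $<1/c_1$ for some $c_1>1$ for all large $n$.

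For (iii), the zero block is immediate from (ii): $\|\boldsymbol{\gamma}^*(\boldsymbol{\beta})\|\le\|\boldsymbol{\beta}_{s2}\|/c_1\le\delta\sqrt{p_n/n}$, so $\boldsymbol{\gamma}^*(\boldsymbol{\beta})\in H_{n2}$. For the nonzero block I read off the first row of (A2), $\boldsymbol{\alpha}^*(\boldsymbol{\beta})-\boldsymbol{\beta}_{0s1}=(\widehat{\textbf{b}}^{(1)}(\boldsymbol{\beta})-\boldsymbol{\beta}_{0s1})-\tfrac{\lambda_n}{n}[\textbf{A}\textbf{D}_1\boldsymbol{\alpha}^*+\textbf{B}\textbf{D}_2\boldsymbol{\gamma}^*]$, whose leading term is $O_p(\sqrt{p_n/n})$ by (i). In the corrections, C7 keeps $\textbf{D}_1=\operatorname{diag}(\beta^{-2}_{s1,j})$ bounded (entries near $a_0^{-2}$ up to the $O(\sqrt{p_n/n})$ perturbation), $\textbf{A}$ and $\textbf{B}$ are $O(1)$ by C5, and $\textbf{D}_2\boldsymbol{\gamma}^*$ is controlled from the estimate in (ii); with $\lambda_n/n\to0$ and the rate relations of C6 both corrections are $o(\sqrt{p_n/n})$. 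Taking $\delta$ large enough to absorb the $O_p(\sqrt{p_n/n})$ constants gives $\|\boldsymbol{\alpha}^*(\boldsymbol{\beta})-\boldsymbol{\beta}_{0s1}\|\le\delta\sqrt{p_n/n}$ with probability tending to one, so $\boldsymbol{\alpha}^*(\boldsymbol{\beta})\in H_{n1}$ and hence $J(\boldsymbol{\beta})\in H_n$.

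The hardest part will be the uniform remainder control in (i): establishing $\sup_{\boldsymbol{\beta}\in H_n}\|\boldsymbol{\Omega}_n^{-1}(\boldsymbol{\beta})\textbf{R}_n\|=o_p(\sqrt{p_n/n})$ for the third-order Taylor term with a diverging parameter count, which is precisely where the envelope condition C9 and the sharp rate $p_nq_n/\sqrt{n}\to0$ of C6 must be used delicately and uniformly in $\boldsymbol{\beta}$. A secondary difficulty is making the ``domination by $\lambda_n\textbf{D}_2$'' in (ii) rigorous and uniform: the Neumann-series argument must be shown to converge over all of $H_{n2}$ simultaneously, since $\textbf{S}$ (and the non-symmetric product $\textbf{G}\textbf{D}_2$ appearing in the raw (A2) form) depends on $\boldsymbol{\beta}$ through both $\boldsymbol{\Omega}_n(\boldsymbol{\beta})$ and the varying weights $\textbf{D}_2(\boldsymbol{\beta}_{s2})$.
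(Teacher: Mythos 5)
Your part (i) contains the one genuine gap. You expand the profile score at $\boldsymbol{\beta}_0$ and rest the leading term on the bound $\lVert\dot{\ell}_n(\boldsymbol{\beta}_0\mid\widetilde{\boldsymbol{\Lambda}})\rVert=O_p(\sqrt{np_n})$, which you call standard under (C9). But (C9) only provides envelope functions for \emph{third} derivatives of the log-density; it gives no moment control of the score at $\boldsymbol{\beta}_0$ with the \emph{estimated} nuisance $\widetilde{\boldsymbol{\Lambda}}$ plugged in. Since $\widetilde{\boldsymbol{\Lambda}}$ is computed jointly with $\widetilde{\boldsymbol{\beta}}$ (not at $\boldsymbol{\beta}_0$), the plug-in score $\dot{\ell}_n(\boldsymbol{\beta}_0\mid\widetilde{\boldsymbol{\Lambda}})$ is not a sum of i.i.d.\ mean-zero terms, and bounding it is essentially equivalent to proving the $\sqrt{p_n/n}$ rate of the unpenalized semiparametric estimator --- exactly the content that condition (C8) \emph{assumes} so that it need not be proved here. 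The paper avoids the issue with a device you do not use: Taylor-expand the score at $\widetilde{\boldsymbol{\beta}}$, where $\dot{\ell}_n(\widetilde{\boldsymbol{\beta}}\mid\widetilde{\boldsymbol{\Lambda}})=0$ identically, so that $\dot{\ell}_n(\boldsymbol{\beta}\mid\widetilde{\boldsymbol{\Lambda}})=\Ddot{\ell}_n(\widetilde{\boldsymbol{\beta}}^{*}\mid\widetilde{\boldsymbol{\Lambda}})(\boldsymbol{\beta}-\widetilde{\boldsymbol{\beta}})$ with no score term and no third-order remainder; the uniform convergence in (C4)--(C5) then yields $\widehat{\textbf{b}}(\boldsymbol{\beta})=\widetilde{\boldsymbol{\beta}}+o_p(1)(\boldsymbol{\beta}-\widetilde{\boldsymbol{\beta}})$, and (C8) converts this to the claimed uniform $O_p(\sqrt{p_n/n})$ bound. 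This also dissolves what you flag as the hardest step: in the paper's proof of (i) there is no third-order Taylor term to control uniformly, and (C9) is needed only later (in Lemma 3, to differentiate $\boldsymbol{\Omega}_n^{(1)}(\boldsymbol{\alpha})$), not in Lemma 1.

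Your part (ii) is a genuinely different and workable route. The paper stays with the raw block identity (A.2), bounds $\lVert\textbf{G}^{-1}(\boldsymbol{\beta})\rVert\le c_0$ via (C5), and runs a componentwise-ratio argument on $m_{\boldsymbol{\gamma}^*(\boldsymbol{\beta})}/\boldsymbol{\beta}_{s2}$, showing that ratio vector is uniformly $o_p(1)$; your Schur-complement factorization $\textbf{S}=\lambda_n\textbf{D}_2(\textbf{I}+(\lambda_n\textbf{D}_2)^{-1}\textbf{E})$ with a Neumann series achieves the same end more transparently, and it handles the degenerate regime $\lVert\boldsymbol{\beta}_{s2}\rVert\to0$ gracefully because $\lVert\textbf{S}^{-1}\rVert\lesssim\lVert\boldsymbol{\beta}_{s2}\rVert^2/\lambda_n$ improves as $\boldsymbol{\beta}_{s2}$ shrinks. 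One normalization slip: the residual fed into $\textbf{S}^{-1}$ is $O_p(\sqrt{np_n})$, not $O_p(\sqrt{p_n/n})$, since the blocks of $\boldsymbol{\Omega}_n$ scale like $n$ while $\widehat{\textbf{b}}(\boldsymbol{\beta})-\boldsymbol{\beta}_0=O_p(\sqrt{p_n/n})$; the corrected chain gives $\lVert\boldsymbol{\gamma}^*(\boldsymbol{\beta})\rVert/\lVert\boldsymbol{\beta}_{s2}\rVert\lesssim\lVert\boldsymbol{\beta}_{s2}\rVert\sqrt{np_n}/\lambda_n\le\delta p_n/\lambda_n$, which still vanishes because (C6) forces $\tau_n\gg\sqrt{p_n}\,n^{1/4}\gg p_n$ (using $p_n=o(\sqrt n)$ from $p_nq_n/\sqrt n\to0$), so your conclusion survives with room to spare. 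Your part (iii) coincides with the paper's argument. In short: (ii) and (iii) stand after minor bookkeeping, but (i) needs the paper's expansion at $\widetilde{\boldsymbol{\beta}}$ together with (C8) in place of the unsupported score bound.
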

\begin{proof}[Proof of Lemma~\ref{lemma1}] 
We want to show 
\begin{eqnarray*}
\sup_{\boldsymbol{\beta} \in H_n} \left\lVert \widehat{\textbf{b}}(\boldsymbol{\beta}) - \boldsymbol{\beta}_0 \right\rVert = O_p(\sqrt{p_n /n}).  
\end{eqnarray*}

Since $\boldsymbol{\Omega}_n(\boldsymbol{\beta}) = - \Ddot{\ell}_n(\boldsymbol{\beta} | \widetilde{\boldsymbol{\Lambda}})$ and $ \boldsymbol{v}_n(\boldsymbol{\beta}) = \Dot{\ell}_n(\boldsymbol{\beta} | \widetilde{\boldsymbol{\Lambda}}) - \Ddot{\ell}_n(\boldsymbol{\beta} | \widetilde{\boldsymbol{\Lambda}})\boldsymbol{\beta}$, we have
\begin{eqnarray*}
\widehat{\textbf{b}}(\boldsymbol{\beta})  = \boldsymbol{\Omega}^{-1}_n(\boldsymbol{\beta}) \boldsymbol{v}_n(\boldsymbol{\beta}) &=& [-\Ddot{\ell}_n(\boldsymbol{\beta} | \widetilde{\boldsymbol{\Lambda}})]^{-1}[\Dot{\ell}_n(\boldsymbol{\beta} | \widetilde{\boldsymbol{\Lambda}}) - \Ddot{\ell}_n(\boldsymbol{\beta} | \widetilde{\boldsymbol{\Lambda}})\boldsymbol{\beta}] \\
&=& \boldsymbol{\beta} - [\Ddot{\ell}_n(\boldsymbol{\beta} | \widetilde{\boldsymbol{\Lambda}})]^{-1}[\Dot{\ell}_n(\boldsymbol{\beta} | \widetilde{\boldsymbol{\Lambda}})].
\end{eqnarray*}
By Taylor expansion at $\widetilde{\boldsymbol{\beta}}$, we obtain
\begingroup
\allowdisplaybreaks
\begin{eqnarray*} 
\Dot{\ell}_n(\boldsymbol{\beta} | \widetilde{\boldsymbol{\Lambda}}) = \Dot{\ell}_n(\widetilde{\boldsymbol{\beta}} | \widetilde{\boldsymbol{\Lambda}}) + \Ddot{\ell}_n(\widetilde{\boldsymbol{\beta}}^* | \widetilde{\boldsymbol{\Lambda}})(\boldsymbol{\beta} - \widetilde{\boldsymbol{\beta}}) =\Ddot{\ell}_n(\widetilde{\boldsymbol{\beta}}^* | \widetilde{\boldsymbol{\Lambda}})(\boldsymbol{\beta} - \widetilde{\boldsymbol{\beta}}),
\end{eqnarray*}
\endgroup
where $\widetilde{\boldsymbol{\beta}}^*$ is between $\widetilde{\boldsymbol{\beta}}$ and $\boldsymbol{\beta}$. Then
\begin{equation*}
    \begin{split}
        \widehat{\textbf{b}}(\boldsymbol{\beta})  &= \boldsymbol{\beta} - [\Ddot{\ell}_n(\boldsymbol{\beta} | \widetilde{\boldsymbol{\Lambda}})]^{-1}[\Ddot{\ell}_n(\widetilde{\boldsymbol{\beta}}^* | \widetilde{\boldsymbol{\Lambda}})](\boldsymbol{\beta} - \widetilde{\boldsymbol{\beta}}) \\
        & = \boldsymbol{\beta} - [n^{-1} \Omega_n(\boldsymbol{\beta})]^{-1} [n^{-1} \Omega_n(\widetilde{\boldsymbol{\beta}}^*)](\boldsymbol{\beta} - \widetilde{\boldsymbol{\beta}}).
    \end{split}
\end{equation*}
Since $\left\lVert \widetilde{\boldsymbol{\beta}} - \boldsymbol{\beta}_0 \right\rVert = O_p(\sqrt{p_n / n}) = o_p(1)$, if $\boldsymbol{\beta} \in H_n$, then
\begin{eqnarray*}
\sup_{\boldsymbol{\beta} \in H_n} \left\lVert \boldsymbol{\beta} - \boldsymbol{\beta}_0 \right\rVert \leq \sqrt{2}\delta \sqrt{p_n / n} = O_p(1)    
\end{eqnarray*} 
and 
\begin{eqnarray*}
\left\lVert \widetilde{\boldsymbol{\beta}}^* - \boldsymbol{\beta}_0 \right\rVert = o_p(1).    
\end{eqnarray*} 
By Condition (C4), we have
\begin{eqnarray*}
n^{-1} \boldsymbol{\Omega}_n(\boldsymbol{\beta}) = I(\boldsymbol{\beta}_0) + o_p(1)
\end{eqnarray*}
and
\begin{eqnarray*}
n^{-1} \boldsymbol{\Omega}_n(\widetilde{\boldsymbol{\beta}}^*) = I(\boldsymbol{\beta}_0) + o_p(1)  
\end{eqnarray*}
uniformly for $\boldsymbol{\beta} \in H_n$.
Therefore, 
$$
[n^{-1} \boldsymbol{\Omega}_n(\boldsymbol{\beta})]^{-1} 
=I^{-1}(\boldsymbol{\beta}_0)+o_p(1), 
$$
$$
[n^{-1} \boldsymbol{\Omega}_n(\boldsymbol{\beta})]^{-1} 
[n^{-1} \boldsymbol{\Omega}_n(\widetilde{\boldsymbol{\beta}}^*)]
=\textbf{I}_{p_n}+o_p(1) 
$$
and 
\begin{equation*}
\begin{split}
\widehat{\textbf{b}}(\boldsymbol{\beta}) & = \boldsymbol{\beta} - (\textbf{I}_{p_n} + o_p(1))(\boldsymbol{\beta} - \widetilde{\boldsymbol{\beta}}) \\
& = \boldsymbol{\beta} - (\boldsymbol{\beta} - \widetilde{\boldsymbol{\beta}}) + o_p(1)(\boldsymbol{\beta} - \widetilde{\boldsymbol{\beta}}) \\
& = \widetilde{\boldsymbol{\beta}} + o_p(1)(\boldsymbol{\beta} - \widetilde{\boldsymbol{\beta}}),
\end{split}
\end{equation*}
where $\textbf{I}_{p_n}$ is an $p_n \times p_n$ identity matrix. 
Hence, we have 
\begin{equation*}
\begin{split}
\widehat{\textbf{b}}(\boldsymbol{\beta}) - \boldsymbol{\beta}_0 & = \widetilde{\boldsymbol{\beta}} - \boldsymbol{\beta}_0 + o_p(1)(\boldsymbol{\beta} - \boldsymbol{\beta}_0 - (\widetilde{\boldsymbol{\beta}} - \boldsymbol{\beta}_0)) \\
& = \widetilde{\boldsymbol{\beta}} - \boldsymbol{\beta}_0 + o_p(1)(\boldsymbol{\beta} - \boldsymbol{\beta}_0) + o_p(1)(\widetilde{\boldsymbol{\beta}} - \boldsymbol{\beta}_0).
\end{split}
\end{equation*}
As a result, 
\begin{eqnarray*}
\left\lVert \widehat{\textbf{b}}(\boldsymbol{\beta}) - \boldsymbol{\beta}_0 \right\rVert \leq \left\lVert \widetilde{\boldsymbol{\beta}} - \boldsymbol{\beta}_0 \right\rVert + o_p(1) \left\lVert \boldsymbol{\beta} - \boldsymbol{\beta}_0 \right\rVert +o_p(1) \left\lVert \widetilde{\boldsymbol{\beta}} - \boldsymbol{\beta}_0 \right\rVert,\end{eqnarray*}
and subsequently,
\begin{equation*}
    \begin{split}
        \sup_{\boldsymbol{\beta} \in H_n} \left\lVert \widehat{\textbf{b}}(\boldsymbol{\beta}) - \boldsymbol{\beta}_0 \right\rVert & \leq \left\lVert \widetilde{\boldsymbol{\beta}} - \boldsymbol{\beta}_0 \right\rVert + o_p(1) \sup_{\boldsymbol{\beta} \in H_n} \left\lVert \boldsymbol{\beta} - \boldsymbol{\beta}_0 \right\rVert + o_p(1) \left\lVert \widetilde{\boldsymbol{\beta}} - \boldsymbol{\beta}_0 \right\rVert \\
        & = O_p(\sqrt{p_n/n}) + o_p(1) (\delta \sqrt{p_n/n}) + o_p(1) O_p(\sqrt{p_n/n}) \\
        & = O_p(\sqrt{p_n/n}).
    \end{split}
\end{equation*}
Since we have proved for part (i) that
$$\sup_{\boldsymbol{\beta} \in H_n} \left\lVert \widehat{\textbf{b}}(\boldsymbol{\beta}) - \boldsymbol{\beta}_0 \right\rVert = O_p(\sqrt{p_n /n}),$$ 
it follows from (\ref{A2}) that 
$$
\sup_{\boldsymbol{\beta} \in H_n} \left\lVert \boldsymbol{\gamma}^*(\boldsymbol{\beta}) + \frac{\lambda_n}{n} \textbf{B}^\top(\boldsymbol{\beta})\textbf{D}_1(\boldsymbol{\beta}_{s1})\boldsymbol{\alpha}^*(\boldsymbol{\beta}) + \frac{\lambda_n}{n} \textbf{G}(\boldsymbol{\beta})\textbf{D}_2(\boldsymbol{\beta}_{s2}) \boldsymbol{\gamma}^*(\boldsymbol{\beta})  \right\rVert = O_p(\sqrt{p_n/n}).
$$
In sequel, we assume that for a matrix $\textbf{A}$, $\left\lVert \textbf{A} \right\rVert$ represents the induced 2-norm. Then, using the properties of the matrix 2-norm, we have
\begin{equation*}
\begin{split}
\left\lVert \textbf{B}(\boldsymbol{\beta}) \right\rVert &\leq \left\lVert ((n^{-1} \boldsymbol{\Omega}_n(\boldsymbol{\beta}))^{-1} \right\rVert  = \lambda_{\max}\{ (n^{-1}\boldsymbol{\Omega}_n(\boldsymbol{\beta}))^{-1} \} = [\lambda_{\min}(n^{-1}\boldsymbol{\Omega}_n(\boldsymbol{\beta}))]^{-1} \\
& \leq (1/c_1)^{-1} = c_1, \; \;\text{where $c_1$ is given in Condition (C5).}
\end{split}
\end{equation*}
Put it another way, this is $\sup_{\boldsymbol{\beta} \in H_n} \left\lVert \textbf{B}(\boldsymbol{\beta}) \right\rVert\leq c_1$. Similarly, we have 
\begin{eqnarray*}
\sup_{\boldsymbol{\beta} \in H_n} \left\lVert \textbf{B}^\top(\boldsymbol{\beta}) \right\rVert \leq c_1.   
\end{eqnarray*} 


Next, we want to prove (\ref{A3}):
\begin{equation}\label{A3}
\sup_{\boldsymbol{\beta} \in H_n} \left\lVert \frac{\lambda_n}{n} \textbf{B}^\top (\boldsymbol{\beta}) \textbf{D}_1(\boldsymbol{\beta}_{s1}) \boldsymbol{\alpha}^*(\boldsymbol{\beta}) \right\rVert \leq \bigg(\frac{\lambda_n}{\sqrt{n}} \bigg) O_p(\sqrt{q_n /n}) = o_p(\sqrt{p_n/n}).
\end{equation}
By (\ref{A1}), we have 
\begin{equation*}
    \begin{split}
        J(\boldsymbol{\beta}) &=\begin{pmatrix} \boldsymbol{\alpha}^*(\boldsymbol{\beta}) \\ \boldsymbol{\gamma}^*(\boldsymbol{\beta}) \end{pmatrix}\\
        &=\{ \boldsymbol{\Omega}_n(\boldsymbol{\beta}) + \lambda_n \textbf{D}_n(\boldsymbol{\beta})\}^{-1} \boldsymbol{v}_1(\boldsymbol{\beta}) \\
        &= [(\boldsymbol{\Omega}_n(\boldsymbol{\beta}) + \lambda_n \textbf{D}_n(\boldsymbol{\beta}))^{-1} \boldsymbol{\Omega}_n(\boldsymbol{\beta})]\left[\boldsymbol{\Omega}^{-1}_n(\boldsymbol{\beta})\boldsymbol{v}_1(\boldsymbol{\beta})\right] \\
        &= \left[(\boldsymbol{\Omega}_n(\boldsymbol{\beta}) + \lambda_n \textbf{D}_n(\boldsymbol{\beta}))^{-1} \boldsymbol{\Omega}_n(\boldsymbol{\beta})\right] \widehat{\textbf{b}}(\boldsymbol{\beta}).
    \end{split}
\end{equation*}
By (C5), there exists a constant $M > 0$, such that
$$
\sup_{\boldsymbol{\beta} \in H_n} \left\lVert (\boldsymbol{\Omega}_n(\boldsymbol{\beta}) + \lambda_n \textbf{D}_n(\boldsymbol{\beta}))^{-1} \boldsymbol{\Omega}_n(\boldsymbol{\beta}) \right\rVert \leq M.
$$
Then, we have
$$
\left\lVert J(\boldsymbol{\beta}) \right\rVert \leq \left\lVert (\boldsymbol{\Omega}_n(\boldsymbol{\beta}) + \lambda_n \textbf{D}_n(\boldsymbol{\beta}))^{-1} \right\rVert \left\lVert \widehat{\textbf{b}}(\boldsymbol{\beta}) \right\rVert \leq M \left\lVert \widehat{\textbf{b}}(\boldsymbol{\beta})\right\rVert.
$$
On the other hand, $\left\lVert J(\boldsymbol{\beta}) \right\rVert^2 = \left\lVert \boldsymbol{\alpha}^*(\boldsymbol{\beta}) \right\rVert^2 + \left\lVert \boldsymbol{\gamma}^*(\boldsymbol{\beta}) \right\rVert^2$, and
\begingroup
\allowdisplaybreaks
\begin{eqnarray*}
    \left\lVert \boldsymbol{\widehat{b}}(\boldsymbol{\beta}) \right\rVert=\left\lVert \boldsymbol{\widehat{b}}(\boldsymbol{\beta})-\boldsymbol{\beta}_0 \right\rVert&\leq&o_p\left(\sqrt{p_n/n} \right)+a_1\sqrt{q_n}\\
    &=&O_p(q_n),
\end{eqnarray*}
\endgroup
i.e.,
\begin{eqnarray}
    \label{B1}
    \sup_{\boldsymbol{\beta}\in H_n}\left\lVert \widehat{\boldsymbol{b}}(\boldsymbol{\beta})\right\rVert=O_p(\sqrt{q_n}).
\end{eqnarray}
By Lemma~\ref{lemma1} (i) and condition (C7), we also have
$$
\left\lVert \boldsymbol{\alpha}^*(\boldsymbol{\beta}) \right\rVert \leq \left\lVert J(\boldsymbol{\beta}) \right\rVert \leq M \left\lVert \widehat{\textbf{b}}(\boldsymbol{\beta}) \right\rVert.
$$
Then
\begin{eqnarray}
\label{B2}
\sup_{\boldsymbol{\beta} \in H_n} \left\lVert \boldsymbol{\alpha}^*(\boldsymbol{\beta})\right\rVert \leq M \sup_{\boldsymbol{\beta} \in H_n} \left\lVert \widehat{\textbf{b}}(\boldsymbol{\beta})\right\rVert = O_p(\sqrt{q_n}).
\end{eqnarray}
Now, we consider $\left\lVert \textbf{D}_1(\boldsymbol{\beta}_{s1})\right\rVert$. Since
$$
\left\lVert \textbf{D}_1(\boldsymbol{\beta}_{s1})\right\rVert = \lambda_{\max}\{ \textbf{D}_1(\boldsymbol{\beta}_{s1})\} = \max_{1 \leq j \leq q_n} \bigg\{\frac{1}{\beta^2_{s1j}} \bigg\} = \frac{1}{\min_{1 \leq j \leq q_n} \{1/\beta^2_{s1j} \}},
$$
when $\boldsymbol{\beta} \in H_n$,  we have $\left\lVert \boldsymbol{\beta} - \boldsymbol{\beta}_0 \right\rVert \leq \sqrt{2} \delta \sqrt{p_n/n}$, then
\begin{eqnarray*}
\left\lVert \beta_{s1j} - \beta_{0j}\right\rVert = | \beta_{s1j} - \beta_{0j}| \leq \delta \sqrt{p_n /n},   
\end{eqnarray*} 
i.e.,
$$
|\beta_{0j}| - \delta\sqrt{\frac{p_n}{n}} \leq |\beta_{s1j}| \leq |\beta_{0j}| + \delta \sqrt{\frac{p_n}{n}},~1 \leq j \leq q_n.
$$
By Condition (C7), when $n$ is sufficiently large, we have
$$
\frac{a_0}{2} \leq |\beta_{s1j}| \leq 2a_1,
$$
because $\delta \sqrt{p_n/n} \rightarrow 0, \text{ as } n \rightarrow \infty$. Then, 
$$
\big( \frac{a_0}{2} \big)^2 \leq \min_{1 \leq j \leq q_n} \{\beta^2_{s1j}\} \leq \max_{1 \leq j \leq q_n} \{\beta^2_{s1j}\} \leq (2a_1)^2
$$
and
$$
\left\lVert\textbf{D}_1(\boldsymbol{\beta}_{s1})\right\rVert = \frac{1}{\min_{1 \leq j \leq q_n} \{\beta^2_{s1j}\}} \leq \frac{1}{(a_0/2)^2} = \frac{4}{a^2_0}.
$$
This implies
\begin{eqnarray}
\label{B3}
\sup_{\boldsymbol{\beta} \in H_n} \left\lVert \textbf{D}_1(\boldsymbol{\beta}_{s1})\right\rVert \leq 4/a^2_0.
\end{eqnarray}
Therefore, by (\ref{B1}), (\ref{B2}), and (\ref{B3}), we have
\begin{equation*}
    \begin{split}
        \sup_{\boldsymbol{\beta} \in H_n}  \left\lVert \frac{\lambda_n}{n} \textbf{B}^\top (\boldsymbol{\beta}) \textbf{D}_1(\boldsymbol{\beta}_{s1}) \boldsymbol{\alpha}^*(\boldsymbol{\beta}) \right\rVert & \leq \frac{\lambda_n}{n} \cdot \sup_{\boldsymbol{\beta} \in H_n} \left\lVert \textbf{B}^\top (\boldsymbol{\beta})\right\rVert \cdot \sup_{\boldsymbol{\beta} \in H_n} \left\lVert\textbf{D}_1(\boldsymbol{\beta}_{s1}) \right\rVert \cdot \sup_{\boldsymbol{\beta} \in H_n} \left\lVert \boldsymbol{\alpha}^*(\boldsymbol{\beta})\right\rVert \\
        & \leq \frac{\lambda_n}{n} \cdot c_1 \cdot \frac{4}{a^2_0} \cdot O_p(\sqrt{q_n}) \\
        & = \frac{\lambda_n}{\sqrt{n}} \frac{4c_1}{a^2_0} O_p(\sqrt{q_n \ n}). 
    \end{split}
\end{equation*}
Since Condition (C6) states that $\lambda_n / \sqrt{n} \rightarrow 0$, then
$$
 \sup_{\boldsymbol{\beta} \in H_n} \left\lVert \frac{\lambda_n}{n} \textbf{B}^\top (\boldsymbol{\beta}) \textbf{D}_1(\boldsymbol{\beta}_{s1}) \boldsymbol{\alpha}^*(\boldsymbol{\beta}) \right\rVert = o(1) \cdot O_p(\sqrt{q_n / n}) = o_p(\sqrt{q_n / n}) = o_p(\sqrt{p_n / n}).
$$
The proof of (\ref{A3}) is completed. 

Next, we prove (\ref{A4}):
\begin{equation} \label{A4}
    c_0^{-1} \left\lVert\frac{\lambda_n}{n} \textbf{D}_2(\boldsymbol{\beta}_{s2}) \boldsymbol{\gamma}^*(\boldsymbol{\beta}) \right\rVert - \left\lVert \boldsymbol{\gamma}^*(\boldsymbol{\beta}) \right\rVert \leq o_p(\delta \sqrt{p_n / n}).
\end{equation}
From (\ref{A2}), we obtain
$$
\left\lVert\boldsymbol{\gamma}^*(\boldsymbol{\beta}) + \frac{\lambda_n}{n} \{\textbf{B}^\top (\boldsymbol{\beta}) \textbf{D}_1(\boldsymbol{\beta}_{s1}) \boldsymbol{\alpha}^*(\boldsymbol{\beta}) + \textbf{G}(\boldsymbol{\beta})\textbf{D}_2(\boldsymbol{\beta}_{s2})\boldsymbol{\gamma}^*(\boldsymbol{\beta}) \} \right\rVert \leq \left\lVert \widehat{\textbf{b}}(\boldsymbol{\beta}) - \boldsymbol{\beta}_0 \right\rVert.
$$
It implies 
\begin{eqnarray}
\label{E1}
&&\left\lVert\frac{\lambda_n}{n} \textbf{G}(\boldsymbol{\beta})\textbf{D}_2(\boldsymbol{\beta}_{s2})\boldsymbol{\gamma}^*(\boldsymbol{\beta})\right\rVert - \left\lVert\boldsymbol{\gamma}^*(\boldsymbol{\beta})\right\rVert - \left\Vert\frac{\lambda_n}{n}\textbf{B}^\top (\boldsymbol{\beta}) \textbf{D}_1(\boldsymbol{\beta}_{s1}) \boldsymbol{\alpha}^*(\boldsymbol{\beta})\right\rVert\nonumber\\
&&\leq  \left\lVert \widehat{\textbf{b}}(\boldsymbol{\beta}) - \boldsymbol{\beta}_0 \right\rVert.
\end{eqnarray}
Now, consider 
\begin{equation*}
  \begin{split}
    \left\lVert \frac{\lambda_n}{n} \textbf{D}_2(\boldsymbol{\beta}_{s2}) \boldsymbol{\gamma}^*(\boldsymbol{\beta}) \right\rVert & = \left\lVert \frac{\lambda_n}{n} \textbf{G}^{-1}(\boldsymbol{\beta})\textbf{G}(\boldsymbol{\beta})\textbf{D}_2(\boldsymbol{\beta}_{s2})\boldsymbol{\gamma}^*(\boldsymbol{\beta}) \right\rVert \\
    & \leq \left\lVert \textbf{G}^{-1}(\boldsymbol{\beta}) \right\rVert \cdot \left\lVert\frac{\lambda_n}{n} \textbf{G}(\boldsymbol{\beta})\textbf{D}_2(\boldsymbol{\beta}_{s2})\boldsymbol{\gamma}^*(\boldsymbol{\beta}) \right\rVert,
  \end{split}
\end{equation*}
which yields
$$
\left\lVert \frac{\lambda_n}{n} \textbf{G}(\boldsymbol{\beta})\textbf{D}_2(\boldsymbol{\beta}_{s2})\boldsymbol{\gamma}^*(\boldsymbol{\beta})\right\rVert \geq \frac{1}{\left\lVert\textbf{G}^{-1}(\boldsymbol{\beta})\right\rVert} \left\lVert\frac{\lambda_n}{n} \textbf{D}_2(\boldsymbol{\beta}_{s2})\boldsymbol{\gamma}^*(\boldsymbol{\beta}) \right\rVert.
$$
Since by Condition (C5) and the proofs given below, we have 
\begin{equation*}
    \begin{split}
        \left\lVert\textbf{G}^{-1}(\boldsymbol{\beta})\right\rVert & = \lambda_{\max}\{\textbf{G}^{-1}(\boldsymbol{\beta}) \} = \frac{1}{\lambda_{\min}\{\textbf{G}(\boldsymbol{\beta})\}} \leq \frac{1}{\inf_{\boldsymbol{\beta \in H_n}} \lambda_{\min} \{ \textbf{G}(\boldsymbol{\beta})\}} \\
        & \leq 1/(1/c_0) = c_0,
    \end{split}
\end{equation*}
then $1/||\textbf{G}^{-1}(\boldsymbol{\beta})|| \geq 1/c_0$,  $\inf_{\boldsymbol{\beta \in H_n}} \{1 /||\textbf{G}^{-1}(\boldsymbol{\beta})|| \} \geq 1/c_0$, and
\begin{eqnarray}
\label{E2}
\left\lVert \frac{\lambda_n}{n} \textbf{G}(\boldsymbol{\beta}) \textbf{D}_2(\boldsymbol{\beta}_{s2}) \boldsymbol{\gamma}^*(\boldsymbol{\beta}) \right\rVert \geq \frac{1}{c_0} \left\lVert \frac{\lambda_n}{n} \textbf{D}_2(\boldsymbol{\beta}_{s2}) \boldsymbol{\gamma}^*(\boldsymbol{\beta}) \right\rVert.
\end{eqnarray}
Finally,  (\ref{E1}), (\ref{E2}), (\ref{A3}) and Lemma~\ref{lemma1} (ii) together imply (\ref{A4}).
Here we explain why $\inf_{\boldsymbol{\beta} \in H_n} \{\lambda_{\min} \{\textbf{G}(\boldsymbol{\beta}) \} \} \geq 1/c_0$. This is due to
\begin{equation*}
    \begin{split}
 \lambda_{\min} \{\textbf{G}(\boldsymbol{\beta}) \} & \geq \lambda_{\min} \{ (n^{-1} \boldsymbol{\Omega}_n(\boldsymbol{\beta}))^{-1} \} \\
        & = \lambda_{\max} \{n^{-1} \boldsymbol{\Omega}_n(\boldsymbol{\beta}) \} \\
        & \geq \lambda_{\min} \{n^{-1} \boldsymbol{\Omega}_n(\boldsymbol{\beta}) \} \\
        & \geq \inf_{\boldsymbol{\beta} \in H_n} \{ \lambda_{\min} \{n^{-1} \boldsymbol{\Omega}_n(\boldsymbol{\beta}) \} \}\\
        & \geq  1/c_0, \;\; \text{(by Condition (C5)).}
    \end{split}
\end{equation*}
Let
\begin{equation*}
\frac{m_{\boldsymbol{\gamma}^*(\boldsymbol{\beta})}}{\boldsymbol{\beta}_{s2}} = (\boldsymbol{\gamma}_1^*(\boldsymbol{\beta})/\beta_{s2,q_n + 1},\ldots,\boldsymbol{\gamma}_{p_n-q_n}^*(\boldsymbol{\beta}) /\beta_{s2,p_n})^\top.
\end{equation*}
Then, $m_{\boldsymbol{\gamma}^*(\boldsymbol{\beta})}/\boldsymbol{\beta}_{s2} = \text{diag}(\boldsymbol{\beta}_{s2}) \textbf{D}_2(\boldsymbol{\beta}_{s2}) \boldsymbol{\gamma}^*(\boldsymbol{\beta}) $, and therefore
\begin{eqnarray}
\label{E3}
\left\lVert \frac{m_{\boldsymbol{\gamma}^*(\boldsymbol{\beta})}}{\boldsymbol{\beta}_{s2}} \right\rVert &\leq& \left\lVert \text{diag}(\boldsymbol{\beta}_{s2}) \right\rVert \cdot \left\lVert \textbf{D}_2(\boldsymbol{\beta}_{s2}) \boldsymbol{\gamma}^*(\boldsymbol{\beta}) \right\rVert \nonumber\\
&=& \sqrt{\left\lVert \text{diag}(\boldsymbol{\beta}_{s2}) \right\rVert^2} \cdot \left\lVert\textbf{D}_2(\boldsymbol{\beta}_{s2}) \boldsymbol{\gamma}^*(\boldsymbol{\beta})\right\rVert \nonumber\\
&=&\sqrt{\max_{q_n + 1 \leq j \leq p_n} \beta^2_{s2j}} \cdot \left\lVert\textbf{D}_2(\boldsymbol{\beta}_{s2}) \boldsymbol{\gamma}^*(\boldsymbol{\beta})\right\rVert\nonumber\\
&\leq& \left\lVert\boldsymbol{\beta}_{s2} \right\rVert \cdot \left\lVert\textbf{D}_2(\boldsymbol{\beta}_{s2}) \boldsymbol{\gamma}^*(\boldsymbol{\beta})\right\rVert\nonumber\\
&\leq& \delta \sqrt{p_n/n} \left\lVert\textbf{D}_2(\boldsymbol{\beta}_{s2}) \boldsymbol{\gamma}^*(\boldsymbol{\beta})\right\rVert. 
\end{eqnarray}
Write $\boldsymbol{\gamma}^*(\boldsymbol{\beta})=\text{diag}(\boldsymbol{\beta}_{s2})\frac{\boldsymbol{m}_{\boldsymbol{\gamma}^*(\boldsymbol{\beta})}}{\boldsymbol{\beta}_{s2}}$, then 
\begin{eqnarray}
\label{E4}
    \left\lVert \boldsymbol{\gamma}^*(\boldsymbol{\beta})\right\rVert\leq \left\lVert \text{diag}(\boldsymbol{\beta}_{s2})\right\rVert\left\lVert\frac{\boldsymbol{m}_{\boldsymbol{\gamma}^*(\boldsymbol{\beta})}}{\boldsymbol{\beta}_{s2}}\right\rVert&\leq& \left\lVert \boldsymbol{\beta}_{s2}\right\rVert\left\lVert\frac{\boldsymbol{m}_{\boldsymbol{\gamma}^*(\boldsymbol{\beta})}}{\boldsymbol{\beta}_{s2}}\right\rVert \nonumber\\
&\leq&\delta\sqrt{p_n/n}\left\lVert\frac{\boldsymbol{m}_{\boldsymbol{\gamma}^*(\boldsymbol{\beta})}}{\boldsymbol{\beta}_{s2}}\right\rVert.
\end{eqnarray}
(\ref{E2}) and (\ref{E3}) imply 
\begin{eqnarray}
    \label{E5}
    \left\lVert \frac{\lambda_n}{n}\boldsymbol{G}(\boldsymbol{\beta})\boldsymbol{D}_2(\boldsymbol{\beta}_{s2})\boldsymbol{\gamma}^*(\boldsymbol{\beta})) \right\rVert &\geq& (1/c_0)(\lambda_n/n)\left\lVert \boldsymbol{D}_2(\boldsymbol{\beta}_{s2})\boldsymbol{\gamma}^*(\boldsymbol{\beta})) \right\rVert \nonumber\\
    &\geq& (1/c_0)(\lambda_n/n)\left(\frac{\sqrt{n}}{\delta\sqrt{p_n}} \right)\left\lVert\frac{\boldsymbol{m}_{\boldsymbol{\gamma}^*(\boldsymbol{\beta})}}{\boldsymbol{\beta}_{s2}}\right\rVert.
\end{eqnarray}
By (\ref{E1}), (\ref{E4}), (\ref{E5}), and Lemma~\ref{lemma1}(i), we can conclude that
\begin{eqnarray*}
    (1/c_0)(\lambda_n/n)\left(\frac{\sqrt{n}}{\delta \sqrt{p_n}} \right)\left\lVert\frac{\boldsymbol{m}_{\boldsymbol{\gamma}^*(\boldsymbol{\beta})}}{\boldsymbol{\beta}_{s2}}\right\rVert-\delta\sqrt{p_n/n}\left\lVert\frac{\boldsymbol{m}_{\boldsymbol{\gamma}^*(\boldsymbol{\beta})}}{\boldsymbol{\beta}_{s2}}\right\rVert\leq o_p\left(\delta\sqrt{p_n/n} \right).
\end{eqnarray*}
Therefore, 
\begin{eqnarray*}
    \left[\frac{\lambda_n}{c_0n}\left(\frac{\sqrt{n}}{\delta\sqrt{p_n}} \right)^2-1 \right]\left\lVert\frac{\boldsymbol{m}_{\boldsymbol{\gamma}^*(\boldsymbol{\beta})}}{\boldsymbol{\beta}_{s2}}\right\rVert\leq o_p(1),
\end{eqnarray*}
and since $\lambda_n/(p_n\delta^2)\longrightarrow 0$, we obtain
\begin{eqnarray*}
    \left\lVert\frac{\boldsymbol{m}_{\boldsymbol{\gamma}^*(\boldsymbol{\beta})}}{\boldsymbol{\beta}_{s2}}\right\rVert\leq \frac{1}{{\frac{\lambda_n}{c_0  p_n\delta^2 }-1}}o_p(1)=o_p(1),
\end{eqnarray*}
which implies \begin{eqnarray}
  \label{A6}\sup_{\boldsymbol{\beta}\in H_{n}} \left\lVert\frac{\boldsymbol{m}_{\boldsymbol{\gamma}^*(\boldsymbol{\beta})}}{\boldsymbol{\beta}_{s2}}\right\rVert=o_p(1).\end{eqnarray}
It follows from (\ref{E4}) and (\ref{A6}) that
\begin{eqnarray}
    \label{A7}
    \left\lVert 
    \boldsymbol{\gamma}^*(\boldsymbol{\beta})\right\rVert
    \leq \left\lVert 
\boldsymbol{\beta}_{s2}\right\rVert\left\lVert\frac{\boldsymbol{m}_{\boldsymbol{\gamma}^*(\boldsymbol{\beta})}}{\boldsymbol{\beta}_{s2}}\right\rVert
\leq (\delta\sqrt{p_n/n} ) o_p(1).
\end{eqnarray}
Hence, 
\begin{eqnarray*}
    \sup_{\boldsymbol{\beta}\in H_{n}} \left\{\frac{\left\lVert 
    \boldsymbol{\gamma}^*(\boldsymbol{\beta})\right\rVert}{\left\lVert 
    \boldsymbol{\beta}_{s2}\right\rVert} \right\}\leq \sup_{\boldsymbol{\beta}\in H_{n}} \left\lVert\frac{\boldsymbol{m}_{\boldsymbol{\gamma}^*(\boldsymbol{\beta})}}{\boldsymbol{\beta}_{s2}}\right\rVert=o_p(1),
\end{eqnarray*}
which implies that Lemma~\ref{lemma1} (ii) holds.  

To prove Lemma~\ref{lemma1} (iii), from (\ref{E4}) and (\ref{A6}), we have already shown that, with probability tending to 1, 
\begin{eqnarray*}
\left\lVert\boldsymbol{\gamma}^*(\boldsymbol{\beta})\right\rVert \leq o_p(1)\delta\sqrt{p_n/n}\leq \delta\sqrt{p_n/n}.  
\end{eqnarray*}
Therefore, we are left to show that
\begin{eqnarray*}
    \left\lVert \boldsymbol{\alpha}^*(\boldsymbol{\beta})-\boldsymbol{\beta}_{0s_1}\right\rVert\leq \delta \sqrt{p_n/n}
\end{eqnarray*}
with probability tending to 1.

Similar to the proof of (\ref{A3}), we have 
\begin{eqnarray*}
    \sup_{\boldsymbol{\beta}\in H_{n}}\left\lVert \frac{\lambda_n}{n}\boldsymbol{A}(\boldsymbol{\beta})\boldsymbol{D}_1(\boldsymbol{\beta}_{s1}) \boldsymbol{\alpha}^*(\boldsymbol{\beta})\right\rVert=o_p(\sqrt{p_n/n})=o_p(\delta\sqrt{p_n/n}).
\end{eqnarray*}
Subsequently, from (\ref{A2}), we have 
\begin{eqnarray}\label{A8}
\sup_{\boldsymbol{\beta}\in H_{n}}\left\lVert \boldsymbol{\alpha}^*(\boldsymbol{\beta})-\boldsymbol{\beta}_{0s1}+\frac{\lambda_n}{n} \boldsymbol{B}(\boldsymbol{\beta})\boldsymbol{D}_2(\boldsymbol{\beta}_{s2})\boldsymbol{\gamma}^*(\boldsymbol{\beta})\right\rVert=o_p\left( \delta \sqrt{p_n/n}\right).
\end{eqnarray}
According to (\ref{A4}) and (\ref{A7}), we have 
\begingroup
\allowdisplaybreaks
\begin{eqnarray*}
    c_1^{-1}\left\lVert \frac{\lambda_n}{n}\boldsymbol{D}_2(\boldsymbol{\beta_{s2}})\boldsymbol{\gamma}^*(\boldsymbol{\beta}) \right\rVert &\leq&\left\lVert \boldsymbol{\gamma}^*(\boldsymbol{\beta}) \right\rVert +o_p\left(\delta\sqrt{p_n/n} \right)\\
    &\leq& o_p\left(\delta\sqrt{p_n/n} \right)+o_p\left(\delta\sqrt{p_n/n} \right)\\
    &=&o_p\left(\delta\sqrt{p_n/n} \right).
\end{eqnarray*}
\endgroup
Then $\left\lVert (\lambda_n/n) \boldsymbol{D}_2(\boldsymbol{\beta_{s2}})\boldsymbol{\gamma}^*(\boldsymbol{\beta}) \right\rVert \leq c_1\cdot o_p\left(\delta\sqrt{p_n/n}\right)=o_p\left(\delta\sqrt{p_n/n} \right)$, and therefore, 
\begin{equation}
\begin{split}
    \label{A9}
    \sup_{\boldsymbol{\beta}\in H_n}\left\lVert \frac{\lambda_n}{n}\boldsymbol{B}(\boldsymbol{\beta})\boldsymbol{D}_2(\boldsymbol{\beta_{s2}})\boldsymbol{\gamma}^*(\boldsymbol{\beta}) \right\rVert& \leq \left\lVert \boldsymbol{B}(\boldsymbol{\beta}) \right\rVert \left\lVert \frac{\lambda_n}{n}\boldsymbol{D}_2(\boldsymbol{\beta_{s2}})\boldsymbol{\gamma}^*(\boldsymbol{\beta}) \right\rVert\nonumber\\
    & \leq c_1 \cdot o_p\left(\delta\sqrt{p_n/n} \right)\nonumber\\
    & =o_p\left( \delta\sqrt{p_n/n}\right).
    \end{split}
\end{equation}


Thus, (\ref{A8}) and (\ref{A9}) yield
\begin{eqnarray}
    \label{extra2}
    \sup_{\boldsymbol{\beta}\in H_n}\left\lVert \boldsymbol{\alpha}^*(\boldsymbol{\beta})-\boldsymbol{\beta}_{0s1}\right\rVert\leq o_p\left(\delta\sqrt{p_n/n}\right).
\end{eqnarray}
The inequality of (\ref{extra2}) implies that, with probability tending to 1, $\forall \boldsymbol{\beta} \in H_n$, we have
\begin{eqnarray*}
    \left\lVert \boldsymbol{\alpha}^*(\boldsymbol{\beta})-\boldsymbol{\beta}_{0s1}\right\rVert\leq \delta\sqrt{p_n/n}
\end{eqnarray*}
for large $n$, and hence, Lemma~\ref{lemma1} (iii) holds.
\end{proof}

Let $\boldsymbol{\beta}_{s1}=\boldsymbol{\alpha}$ and $\boldsymbol{\beta}_{s2}=\boldsymbol{0}$ in $\boldsymbol{\Omega}_n(\boldsymbol{\beta})$ and $\boldsymbol{v}_n(\boldsymbol{\beta})$,  we define $\boldsymbol{\Omega}_n(\boldsymbol{\alpha})=\boldsymbol{\Omega}_n(\boldsymbol{\beta})$ when $\boldsymbol{\beta}_{s1}=\boldsymbol{\alpha}$ and $\boldsymbol{\beta}_{s2}=\boldsymbol{0}$. Similarly, define $\boldsymbol{v}_n(\boldsymbol{\alpha})=\boldsymbol{v}_n(\boldsymbol{\beta})$ when $\boldsymbol{\beta}_{s1}=\boldsymbol{\alpha}$ and $\boldsymbol{\beta}_{s2}=\boldsymbol{0}$. The same applies to $\boldsymbol{\Omega}_n^{(1)}(\boldsymbol{\alpha})$ and $\boldsymbol{v}_n^{(1)}(\boldsymbol{\alpha})$. We have the following lemma. 

\begin{lemma}\label{lemma2}
(A matrix calculus identity): Assume a vector $\boldsymbol{\alpha} \in \mathbb{R}^{q_n}$, $q_n\geq 1$, $f$ is a mapping from $\mathbb{R}^{q_n}$ to $\mathbb{R}^{q_n}$ defined by $f(\boldsymbol{\alpha})=(f_1(\boldsymbol{\alpha}),\ldots,f_{q_n}(\boldsymbol{\alpha}))^\top$, and $f$ is differentiable. Also, $\boldsymbol{\omega}(\boldsymbol{\alpha})$ is a $q_n\times q_n$ matrix and a mapping from $\mathbb{R}^{q_n}$ to $\mathbb{R}^{q_n\times q_n}$ and differentiable. Then 
\begin{eqnarray*}
    \frac{\partial \left[\boldsymbol{\omega}(\boldsymbol{\alpha})f(\boldsymbol{\alpha}))\right]}{\partial \boldsymbol{\alpha}^\top}=\boldsymbol{\omega}(\boldsymbol{\alpha})\frac{\partial f(\boldsymbol{\alpha})}{\partial \boldsymbol{\alpha}^\top}+\begin{pmatrix}
        f^\top(\boldsymbol{\alpha})&\ldots&0\\
        \vdots&\ddots&\vdots\\
        0&\ldots&f^\top(\boldsymbol{\alpha})
    \end{pmatrix}
    \begin{pmatrix}
        \left( \frac{\partial \boldsymbol{\omega}_1^\top (\boldsymbol{\alpha})}{\partial \boldsymbol{\alpha}}\right)^\top\\
        \vdots\\
        \left( \frac{\partial \boldsymbol{\omega}_{q_n}^\top (\boldsymbol{\alpha})}{\partial \boldsymbol{\alpha}}\right)^\top
    \end{pmatrix},
\end{eqnarray*}
where the two matrices in the last term of the above equation are block matrices, $\boldsymbol{\omega}_j^\top (\boldsymbol{\alpha})$ is the $j$th row of $\boldsymbol{\omega}(\boldsymbol{\alpha})$ and $\partial \boldsymbol{\omega}_j^\top (\boldsymbol{\alpha})/\partial \boldsymbol{\alpha}$ is a $q_n\times q_n$ matrix, $1\leq j \leq q_n$.
\end{lemma}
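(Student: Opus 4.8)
The plan is to establish the identity entrywise, reducing it to the ordinary scalar product rule and then matching the two resulting sums to the two terms on the right-hand side. First I would write $\boldsymbol{g}(\boldsymbol{\alpha}) := \boldsymbol{\omega}(\boldsymbol{\alpha}) f(\boldsymbol{\alpha})$, whose $i$th component is $g_i = \sum_{j=1}^{q_n} \omega_{ij} f_j$, where $\omega_{ij}$ is the $(i,j)$ entry of $\boldsymbol{\omega}$. Since $\partial \boldsymbol{g}/\partial \boldsymbol{\alpha}^\top$ is by definition the $q_n \times q_n$ Jacobian with $(i,\ell)$ entry $\partial g_i/\partial \alpha_\ell$, applying the scalar product rule to each summand gives
\[
\frac{\partial g_i}{\partial \alpha_\ell} = \sum_{j=1}^{q_n} \omega_{ij}\,\frac{\partial f_j}{\partial \alpha_\ell} + \sum_{j=1}^{q_n} f_j\,\frac{\partial \omega_{ij}}{\partial \alpha_\ell}, \qquad 1 \le i,\ell \le q_n.
\]
Everything then reduces to recognizing each of these two sums as the $(i,\ell)$ entry of the corresponding matrix term in the statement.

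The first sum is immediate: because $\partial f/\partial \boldsymbol{\alpha}^\top$ has $(j,\ell)$ entry $\partial f_j/\partial \alpha_\ell$, the sum $\sum_j \omega_{ij}\,\partial f_j/\partial \alpha_\ell$ is exactly the $(i,\ell)$ entry of the ordinary matrix product $\boldsymbol{\omega}(\boldsymbol{\alpha})\,\partial f(\boldsymbol{\alpha})/\partial \boldsymbol{\alpha}^\top$, so this term needs no further argument.

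The second sum requires carefully unwinding the block-matrix notation, and this is where I expect the only real work to lie. I would first pin down the differentiation convention so that $\partial \boldsymbol{\omega}_j^\top/\partial \boldsymbol{\alpha}$ is the $q_n \times q_n$ matrix with $(\ell,a)$ entry $\partial \omega_{ja}/\partial \alpha_\ell$; its transpose then has $(a,\ell)$ entry $\partial \omega_{ja}/\partial \alpha_\ell$, i.e.\ it is the usual Jacobian $\partial \boldsymbol{\omega}_j/\partial \boldsymbol{\alpha}^\top$ of the $j$th column $\boldsymbol{\omega}_j$. Stacking these $q_n$ transposed blocks vertically yields the $q_n^2 \times q_n$ factor on the far right, whose entry in the row labelled by the pair $(j,a)$ and the column $\ell$ is $\partial \omega_{ja}/\partial \alpha_\ell$. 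The block-diagonal factor $\mathrm{diag}(f^\top,\ldots,f^\top)$ is $q_n \times q_n^2$, with $f^\top$ in its $j$th diagonal block, so its entry in row $i$ and column $(j,a)$ equals $f_a$ when $i=j$ and $0$ otherwise. Forming the product and summing out the pair index, the $(i,\ell)$ entry collapses to $\sum_{a=1}^{q_n} f_a\,\partial \omega_{ia}/\partial \alpha_\ell$, which is precisely the second sum above after renaming $a$ to $j$. Adding the two identifications entry by entry over all $(i,\ell)$ reconstitutes the claimed matrix identity.

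The sole obstacle is notational bookkeeping: matching the two-level block indexing (the outer block index $j$ against the inner index $a$) to the single running index of the scalar sum, and keeping the layout convention for $\partial \boldsymbol{\omega}_j^\top/\partial \boldsymbol{\alpha}$ consistent with the transpose that appears in the statement. Once the convention is fixed there is no analytic content, and the verification is a direct index computation.
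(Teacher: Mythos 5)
Your entrywise verification is correct and complete: the scalar product rule applied to $g_i=\sum_j \omega_{ij}f_j$, together with your careful identification of the $(i,\ell)$ entry of the block product as $\sum_a f_a\,\partial\omega_{ia}/\partial\alpha_\ell$, is exactly the routine index computation the paper has in mind — the paper itself states that Lemma~2 ``can be proved easily'' and omits the proof, so your write-up simply supplies the argument being gestured at. One harmless wording slip: since $\boldsymbol{\omega}_j^\top$ is defined as the $j$th \emph{row} of $\boldsymbol{\omega}(\boldsymbol{\alpha})$, the vector $\boldsymbol{\omega}_j$ is that row written as a column vector, not ``the $j$th column'' of $\boldsymbol{\omega}$; your index computation uses $(\boldsymbol{\omega}_j)_a=\omega_{ja}$ throughout, which is the correct (row) convention, so nothing in the proof is affected.
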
 
Since Lemma \ref{lemma2} can be proved easily, we omit the proof. 
\begin{lemma}
\label{lemma3}
Under the conditions (C1)-(C9), with probability tending to 1, the equation $\boldsymbol{\alpha}=(\boldsymbol{\Omega}_n^{(1)}(\boldsymbol{\alpha})+\lambda_n \boldsymbol{D}_1(\boldsymbol{\alpha}))^{-1}\boldsymbol{v}_n^{(1)}(\alpha)$ has a unique fixed-point $\widehat{\alpha}^{*}$ in the domain $H_{n1}$.  
\end{lemma}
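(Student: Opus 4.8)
The plan is to rewrite the fixed-point equation as $\boldsymbol{\alpha}=T(\boldsymbol{\alpha})$ with
$$T(\boldsymbol{\alpha})=\left(\boldsymbol{\Omega}_n^{(1)}(\boldsymbol{\alpha})+\lambda_n\boldsymbol{D}_1(\boldsymbol{\alpha})\right)^{-1}\boldsymbol{v}_n^{(1)}(\boldsymbol{\alpha}),$$
and then to verify the two hypotheses of the Banach contraction mapping theorem on the closed, bounded, convex (hence complete) set $H_{n1}$: that $T$ maps $H_{n1}$ into itself, and that $T$ is a contraction there with probability tending to one. These two facts immediately yield a unique fixed point $\widehat{\alpha}^*\in H_{n1}$, which is the assertion of the lemma. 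Throughout I write $M(\boldsymbol{\alpha})=\boldsymbol{\Omega}_n^{(1)}(\boldsymbol{\alpha})+\lambda_n\boldsymbol{D}_1(\boldsymbol{\alpha})$ and $\widehat{\boldsymbol{b}}^{(1)}(\boldsymbol{\alpha})=\left(\boldsymbol{\Omega}_n^{(1)}(\boldsymbol{\alpha})\right)^{-1}\boldsymbol{v}_n^{(1)}(\boldsymbol{\alpha})$.

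To establish the self-map property I would use the identity $M^{-1}\boldsymbol{\Omega}_n^{(1)}-\boldsymbol{I}=-\lambda_n M^{-1}\boldsymbol{D}_1$ to decompose
$$T(\boldsymbol{\alpha})-\boldsymbol{\beta}_{0s1}=M^{-1}(\boldsymbol{\alpha})\boldsymbol{\Omega}_n^{(1)}(\boldsymbol{\alpha})\left(\widehat{\boldsymbol{b}}^{(1)}(\boldsymbol{\alpha})-\boldsymbol{\beta}_{0s1}\right)-\lambda_n M^{-1}(\boldsymbol{\alpha})\boldsymbol{D}_1(\boldsymbol{\alpha})\boldsymbol{\beta}_{0s1}.$$
The first term is handled by a block analogue of Lemma~\ref{lemma1}(i): the Newton-step identity $\widehat{\boldsymbol{b}}^{(1)}(\boldsymbol{\alpha})=\boldsymbol{\alpha}-[\ddot{\ell}_n^{(1)}]^{-1}\dot{\ell}_n^{(1)}$ together with a Taylor expansion and conditions (C4)--(C5) give $\sup_{\boldsymbol{\alpha}\in H_{n1}}\lVert\widehat{\boldsymbol{b}}^{(1)}(\boldsymbol{\alpha})-\boldsymbol{\beta}_{0s1}\rVert=O_p(\sqrt{p_n/n})$, while $\lVert M^{-1}\boldsymbol{\Omega}_n^{(1)}\rVert$ is bounded by (C5). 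For the second term, (C5) gives $\lVert M^{-1}(\boldsymbol{\alpha})\rVert\le c_0/n$, the bound \eqref{B3} gives $\lVert\boldsymbol{D}_1(\boldsymbol{\alpha})\rVert\le 4/a_0^2$ on $H_{n1}$, and (C7) gives $\lVert\boldsymbol{\beta}_{0s1}\rVert\le a_1\sqrt{q_n}$, so it is $O_p(\lambda_n\sqrt{q_n}/n)=o_p(\sqrt{p_n/n})$ by (C6). Choosing $\delta$ large then forces $\lVert T(\boldsymbol{\alpha})-\boldsymbol{\beta}_{0s1}\rVert\le\delta\sqrt{p_n/n}$ uniformly, so $T(H_{n1})\subseteq H_{n1}$.

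The core of the argument is the contraction, which I would obtain by bounding the operator norm of the Jacobian $\partial T/\partial\boldsymbol{\alpha}^\top$ uniformly on $H_{n1}$ and showing it is $o_p(1)$, so that the mean-value inequality delivers a Lipschitz constant below one. Here Lemma~\ref{lemma2} is the essential device: applied with $\boldsymbol{\omega}(\boldsymbol{\alpha})=M^{-1}(\boldsymbol{\alpha})$ and $f(\boldsymbol{\alpha})=\boldsymbol{v}_n^{(1)}(\boldsymbol{\alpha})$ it splits $\partial T/\partial\boldsymbol{\alpha}^\top$ into $M^{-1}\,\partial\boldsymbol{v}_n^{(1)}/\partial\boldsymbol{\alpha}^\top$ plus a block term carrying $\partial M^{-1}/\partial\boldsymbol{\alpha}$. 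In the first piece, using $\boldsymbol{v}_n^{(1)}=\dot{\ell}_n^{(1)}+\boldsymbol{\Omega}_n^{(1)}\boldsymbol{\alpha}$ the two copies of $\boldsymbol{\Omega}_n^{(1)}$ cancel and only third-derivative remainders of the profile log-likelihood survive; these are controlled through the envelope condition (C9), and multiplication by $\lVert M^{-1}\rVert=O(1/n)$ renders the piece $o_p(1)$ under (C6). In the second piece one writes $\partial M^{-1}/\partial\alpha_j=-M^{-1}\left(\partial\boldsymbol{\Omega}_n^{(1)}/\partial\alpha_j+\lambda_n\,\partial\boldsymbol{D}_1/\partial\alpha_j\right)M^{-1}$, whose delicate contribution is the diagonal factor $-2\alpha_j^{-3}$ from $\partial\boldsymbol{D}_1/\partial\alpha_j$; on $H_{n1}$ we have $|\alpha_j|\ge a_0/2$ by (C7), so this factor is bounded, and the two surrounding factors of $M^{-1}$ supply $\lambda_n/n^2$, which after multiplication by $\lVert\boldsymbol{v}_n^{(1)}\rVert=O_p(n\sqrt{q_n})$ and summation over the $q_n$ coordinates is again $o_p(1)$ by the rate conditions in (C6). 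Collecting the two pieces gives $\sup_{\boldsymbol{\alpha}\in H_{n1}}\lVert\partial T/\partial\boldsymbol{\alpha}^\top\rVert=o_p(1)<1$ with probability tending to one, so $T$ is a contraction.

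The main obstacle is precisely this uniform Jacobian bound, and in particular ensuring that the singular-looking $\alpha_j^{-3}$ terms produced by differentiating $\boldsymbol{D}_1(\boldsymbol{\alpha})$ and the third-derivative remainders of the log-likelihood are both $o_p(1)$. This is where the $\boldsymbol{\alpha}$-dependence of $\boldsymbol{\Omega}_n^{(1)}(\boldsymbol{\alpha})$ and $\boldsymbol{v}_n^{(1)}(\boldsymbol{\alpha})$ forces the use of the matrix calculus identity in Lemma~\ref{lemma2} and the envelope condition (C9), going genuinely beyond the constant-matrix treatment used in the existing literature. Once the contraction and self-map properties are in hand, the Banach fixed-point theorem on the complete metric space $H_{n1}$ furnishes, with probability tending to one, the unique fixed point $\widehat{\alpha}^*\in H_{n1}$, completing the proof.
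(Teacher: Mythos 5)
Your overall architecture is the same as the paper's: cast the equation as a fixed point of $f(\boldsymbol{\alpha})=(\boldsymbol{\Omega}_n^{(1)}(\boldsymbol{\alpha})+\lambda_n\boldsymbol{D}_1(\boldsymbol{\alpha}))^{-1}\boldsymbol{v}_n^{(1)}(\boldsymbol{\alpha})$, prove the self-map property on $H_{n1}$ via $\sup_{\boldsymbol{\alpha}\in H_{n1}}\lVert(\boldsymbol{\Omega}_n^{(1)}(\boldsymbol{\alpha}))^{-1}\boldsymbol{v}_n^{(1)}(\boldsymbol{\alpha})-\boldsymbol{\beta}_{0s1}\rVert=O_p(\sqrt{p_n/n})$ plus smallness of the ridge correction, then prove contraction by a uniform $o_p(1)$ bound on the Jacobian using Lemma~\ref{lemma2} and condition (C9). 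Your self-map step is a correct, equivalent rearrangement of the paper's. The gap is in the contraction step. After the cancellation of the two copies of $\boldsymbol{\Omega}_n^{(1)}$, your first piece is $M^{-1}(\boldsymbol{\alpha})$ times the block-diagonal matrix of $\boldsymbol{\alpha}^\top$'s times the stacked third-derivative matrix $\boldsymbol{P}_n(\boldsymbol{\alpha})$, and the $\partial\boldsymbol{\Omega}_n^{(1)}/\partial\alpha_j$ contribution inside $\partial M^{-1}$ produces the analogous expression with $f^\top(\boldsymbol{\alpha})$-blocks. You bound (or implicitly dismiss) these two separately, claiming each is $o_p(1)$ because $\lVert M^{-1}\rVert=O(1/n)$. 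That fails for diverging $q_n$: the only bound (C9) yields is $\lVert \boldsymbol{P}_n/n\rVert=O_p(q_n^{3/2})$, the block weights satisfy $\lVert\boldsymbol{\alpha}\rVert,\lVert f(\boldsymbol{\alpha})\rVert=O_p(\sqrt{q_n})$ with no smallness, and $\lVert nM^{-1}\rVert\le c_0$, so each piece is only $O_p(q_n^{2})$, which diverges. The $1/n$ from $M^{-1}$ merely offsets the $n$ summands inside $\boldsymbol{P}_n$; it manufactures no decay.

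The paper's proof works precisely because it never separates these two pieces: differentiating the implicit identity $(\boldsymbol{\Omega}_n^{(1)}(\boldsymbol{\alpha})+\lambda_n\boldsymbol{D}_1(\boldsymbol{\alpha}))f(\boldsymbol{\alpha})=\boldsymbol{v}_n^{(1)}(\boldsymbol{\alpha})$, the $\boldsymbol{\alpha}^\top$-weighted blocks from $\partial\boldsymbol{v}_n^{(1)}$ and the $f^\top(\boldsymbol{\alpha})$-weighted blocks from $\partial\boldsymbol{\Omega}_n^{(1)}$ merge into $(f(\boldsymbol{\alpha})-\boldsymbol{\alpha})^\top$-weighted blocks, i.e.\ $\boldsymbol{F}_n(\boldsymbol{\alpha})\boldsymbol{P}_n(\boldsymbol{\alpha})$, and the already-proved self-map estimate $\sup_{\boldsymbol{\alpha}\in H_{n1}}\lVert f(\boldsymbol{\alpha})-\boldsymbol{\alpha}\rVert=O_p(\sqrt{p_n/n})$ supplies the missing smallness:
\begin{equation*}
\left\lVert \boldsymbol{F}_n(\boldsymbol{\alpha})\boldsymbol{P}_n(\boldsymbol{\alpha})/n\right\rVert
= O_p\!\left(\sqrt{p_nq_n^3/n}\right)\le O_p\!\left(p_nq_n/\sqrt{n}\right)=o_p(1)
\end{equation*}
by (C6). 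This cancellation is latent in your own decomposition --- your two block terms differ exactly by the substitution $\boldsymbol{\alpha}\mapsto f(\boldsymbol{\alpha})$ --- so the fix is to collect them \emph{before} bounding rather than after. A secondary misdiagnosis: you single out the diagonal factor $-2\alpha_j^{-3}$ from $\partial\boldsymbol{D}_1/\partial\alpha_j$ as ``the delicate contribution,'' but on $H_{n1}$ that term is the easy one, being $(\lambda_n/n)\,O_p(1)=o_p(1)$ since $|f_j(\boldsymbol{\alpha})\alpha_j^{-3}|$ is uniformly bounded by (C7); the genuinely delicate terms are the third-derivative blocks above, where the $(f(\boldsymbol{\alpha})-\boldsymbol{\alpha})$ cancellation is indispensable.
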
 
Before we prove this lemma, we want to mention that our proofs are different from those in the literature for the BAR estimator under other settings of models and data, such as \cite{zhao2019simultaneous}. 
The following points merit consideration here:
\begin{itemize}
     \item [1.] Two expressions $\boldsymbol{\Omega}_n^{(1)}$ and $\boldsymbol{v}_n^{(1)}$
     are written as functions of $\alpha$ to emphasize that they depend on $\alpha$ and cannot be treated as constants in Lemma~\ref{lemma3}.
    \item [2.] In order to prove Lemma~\ref{lemma3}, we need Lemma~\ref{lemma2} and a new condition (C9) to deal with more complicated and high-order terms in the proofs. 
    \item [3.] In Theorem~\ref{theorem1}, we showed that the limiting variance is not necessarily a sandwich form, it implies that the BAR estimator is semiparametricaly efficient. 
\end{itemize}
\begin{proof}[Proof of Lemma~\ref{lemma3}] 
Define 
\begin{eqnarray}
\label{A10}
    f(\boldsymbol{\alpha})=(f_1(\boldsymbol{\alpha}),\ldots,f_{q_n}(\boldsymbol{\alpha}))^\top\equiv (\boldsymbol{\Omega}_n^{(1)}(\boldsymbol{\alpha})+\lambda_n \boldsymbol{D}_1(\boldsymbol{\alpha}))^{-1}\boldsymbol{v}_n^{(1)}(\boldsymbol{\alpha}),
\end{eqnarray}
where $\boldsymbol{\alpha}=(\boldsymbol{\alpha}_1,\ldots,\boldsymbol{\alpha}_{q_n})^\top$. By multiplying $(\boldsymbol{\Omega}_n^{(1)}(\boldsymbol{\alpha}))^{-1}(\boldsymbol{\Omega}_n^{(1)}(\boldsymbol{\alpha})+\lambda_n \boldsymbol{D}_1(\boldsymbol{\alpha}))$ and subtracting $\boldsymbol{\beta}_{0s1}$ on both sides of (\ref{A10}), we have 
\begin{eqnarray}
\label{F1}
    f(\boldsymbol{\alpha})-\boldsymbol{\beta}_{0s1}+\lambda_n(\boldsymbol{\Omega}_{n}^{(1)}(\boldsymbol{\alpha})\boldsymbol{D}_1(\boldsymbol{\alpha}))f(\boldsymbol{\alpha})=(\boldsymbol{\Omega}_n^\top(\boldsymbol{\alpha}))^{-1}\boldsymbol{v}_n^{(1)}(\boldsymbol{\alpha})-\boldsymbol{\beta}_{0s1},
\end{eqnarray}
where $\boldsymbol{\Omega}_n(\boldsymbol{\alpha})=\boldsymbol{X}^\top(\boldsymbol{\alpha})\boldsymbol{X}(\boldsymbol{\alpha})$, $\boldsymbol{v}_n(\boldsymbol{\alpha})=\boldsymbol{X}^\top (\boldsymbol{\alpha})\boldsymbol{W}(\boldsymbol{\alpha})$  by Cholesky decomposition, and $\boldsymbol{W}(\boldsymbol{\alpha})$ is the pseudo response vector. 
Let $\boldsymbol{X}(\boldsymbol{\alpha})=(\boldsymbol{X}_1(\boldsymbol{\alpha}),\boldsymbol{X}_2(\boldsymbol{\alpha}))$, $\boldsymbol{X}_1(\boldsymbol{\alpha})$ is a $p_n \times q_n$ matrix and $\boldsymbol{X}_2(\boldsymbol{\alpha})$ is a $p_n\times (p_n - q_n)$ matrix. Then 
\begin{eqnarray*}
    \boldsymbol{X}^\top(\boldsymbol{\alpha})=\begin{pmatrix}
      \boldsymbol{X}_1^\top(\boldsymbol{\alpha}) \\
      \boldsymbol{X}_2^\top(\boldsymbol{\alpha})
    \end{pmatrix},
\end{eqnarray*}
\begin{equation*}
\begin{split}
\boldsymbol{\Omega}_n(\boldsymbol{\alpha})=\boldsymbol{X}^\top(\boldsymbol{\alpha})\boldsymbol{X}(\boldsymbol{\alpha})& =\begin{pmatrix}
      \boldsymbol{X}_1^\top(\boldsymbol{\alpha}) \\
      \boldsymbol{X}_2^\top(\boldsymbol{\alpha})
    \end{pmatrix}(\boldsymbol{X}_1(\boldsymbol{\alpha}), \boldsymbol{X}_2(\boldsymbol{\alpha}))\\
    & =\begin{pmatrix}
     \boldsymbol{X}_1^\top(\boldsymbol{\alpha}) \boldsymbol{X}_1(\boldsymbol{\alpha}) & \boldsymbol{X}_1^\top(\boldsymbol{\alpha})\boldsymbol{X}_2(\boldsymbol{\alpha}) \\
     \boldsymbol{X}_2^\top(\boldsymbol{\alpha}) \boldsymbol{X}_1(\boldsymbol{\alpha}) & \boldsymbol{X}_2^\top(\boldsymbol{\alpha}) \boldsymbol{X}_2(\boldsymbol{\alpha})
    \end{pmatrix}.
    \end{split}
\end{equation*}
We obtain $\boldsymbol{\Omega}_n^{(1)}(\boldsymbol{\alpha})=\boldsymbol{X}_1^\top(\boldsymbol{\alpha})\boldsymbol{X}_1(\boldsymbol{\alpha})$, $\boldsymbol{v}_n^{(1)}(\boldsymbol{\alpha})=\boldsymbol{X}_1^\top(\boldsymbol{\alpha})\boldsymbol{W}(\boldsymbol{\alpha})$, and 
\begin{eqnarray*}
    \boldsymbol{v}_n(\boldsymbol{\alpha})=\boldsymbol{X}^\top(\boldsymbol{\alpha})\boldsymbol{W}(\boldsymbol{\alpha})=\begin{pmatrix}
        \boldsymbol{X}_1^\top(\boldsymbol{\alpha})\boldsymbol{W}(\boldsymbol{\alpha}) \\
        \boldsymbol{X}_2^\top(\boldsymbol{\alpha})\boldsymbol{W}(\boldsymbol{\alpha})
    \end{pmatrix}.
\end{eqnarray*}
Thus, in (\ref{F1}), we have 
\begingroup
\allowdisplaybreaks
\begin{eqnarray}
    \label{F2}
    (\boldsymbol{\Omega}_n^{(1)}(\boldsymbol{\alpha}))^{-1}\boldsymbol{v}_n^{(1)}(\boldsymbol{\alpha})- \boldsymbol{\beta}_{0s1}&=&(\boldsymbol{X}_1^\top(\boldsymbol{\alpha})\boldsymbol{X}_1(\boldsymbol{\alpha}))^{-1}\boldsymbol{X}_1^\top(\boldsymbol{\alpha})\boldsymbol{W}(\boldsymbol{\alpha})-\boldsymbol{\beta}_{0s1}\nonumber\\
    &=&(\boldsymbol{X}_1^\top(\boldsymbol{\alpha})\boldsymbol{X}_1(\boldsymbol{\alpha}))^{-1}\boldsymbol{X}_1^\top(\boldsymbol{\alpha})\boldsymbol{W}(\boldsymbol{\alpha})\nonumber\\
    &-&(\boldsymbol{X}_1^\top(\boldsymbol{\alpha})\boldsymbol{X}_1(\boldsymbol{\alpha}))^{-1}\boldsymbol{X}_1^\top(\boldsymbol{\alpha})\boldsymbol{X}_1(\boldsymbol{\alpha})\boldsymbol{\beta}_{0s1}\nonumber\\
    &=&(\boldsymbol{X}_1^\top(\boldsymbol{\alpha})\boldsymbol{X}_1(\boldsymbol{\alpha}))^{-1}\boldsymbol{X}_1^\top(\boldsymbol{\alpha})\nonumber\\
    &&[\boldsymbol{W}(\alpha)-\boldsymbol{X}_1(\boldsymbol{\alpha})\boldsymbol{\beta}_{0s1}].
\end{eqnarray}
\endgroup
Since $\boldsymbol{\beta}_{0s2}=0$,  we obtain
\begin{eqnarray*}
\boldsymbol{X}(\boldsymbol{\alpha})\boldsymbol{\beta}_0=(\boldsymbol{X}_1(\boldsymbol{\alpha})\boldsymbol{X}_2(\boldsymbol{\alpha}))
\begin{pmatrix}
\boldsymbol{\beta}_{0s1}\\\boldsymbol{\beta}_{0s2}
\end{pmatrix}=\boldsymbol{X}_1(\boldsymbol{\alpha})\boldsymbol{\beta}_{0s1}
\end{eqnarray*}
and 
\begin{eqnarray*}
    \widehat{\boldsymbol{b}}(\boldsymbol{\alpha})&=&\boldsymbol{\Omega}_n^{-1}(\boldsymbol{\alpha})\boldsymbol{v}_n^{(1)}(\boldsymbol{\alpha})\nonumber\\
    &=&(\boldsymbol{X}^\top(\boldsymbol{\alpha})\boldsymbol{X}(\boldsymbol{\alpha}))^{-1}\boldsymbol{X}^\top(\boldsymbol{\alpha})\boldsymbol{W}(\boldsymbol{\alpha})\nonumber\\
    &=&\boldsymbol{X}^{-1}(\boldsymbol{\alpha})\boldsymbol{W}(\boldsymbol{\alpha}).
\end{eqnarray*}
Then, from (\ref{F2}), we have 
\begin{equation}
\begin{split}
\label{F3}
    (\boldsymbol{\Omega}_n^{(1)}(\boldsymbol{\alpha}))^{-1}\boldsymbol{v}_n^{(1)}(\boldsymbol{\alpha})-\boldsymbol{\beta}_{0s1} & =(\boldsymbol{X}_1^\top(\boldsymbol{\alpha})\boldsymbol{X}_1(\boldsymbol{\alpha}))^{-1}\boldsymbol{X}_1^\top(\boldsymbol{\alpha})\boldsymbol{X}(\boldsymbol{\alpha})[\boldsymbol{X}^{-1}(\boldsymbol{\alpha})\boldsymbol{W}(\boldsymbol{\alpha})-\boldsymbol{\beta}_0]\\
    & =(\boldsymbol{X}_1^\top(\boldsymbol{\alpha})\boldsymbol{X}_1(\boldsymbol{\alpha}))^{-1}\boldsymbol{X}_1^\top(\boldsymbol{\alpha})\boldsymbol{X}(\boldsymbol{\alpha})(\widehat{\boldsymbol{b}}(\boldsymbol{\alpha})-\boldsymbol{\beta}_0).
\end{split}
\end{equation}
From (\ref{F3}), we obtain
\begin{eqnarray}
    \label{F4}
    \left\lVert(\boldsymbol{\Omega}^{(1)}(\boldsymbol{\alpha}))^{-1}\boldsymbol{v}_n^{(1)}(\boldsymbol{\alpha})-\boldsymbol{\beta}_{0s1}\right\lVert&\leq& \left\lVert(\boldsymbol{X}_1^\top(\boldsymbol{\alpha}
    )\boldsymbol{X}_1(\boldsymbol{\alpha}))^{-1}\right\lVert.\left\lVert(\boldsymbol{X}_1^\top(\boldsymbol{\alpha}
    )\boldsymbol{X}(\boldsymbol{\alpha})\right\rVert\nonumber\\
    &&\left\lVert\widehat{\boldsymbol{b}}(\boldsymbol{\alpha})-\boldsymbol{\beta}_0\right\rVert.   
\end{eqnarray}
Since $\boldsymbol{X}^\top(\boldsymbol{\alpha}
    )\boldsymbol{X}(\boldsymbol{\alpha})=(\boldsymbol{X}_1^\top(\boldsymbol{\alpha}
    )\boldsymbol{X}_2(\boldsymbol{\alpha}))^\top\boldsymbol{X}(\boldsymbol{\alpha})=\begin{pmatrix}
        \boldsymbol{X}_1^\top(\boldsymbol{\alpha}
    )\boldsymbol{X}(\boldsymbol{\alpha})\\
    \boldsymbol{X}_2^\top(\boldsymbol{\alpha}
    )\boldsymbol{X}(\boldsymbol{\alpha})
    \end{pmatrix}$,
    we have 
    \begin{eqnarray*}
        \left\lVert\boldsymbol{X}_1^\top(\boldsymbol{\alpha}
    )\boldsymbol{X}(\boldsymbol{\alpha}) \right\rVert\leq \left\lVert\boldsymbol{X}^\top(\boldsymbol{\alpha}
    )\boldsymbol{X}(\boldsymbol{\alpha})\right\rVert=\left\lVert\boldsymbol{\Omega}_n(\boldsymbol{\alpha})\right\rVert.
    \end{eqnarray*}
Noticing $\boldsymbol{\Omega}_n^{(1)}(\boldsymbol{\alpha})=\boldsymbol{X}_1^\top(\boldsymbol{\alpha}
    )\boldsymbol{X}_1(\boldsymbol{\alpha})$, from (\ref{F4}), we have 
\begin{equation*}
    \begin{split}
        \sup_{\boldsymbol{\alpha}\in H_{n1}}\left\lVert\boldsymbol{\Omega}^{(1)}(\boldsymbol{\alpha})\boldsymbol{v}^{(1)}_n(\boldsymbol{\alpha})
        -\boldsymbol{\beta}_{0s1}
        \right\rVert & \leq\sup_{\boldsymbol{\alpha}\in H_{n1}}\left[{\left\lVert\left(\frac{\boldsymbol{X}_1^\top(\boldsymbol{\alpha})
    \boldsymbol{X}_1(\boldsymbol{\alpha})}{n}\right)^{-1}\right\rVert\left\lVert \frac{\boldsymbol{X}^\top(\boldsymbol{\alpha}
    )\boldsymbol{X}(\boldsymbol{\alpha})}{n}\right\rVert  \left\lVert \widehat{\boldsymbol{b}}(\boldsymbol{\alpha})-\boldsymbol{\beta}_{0s1}\right\rVert}\right]\\
    & \leq \sup_{\boldsymbol{\alpha}\in H_{n1}}\left\lVert \left(\frac{\boldsymbol{\Omega}_n^{(1)}(\boldsymbol{\alpha})}{n}\right)^{-1} \right\rVert \sup_{\boldsymbol{\alpha}\in H_{n1}}\left\lVert \frac{\boldsymbol{\Omega}_n(\boldsymbol{\alpha})}{n} \right\rVert \sup_{\boldsymbol{\alpha}\in H_{n1}}\left\lVert \widehat{\boldsymbol{b}}(\boldsymbol{\alpha})-\boldsymbol{
    \beta
    }_{0}\right\rVert\\
    & = \sup_{\boldsymbol{\alpha}\in H_{n1}}\left[ \lambda_{\text{max}} \left\{ \left(\frac{\boldsymbol{\Omega}_n^{(1)}(\boldsymbol{\alpha})}{n} \right)^{-1}\right\} \right]\sup_{\boldsymbol{\alpha}\in H_{n1}}\left[ \lambda_{\text{max}}\left\{\frac{\boldsymbol{\Omega}_n(\boldsymbol{\alpha})}{n} \right\} \right]\\
    & \sup_{\boldsymbol{\alpha}\in H_{n1}}\left[ \left\lVert \widehat{\boldsymbol{b}}(\boldsymbol{\alpha})-\boldsymbol{\beta}_{0}\right\rVert \right]\\
    & =\sup_{\boldsymbol{\alpha}\in H_{n1}}\left[ \left\{\lambda_{\text{min}}  \left(\frac{\boldsymbol{\Omega}_n^{(1)}(\boldsymbol{\alpha})}{n} \right)\right\}^{-1} \right]\sup_{\boldsymbol{\alpha}\in H_{n1}}\left[ \lambda_{\text{max}}\left\{\frac{\boldsymbol{\Omega}_n(\boldsymbol{\alpha})}{n} \right\} \right]\\
    & \sup_{\boldsymbol{\alpha}\in H_{n1}}\left[ \left\lVert \widehat{\boldsymbol{b}}(\boldsymbol{\alpha})-\boldsymbol{\beta}_{0}\right\rVert \right].
    \end{split}
\end{equation*}
Then, by Condition (C5), we have
\begin{equation*}
\begin{split}
    & \sup_{\boldsymbol{\alpha}\in H_{n1}}\left[ \left\{\lambda_{\text{min}}  \left(\frac{\boldsymbol{\Omega}_n^{(1)}(\boldsymbol{\alpha})}{n} \right)\right\}^{-1} \right]\sup_{\boldsymbol{\alpha}\in H_{n1}}\left[ \lambda_{\text{max}}\left\{\frac{\boldsymbol{\Omega}_n(\boldsymbol{\alpha})}{n} \right\} \right]\cdot \sup_{\boldsymbol{\alpha}\in H_{n1}}\left[ \left\lVert \widehat{\boldsymbol{b}}(\boldsymbol{\alpha})-\boldsymbol{\beta}_{0}\right\rVert \right]\\
    & \leq \left[\frac{1}{c_0}\right]^{-1}\cdot c_o \cdot \sup_{\boldsymbol{\alpha}\in H_{n1}}\left\lVert\widehat{\boldsymbol{b}}(\boldsymbol{\alpha})-\boldsymbol{\beta}_{0}
    \right\rVert\\
    & =c_0^2 \cdot\sup_{\boldsymbol{\alpha}\in H_{n1}}\left\lVert 
    \widehat{\boldsymbol{b}}(\boldsymbol{\alpha})-\boldsymbol{\beta}_{0} \right\rVert.
    \end{split}
\end{equation*}
By Lemma~\ref{lemma1} (i), i.e.,  $\sup_{\boldsymbol{\alpha}\in H_{n}}\left\lVert\widehat{\boldsymbol{b}}(\boldsymbol{\alpha})-\boldsymbol{\beta}_{0} \right\rVert=O_p(\sqrt{p_n/n})$,  we have
    \begin{eqnarray*}
          \sup_{\boldsymbol{\alpha}\in H_{n1}}\left\lVert (\boldsymbol{\Omega}_n^{(1)}(\boldsymbol{\alpha}))^{-1}\boldsymbol{v}_n^{(1)}(\boldsymbol{\alpha}) -\boldsymbol{\beta}_{0s1} \right\rVert=O_p(\sqrt{p_n/n}).
    \end{eqnarray*}
Therefore, from (\ref{F1}), we obtain
\begin{eqnarray}
    \label{F5}
    \sup_{\boldsymbol{\alpha}\in H_{n1}}\left\lVert f(\boldsymbol{\alpha})-\boldsymbol{\beta}_{0s1}+\lambda_n(\boldsymbol{\Omega}_n^{(1)}(\boldsymbol{\alpha}))^{-1}\boldsymbol{D}_1(\boldsymbol{\alpha})f(\boldsymbol{\alpha})\right\rVert=O_p(\sqrt{p_n/n}).
\end{eqnarray}
Next, we want to show 
\begin{eqnarray}
    \label{F6}
    \sup_{\boldsymbol{\alpha}\in H_{n1}}\left\lVert \lambda_n(\boldsymbol{\Omega}_n^{(1)}(\boldsymbol{\alpha}))^{-1}\boldsymbol{D}_1(\boldsymbol{\alpha})f(\boldsymbol{\alpha})\right\rVert=o_p(\sqrt{q_n/n}).
\end{eqnarray}
Then, from (\ref{F5}) and (\ref{F6}), it follows that
\begin{eqnarray*}
    \sup_{\boldsymbol{\alpha}\in H_{n1}}\left\lVert f(\boldsymbol{\alpha})-\boldsymbol{\beta}_{0s1}\right\rVert=O_p(\sqrt{p_n/n})\longrightarrow 0,
\end{eqnarray*}
    which implies, with probability tending to 1, that $f(\boldsymbol{\alpha})\in H_{n1}$, i.e., $f(\boldsymbol{\alpha})$ is a mapping from $H_{n1}$ to itself.

In order to prove (\ref{F6}), first, we rewrite it as
\begin{eqnarray*}
\sup_{\boldsymbol{\alpha}\in H_{n1}}\left\lVert \frac{\lambda_n}{n}(n^{-1}\boldsymbol{\Omega}_n^{(1)}(\boldsymbol{\alpha}))^{-1}\boldsymbol{D}_1(\boldsymbol{\alpha})f(\boldsymbol{\alpha})\right\rVert=o_p(\sqrt{q_n/n}).
\end{eqnarray*}
Since $\widehat{\boldsymbol{b}}(\boldsymbol{\alpha})=\boldsymbol{X}^{-1}(\boldsymbol{\alpha})\boldsymbol{W}(\boldsymbol{\alpha})$, $\boldsymbol{D}_1(\boldsymbol{\alpha})=\text{diag}(\alpha_1^{-2},\ldots,\alpha_{q_n}^{-2})$,
\begin{equation*}
\begin{split}
    \boldsymbol{v}_n^{(1)}(\boldsymbol{\alpha})& =\boldsymbol{X}_1^\top(\boldsymbol{\alpha})\boldsymbol{W}(\boldsymbol{\alpha})=\boldsymbol{X}_1^\top(\boldsymbol{\alpha})\boldsymbol{X}(\boldsymbol{\alpha})\left[ \boldsymbol{X}^{-1}(\boldsymbol{\alpha})\boldsymbol{W}(\boldsymbol{\alpha}) \right]\\
    & =\boldsymbol{X}_1^\top(\boldsymbol{\alpha})\boldsymbol{X}(\boldsymbol{\alpha})\widehat{\boldsymbol{b}}(\boldsymbol{\alpha}).
    \end{split}
\end{equation*}
As shown before, we have
\begin{equation*}
\begin{split}
    \left\lVert \widehat{\boldsymbol{b}}(\boldsymbol{\alpha})\right\rVert& = \left\lVert \widehat{\boldsymbol{b}}(\boldsymbol{\alpha})-\boldsymbol{\beta}_0+ \boldsymbol{\beta}_0\right\rVert\leq \left\lVert \widehat{\boldsymbol{b}}(\boldsymbol{\alpha})-\boldsymbol{\beta}_0 \right\rVert + \left\lVert \boldsymbol{\beta}_0 \right\rVert\\
    & = o_p\left(\sqrt{p_n/n}\right)+O_p(q_n)\\
    & =O_p(q_n)
    \end{split}
\end{equation*}
and 
\begin{eqnarray}
\label{F7}
    \left\lVert \boldsymbol{v}_n^{(1)}(\boldsymbol{\alpha})\right\rVert & \leq& \left\lVert \boldsymbol{X}_1^\top(\boldsymbol{\alpha})\boldsymbol{X}(\boldsymbol{\alpha})\right\rVert \left\lVert \widehat{\boldsymbol{b}}(\boldsymbol{\alpha})\right\rVert\nonumber \\
    &\leq& \left\lVert \boldsymbol{X}^\top(\boldsymbol{\alpha})\boldsymbol{X}(\boldsymbol{\alpha}) \right\rVert\left\lVert\widehat{\boldsymbol{b}}(\boldsymbol{\alpha})\right\rVert \nonumber\\
    &=&n\left\lVert \frac{\boldsymbol{\Omega}_n(\boldsymbol{\alpha})}{n}\right\rVert
    \left\lVert \widehat{\boldsymbol{b}}(\boldsymbol{\alpha})\right\rVert\nonumber \\
    &\leq& c_0 \cdot n \left\lVert \widehat{\boldsymbol{b}}(\boldsymbol{\alpha})\right\rVert\nonumber \\
    &\leq& c_0\cdot n\cdot O_p(q_n)\cdot \qquad \text{(by (C5))}
\end{eqnarray}
Then
\begin{eqnarray}
\label{F8}
    \left\lVert f(\boldsymbol{\alpha})\right\rVert &=& \left\lVert \left(\boldsymbol{\Omega}_n^{(1)}(\boldsymbol{\alpha})+\lambda_n\boldsymbol{D}_1(\boldsymbol{\alpha})\right)^{-1}\boldsymbol{v}_n^{(1)}(\boldsymbol{\alpha})   \right\rVert\nonumber\\
    &\leq& \frac{1}{n}\left\lVert \left(\frac{\boldsymbol{\Omega}_n^{(1)}(\boldsymbol{\alpha})}{n} +\frac{\lambda_n}{n}\boldsymbol{D}_1(\boldsymbol{\alpha})\right)^{-1}\right\rVert\left\lVert \boldsymbol{v}_n^{(1)}(\boldsymbol{\alpha}) \right\rVert\nonumber\\
    &=&\frac{1}{n}\lambda_{\text{max}}\left[ \left(\frac{\boldsymbol{\Omega}_n^{(1)}(\boldsymbol{\alpha})}{n} +\frac{\lambda_n}{n}\boldsymbol{D}_1(\boldsymbol{\alpha})\right)^{-1}\right]\left\lVert \boldsymbol{v}_n^{(1)}(\boldsymbol{\alpha}) \right\rVert\nonumber\\
    &=&\frac{1}{n}\left[\lambda_{\text{min}} \left(\frac{\boldsymbol{\Omega}_n^{(1)}(\boldsymbol{\alpha})}{n} +\frac{\lambda_n}{n}\boldsymbol{D}_1(\boldsymbol{\alpha})\right)\right]^{-1}\left\lVert \boldsymbol{v}_n^{(1)}(\boldsymbol{\alpha}) \right\rVert\nonumber\\
    &\leq& \frac{1}{n}\left[\lambda_{\text{min}}\left(\frac{\boldsymbol{\Omega}_n^{(1)}(\boldsymbol{\alpha})}{n}\right)\right]^{-1}\left\lVert \boldsymbol{v}_n^{(1)}(\boldsymbol{\alpha}) \right\rVert \qquad \text{(since $\frac{\lambda_n}{n}\boldsymbol{D}_1(\boldsymbol{\alpha})$ is positive definite)}\nonumber\\
    &\leq& \frac{1}{n(1/c_0)}\left\lVert \boldsymbol{v}_n^{(1)}(\boldsymbol{\alpha}) \right\rVert\qquad\text{(by (C5))}\nonumber\\
    &\leq& \frac{1}{n}\cdot c_0^2\cdot n\cdot O_p(q_n)\;\; \qquad\text{(by (\ref{F7}))} \nonumber\\
    &=&c_0^2 \cdot O_p(q_n).
\end{eqnarray}
Since $\boldsymbol{\alpha}\in H_{n1}$, by (C7), when $n$ is large enough, $|\alpha_j|\geq a_0/2,~ 1\leq j\leq q_n$, then
\begin{eqnarray}
    \label{F9}
    \left \lVert \boldsymbol{D}_1(\boldsymbol{\alpha})\right\rVert =\lambda_{\text{max}}(\boldsymbol{D}_1(\boldsymbol{\alpha}))=\max_{1\leq j\leq q_n}(\alpha_j^{-2})\leq (a_0/2)^{-2}=4a_0^{-2}.
\end{eqnarray}
Thus, by (\ref{F6}), (\ref{F7}), and (\ref{F8}), we have 
\begingroup
\allowdisplaybreaks
\begin{eqnarray*}
    \left\lVert \frac{\lambda_n}{n}\left( n^{-1}\boldsymbol{\Omega}_n^{(1)}(\boldsymbol{\alpha})\right)^{-1}\boldsymbol{D}_1(\boldsymbol{\alpha})f(\boldsymbol{\alpha}) \right \rVert&\leq&\frac{\lambda_n}{n}\left\lVert \left(n^{-1}\boldsymbol{\Omega}_n^{(1)}(\boldsymbol{\alpha})\right)^{-1} \right\rVert \left\lVert \boldsymbol{D}_1(\boldsymbol{\alpha})\right\rVert
    \left\lVert f(\boldsymbol{\alpha})\right\rVert\\
    &\leq& \frac{\lambda_n}{n}\left(\frac{1}{1/c_0}\right)(4a_0^{-2})\cdot c_0^2\cdot O_p(q_n)\\
    &=&(4c_0^3a_0^{-2})\cdot O_p\left(\frac{\lambda_n\sqrt{q_n}}{\sqrt{n}}\sqrt{\frac{q_n}{n}}\right)\\
    &=&\left(4c_0^3a_0^{-2} \right)\\
    && \times o_p\left( \sqrt{\frac{q_n}{n}}\right)\qquad\text{(since by (C6), $\frac{\lambda_n\sqrt{q_n}}{n}\longrightarrow0$)}\\
    &=&o_p\left(\sqrt{\frac{q_n}{n}} \right).
\end{eqnarray*}
\endgroup
Thus, 
\begin{eqnarray*}
    \sup_{\boldsymbol{\alpha}\in H_{n1}}\left \lVert \lambda_n(\boldsymbol{\Omega}_n^{(1)}(\boldsymbol{\alpha}))^{-1}\boldsymbol{D}_1(\boldsymbol{\alpha})f(\boldsymbol{\alpha})\right\rVert=o_p\left(\sqrt{\frac{q_n}{n}}\right),
\end{eqnarray*}
i.e.,  (\ref{F6}) holds.

Recall that $$\boldsymbol{\Omega}_n(\boldsymbol{\alpha})=\boldsymbol{\Omega}_n(\boldsymbol{\beta})\Big|_{\boldsymbol{\beta}_{s1}=\boldsymbol{\alpha},~\boldsymbol{\beta}_{s2}=0},\;\;
\boldsymbol{v}_n(\boldsymbol{\alpha})=\boldsymbol{v}_n(\boldsymbol{\beta})\Big|_{\boldsymbol{\beta}_{s1}=\boldsymbol{\alpha},~\boldsymbol{\beta}_{s2}=0},$$
$$\boldsymbol{\Omega}_n^{(1)}(\boldsymbol{\alpha})=\boldsymbol{\Omega}^{(1)}_n(\boldsymbol{\beta})\Big|_{\boldsymbol{\beta}_{s1}=\boldsymbol{\alpha},~\boldsymbol{\beta}_{s2}=0},\;\; \boldsymbol{v}_n^{(1)}(\boldsymbol{\alpha})=\boldsymbol{v}^{(1)}_n(\boldsymbol{\beta})\Big|_{\boldsymbol{\beta}_{s1}=\boldsymbol{\alpha},~\boldsymbol{\beta}_{s2}=0},
$$ 
and 
$$\sup_{\boldsymbol{\alpha}\in H_{n1}}\left\lVert f(\boldsymbol{\alpha})-\boldsymbol{\beta}_{0s1}\right\rVert=O_p\left(\sqrt{\frac{p_n}{n}} \right)
 ,
 $$ 
 which implies that with probability tending to 1, $f(\boldsymbol{\alpha})$ is a mapping from $H_{n1}$ to itself.

Multiplying $\boldsymbol{\Omega}_n^{(1)}(\boldsymbol{\alpha})+\lambda_n\boldsymbol{D}_1(\boldsymbol{\alpha})$ on both sides of (\ref{A10}), we obtain
\begin{eqnarray}
    \label{G1}\left(\boldsymbol{\Omega}_n^{(1)}(\boldsymbol{\alpha}) +\lambda_n\boldsymbol{D}_1(\boldsymbol{\alpha})\right)f(\boldsymbol{\alpha})=\boldsymbol{v}_n^{(1)}(\boldsymbol{\alpha}).
\end{eqnarray}
Denote the $j$th row of $\boldsymbol{\Omega}_n^{(1)}(\boldsymbol{\alpha})$ by $\boldsymbol{\omega}_j^\top(\boldsymbol{\alpha})$ and the $j$th row of $\boldsymbol{D}_1(\boldsymbol{\alpha})$ by $\boldsymbol{d}_j^\top(\boldsymbol{\alpha})$. Then, 
\begin{eqnarray*}
    \boldsymbol{m}_j^\top(\boldsymbol{\alpha})=\left( \frac{\partial^2 [\sum_{i=1}^n \log f_n(v_{ni},(\boldsymbol{\alpha}^\top, \boldsymbol{0}^\top),\boldsymbol{\Lambda})]}{\partial \alpha_j \partial\alpha_1},\ldots,\frac{\partial^2 [\sum_{i=1}^n \log f_n(v_{ni},(\boldsymbol{\alpha}^\top, \boldsymbol{0}^\top),\boldsymbol{\Lambda})]}{\partial \alpha_j \partial\alpha_{q_n}}\right),
\end{eqnarray*}
where $\boldsymbol{d}_j^\top=(0,\ldots,0,\alpha_j^{-2},\ldots,0)$.
We take derivatives on both sides of (\ref{G1}) and have 
\begin{eqnarray}
    \label{G2}
    \frac{\partial}{\partial\boldsymbol{\alpha}^\top}\left[ (\boldsymbol{\Omega}_n^{(1)}(\boldsymbol{\alpha})+\lambda_n \boldsymbol{D}_1(\boldsymbol{\alpha}))f(\boldsymbol{\alpha}) \right]=\frac{\partial}{\partial\boldsymbol{\alpha}^\top}[\boldsymbol{v}_n^{(1)}(\boldsymbol{\alpha})].
\end{eqnarray}
Since 
\begingroup
\allowdisplaybreaks
\begin{eqnarray*}
\boldsymbol{v}_n(\boldsymbol{\alpha})&=&\dot{\ell}_n(\boldsymbol{\alpha}|\widetilde{\boldsymbol{\Lambda}})+\boldsymbol{\Omega}_n(\boldsymbol{\alpha})\begin{pmatrix}
        \boldsymbol{\alpha}\\
        0
    \end{pmatrix}\\
    &=&\dot{\ell}_n(\boldsymbol{\alpha}|\widetilde{\boldsymbol{\Lambda}})+\begin{pmatrix}
        \boldsymbol{\Omega}_n^{(1)}(\boldsymbol{\alpha})&\boldsymbol{\Omega}_n^{(12)}(\boldsymbol{\alpha})\\
        \boldsymbol{\Omega}_n^{(21)}(\boldsymbol{\alpha})&
        \boldsymbol{\Omega}_n^{(2)}(\boldsymbol{\alpha})
    \end{pmatrix}\begin{pmatrix}
        \boldsymbol{\alpha}\\
        0
    \end{pmatrix}\\
    &=&\dot{\ell}_n(\boldsymbol{\alpha}|\widetilde{\boldsymbol{\Lambda}})+\begin{pmatrix}
        \boldsymbol{\Omega}_n^{(1)}(\boldsymbol{\alpha})\boldsymbol{\alpha}\\
        \boldsymbol{\Omega}_n^{(21)}(\boldsymbol{\alpha})\boldsymbol{\alpha}
    \end{pmatrix},
\end{eqnarray*}
\endgroup
then, $\boldsymbol{v}_n^{(1)}(\boldsymbol{\alpha})=\dot{\ell}_n^{(1)}(\boldsymbol{\alpha}|\widetilde{\boldsymbol{\Lambda}})+\boldsymbol{\Omega}_n^{(1)}(\boldsymbol{\alpha})\boldsymbol{\alpha}$, and by Lemma~\ref{lemma2}, we have
\begin{eqnarray}
\label{G3}
    \frac{\partial \boldsymbol{v}_n^{(1)}(\boldsymbol{\alpha})}{\partial \boldsymbol{\alpha}^\top}&=&-\boldsymbol{\Omega}_n^{(1)}(\boldsymbol{\alpha})+\boldsymbol{\Omega}_n^{(1)}(\boldsymbol{\alpha})\textbf{I}_{q_n}+
    \begin{pmatrix}
        \boldsymbol{\alpha}^\top &\ldots&0\\
        \vdots&\ddots&\vdots\\
        0&\ldots&\boldsymbol{\alpha}^\top
    \end{pmatrix}
    \begin{pmatrix}
        \left( \frac{\partial \omega_{1}^\top(\boldsymbol{\alpha})}{\partial \boldsymbol{\alpha}} \right)^\top\\
        \vdots\\
        \left( \frac{\partial \omega_{q_{n}}^\top(\boldsymbol{\alpha})}{\partial \boldsymbol{\alpha}} \right)^\top
    \end{pmatrix}\nonumber\\
    &=&\begin{pmatrix}
        \boldsymbol{\alpha}^\top &\ldots&0\\
        \vdots&\ddots&\vdots\\
        0&\ldots&\boldsymbol{\alpha}^\top
    \end{pmatrix}
    \begin{pmatrix}
        \left( \frac{\partial \omega_{1}^\top(\boldsymbol{\alpha})}{\partial \boldsymbol{\alpha}} \right)^\top\\
        \vdots\\
        \left( \frac{\partial \omega_{q_{n}}^\top(\boldsymbol{\alpha})}{\partial \boldsymbol{\alpha}} \right)^\top
    \end{pmatrix}.
\end{eqnarray}
By applying Lemma~\ref{lemma2} to the left-hand-side of (\ref{G2}), we obtain
\begingroup
\allowdisplaybreaks
\begin{eqnarray}
    \label{G4}
    \frac{\partial}{\partial\boldsymbol{\alpha}^\top}\left[ (\boldsymbol{\Omega}_n^{(1)}(\boldsymbol{\alpha})+\lambda_n \boldsymbol{D}_1(\boldsymbol{\alpha}))f(\boldsymbol{\alpha}) \right]&=&(\boldsymbol{\Omega}_n^{(1)}(\boldsymbol{\alpha})+\lambda_n \boldsymbol{D}_1(\boldsymbol{\alpha}))\frac{\partial}{\partial\boldsymbol{\alpha}^\top}f(\boldsymbol{\alpha})\nonumber\\
    &+&\begin{pmatrix}
        f^\top(\boldsymbol{\alpha})&\ldots&0\\
        \vdots&\ddots&\vdots\\
        0&\ldots&f^\top(\boldsymbol{\alpha})
    \end{pmatrix}\\
    &&\left[\begin{pmatrix}
        \left( \frac{\partial \boldsymbol{\omega}_1^\top (\boldsymbol{\alpha})}{\partial \boldsymbol{\alpha}}\right)^\top\\
        \vdots\\
        \left( \frac{\partial \boldsymbol{\omega}_{q_n}^\top (\boldsymbol{\alpha})}{\partial \boldsymbol{\alpha}}\right)^\top
    \end{pmatrix}+\lambda_n\begin{pmatrix}
        \left( \frac{\partial \boldsymbol{d}_1^\top (\boldsymbol{\alpha})}{\partial \boldsymbol{\alpha}}\right)^\top\\
        \vdots\\
        \left( \frac{\partial \boldsymbol{d}_{q_n}^\top (\boldsymbol{\alpha})}{\partial \boldsymbol{\alpha}}\right)^\top
    \end{pmatrix}\right]\nonumber.
\end{eqnarray}
\endgroup
Since 
\begin{eqnarray*}
    \frac{\partial \boldsymbol{d}_j^\top}{\partial \boldsymbol{\alpha}}=\begin{pmatrix}
        0&\ldots&0&0&0&\ldots&0\\
        &\vdots&&\vdots&&\vdots&\\
        0&\ldots&0&-2\alpha_j^{-3}&0&\ldots&0\\
        &\vdots&&\vdots&&\vdots&\\0&\ldots&0&0&0&\ldots&0
    \end{pmatrix},
\end{eqnarray*}
\begin{eqnarray*}
    f^\top(\boldsymbol{\alpha})\left( \frac{\partial \boldsymbol{d}_j^\top}{\partial\boldsymbol{\alpha}} \right)^\top=(0,\ldots,0,-2f_j(\boldsymbol{\alpha})\boldsymbol{\alpha}_j^{-3},0,\ldots,0),
\end{eqnarray*}
then we have
\begin{eqnarray*}
    \begin{pmatrix}
        f^\top(\boldsymbol{\alpha})&\ldots&0\\
        \vdots&\ddots&\vdots\\
        0&\ldots&f^\top(\boldsymbol{\alpha})
    \end{pmatrix}\begin{pmatrix}
        \left( \frac{\partial \boldsymbol{d}_1^\top (\boldsymbol{\alpha})}{\partial \boldsymbol{\alpha}}\right)^\top\\
        \vdots\\
        \left( \frac{\partial \boldsymbol{d}_{q_n}^\top (\boldsymbol{\alpha})}{\partial \boldsymbol{\alpha}}\right)^\top
    \end{pmatrix}=\text{diag}(-2f_1(\boldsymbol{\alpha})\boldsymbol{\alpha}_1^{-3},\ldots,-2f_{q_n}(\boldsymbol{\alpha})\boldsymbol{\alpha}_{q_n}^{-3}).
\end{eqnarray*}
By (\ref{G3}) and (\ref{G4}), (\ref{G2}) becomes
\begin{eqnarray*}
\label{G5}
    &&\left(\boldsymbol{\Omega}_n^{(1)}(\boldsymbol{\alpha})+\lambda_n \boldsymbol{D}_1(\boldsymbol{\alpha}) \right)\frac{\partial}{\partial\boldsymbol{\alpha}^\top}f(\boldsymbol{\alpha})+\lambda_n\text{diag}(-2f_1(\boldsymbol{\alpha})\boldsymbol{\alpha}_1^{-3},\ldots,-2f_{q_n}(\boldsymbol{\alpha})\boldsymbol{\alpha}_{q_n}^{-3})\nonumber\\
    &&+\begin{pmatrix}
        (f(\boldsymbol{\alpha})-\boldsymbol{\alpha})^\top&\ldots&0\\
        \vdots&\ddots&\vdots\\
        0&\ldots&(f(\boldsymbol{\alpha})-\boldsymbol{\alpha})^\top
    \end{pmatrix}\begin{pmatrix}
        \left( \frac{\partial \boldsymbol{\omega}_1^\top (\boldsymbol{\alpha})}{\partial \boldsymbol{\alpha}}\right)^\top\\
        \vdots\\
        \left( \frac{\partial \boldsymbol{\omega}_{q_n}^\top (\boldsymbol{\alpha})}{\partial \boldsymbol{\alpha}}\right)^\top
    \end{pmatrix}=0.
\end{eqnarray*}
Denote $\dot{f}(\boldsymbol{\alpha})=\frac{\partial f(\boldsymbol{\alpha})}{\partial \boldsymbol{\alpha}^\top}$ (which is a $q_n\times q_n$ matrix) and
\begin{eqnarray*}
    \begin{pmatrix}
        (f(\boldsymbol{\alpha})-\boldsymbol{\alpha})^\top&\ldots&0\\
        \vdots&\ddots&\vdots\\
        0&\ldots&(f(\boldsymbol{\alpha})-\boldsymbol{\alpha})^\top
    \end{pmatrix}\left( \frac{\partial\boldsymbol{\Omega}_n^{(1)}(\boldsymbol{\alpha})}{\partial \boldsymbol{\alpha}}\right)^\top=\boldsymbol{F}_n(\boldsymbol{\alpha})\boldsymbol{P}_n(\boldsymbol{\alpha}).
\end{eqnarray*}
Then, we have
\begin{eqnarray*}
    &&\left(\boldsymbol{\Omega}_n^{(1)}(\boldsymbol{\alpha})+\lambda_n \boldsymbol{D}_1(\boldsymbol{\alpha}) \right)\dot{f}(\boldsymbol{\alpha})+\lambda_n\text{diag}(-2f_1(\boldsymbol{\alpha})\alpha_1^{{-3}},\ldots,-2f_{q_n}(\boldsymbol{\alpha})\alpha_{q_n}^{{-3}})\nonumber\\
    &&+\boldsymbol{F}_n(\boldsymbol{\alpha})\boldsymbol{P}_n(\boldsymbol{\alpha})=0,
\end{eqnarray*}
or 
\begin{eqnarray}
\label{G6}
    \left(\boldsymbol{\Omega}_n^{(1)}(\boldsymbol{\alpha})+\lambda_n \boldsymbol{D}_1(\boldsymbol{\alpha}) \right)\dot{f}(\boldsymbol{\alpha})&=&2\lambda_n\text{diag}(f_1(\boldsymbol{\alpha})\alpha_1^{-3},\ldots,f_{q_n}(\boldsymbol{\alpha})\alpha_{q_n}^{-3})\nonumber\\
    &-&\boldsymbol{F}_n(\boldsymbol{\alpha})\boldsymbol{P}_n(\boldsymbol{\alpha}).
\end{eqnarray}
Dividing both sides of (\ref{G6}) by $n$, we have
\begin{eqnarray*}
\left(\frac{\boldsymbol{\Omega}_n^{(1)}(\boldsymbol{\alpha})}{n}+\frac{\lambda_n}{n}\boldsymbol{D}_1(\boldsymbol{\alpha})\right)\dot{f}(\boldsymbol{\alpha})&=&2(\lambda_n/n)\text{diag}(f_1(\boldsymbol{\alpha})\alpha_1^{-3},\ldots,f_{q_n}(\boldsymbol{\alpha})\alpha_{q_n}^{-3})\nonumber\\
&-&\boldsymbol{F}_n(\boldsymbol{\alpha})\boldsymbol{P}_n(\boldsymbol{\alpha})/n,
\end{eqnarray*}
and therefore
\begin{eqnarray}
\label{G7}
    &&\sup_{\boldsymbol{\alpha}\in H_{n1}}\left\lVert \left( \frac{\boldsymbol{\Omega}_n^{(1)}(\boldsymbol{\alpha})}{n}+\frac{\lambda_n}{n}\boldsymbol{D}_1(\boldsymbol{\alpha})\right)\dot{f}(\boldsymbol{\alpha})\right\rVert\nonumber\\
    &&=\sup_{\boldsymbol{\alpha}\in H_{n1}}\left[\frac{2 \lambda_n}{n}\left\lVert \text{diag}(f_1(\boldsymbol{\alpha})\alpha_1^{-3},\ldots,f_{q_n}(\boldsymbol{\alpha})\alpha_{q_n}^{-3})-\frac{\boldsymbol{F}_n(\boldsymbol{\alpha})\boldsymbol{P}_n(\boldsymbol{\alpha})}{n} \right\rVert\right].
\end{eqnarray}
First, we show that the right-hand-side of (\ref{G7}) is $o_p(1)$, which is equivalent to showing 
\begin{eqnarray}
    \label{G8}
    \sup_{\boldsymbol{\alpha}\in H_{n1}}\left[(2\lambda_n/n)\left\lVert \text{diag}(f_1(\boldsymbol{\alpha})\boldsymbol\alpha_1^{-3},\ldots,f_{q_n}(\boldsymbol{\alpha})\alpha_{q_n}^{-3})\right\rVert\right]=o_p(1)
\end{eqnarray}
and 
\begin{eqnarray}
    \label{G9}
    \sup_{\boldsymbol{\alpha}\in H_{n1}} \left\lVert  \frac{
    F_n(\boldsymbol{\alpha})
    P_n(\boldsymbol{\alpha})}{n}\right\rVert=o_p(1).
\end{eqnarray}
To show (\ref{G8}), since 
\begin{eqnarray*}
    \left\lVert \text{diag}(f_1(\boldsymbol{\alpha})\boldsymbol{\alpha}_1^{-3},\ldots,f_{q_n}(\boldsymbol{\alpha})\boldsymbol{\alpha}_{q_n}^{-3})\right\rVert=\max_{1\leq j\leq q_n}\left\{ |f_j(\boldsymbol{\alpha})\boldsymbol{\alpha}_j^{-3}|\right\},
\end{eqnarray*}
by (C7), $a_0\leq |\beta_{0s1,j}|\leq a_1$, $1\leq j \leq q_n$, then, when $\boldsymbol{\alpha}\in H_{n1}$, we have $|\alpha_j-\beta_{0s1,j}|\leq \delta \sqrt{p_n/n}$. 
Thus, when $n$ is large enough, we have
\begin{eqnarray*}
    |\alpha_j|\geq |\beta_{0s1,j}|-\delta\sqrt{p_n/n}\geq |\beta_{0s1,j}|-\frac{1}{2}|\beta_{0s1,j}|=\frac{1}{2}|\beta_{0s1,j}|\geq a_0/2,
\end{eqnarray*}
we obtain $|\alpha_j^{-3}|\leq (a_0/2)^{-3}$.

By 
\begin{eqnarray*}
    \sup_{\boldsymbol{\alpha}\in H_{n1}}\left\lVert f(\boldsymbol{\alpha})-\boldsymbol{\beta}_{0s1} \right\rVert\leq O_p\left(\sqrt{p_n/n} \right),
\end{eqnarray*}
which has been shown before, we have
\begin{eqnarray*}
    \sup_{\boldsymbol{\alpha}\in H_{n1}}\left\lVert f_j(\boldsymbol{\alpha})-\boldsymbol{\beta}_{0s1,j} \right\rVert \leq O_p\left(\sqrt{p_n/n} \right)=o_p(1).
\end{eqnarray*}
Thus, we obtain 
\begin{eqnarray*}
    \sup_{\boldsymbol{\alpha}\in H_{n1}}|f_j(\boldsymbol{\alpha})|\leq |\beta_{0s1,j}|+o_p(1)\leq a_1+o_p(1).
\end{eqnarray*}
Hence
\begin{eqnarray*}
    \sup_{\boldsymbol{\alpha}\in H_{n1}}|f_j(\boldsymbol{\alpha})\alpha_j^{-3}|\leq (a_1+o_p(1))(a_0/2)=O_p(1),
\end{eqnarray*}
and
\begin{eqnarray*}
    \max_{1\leq j\leq q_n}\left\{|f_j(\boldsymbol{\alpha})\alpha_j^{-3}| \right\}=o_p(1).
\end{eqnarray*}
Since $\lambda_n/n\longrightarrow 0$, then
\begin{eqnarray}
\label{extra}
    \sup_{\boldsymbol{\alpha}\in H_{n1}}\left[(2\lambda_n/n)\left\lVert \text{diag}(f_1(\boldsymbol{\alpha})\boldsymbol\alpha_1^{-3},\ldots,f_{q_n}(\boldsymbol{\alpha})\alpha_{q_n}^{-3})\right\rVert\right]\leq (\lambda_n/n)\cdot O_p(1)=o_p(1),
\end{eqnarray}
which implies that (\ref{G8}) holds.

Now, we prove (\ref{G9}). Since $\left\lVert \boldsymbol{F}_n(\boldsymbol{\alpha})\boldsymbol{P}_n(\boldsymbol{\alpha}) \right\rVert \leq  \left\lVert \boldsymbol{F}_n(\boldsymbol{\alpha}) \right\rVert \left\lVert \boldsymbol{P}_n(\boldsymbol{\alpha}) \right\rVert$, one can write
\begin{eqnarray*}
    \boldsymbol{F}_n(\boldsymbol{\alpha})\boldsymbol{F}_n^\top(\boldsymbol{\alpha})=\begin{pmatrix}
        \left\lVert f_n(\boldsymbol{\alpha})-\boldsymbol{\alpha} \right\rVert^2 & \ldots & 0\\
        \vdots&\ddots&\vdots\\
        0&\ldots& \left\lVert f_n(\boldsymbol{\alpha})-\boldsymbol{\alpha} \right\rVert^2
    \end{pmatrix},
\end{eqnarray*}
then $\left\lVert \boldsymbol{F}_n(\boldsymbol{\alpha})\boldsymbol{F}_n^\top(\boldsymbol{\alpha})\right\rVert=\lambda_{\text{max}}(\boldsymbol{F}_n(\boldsymbol{\alpha})\boldsymbol{F}_n^\top(\boldsymbol{\alpha}))=\left\lVert f(\boldsymbol{\alpha})-\boldsymbol{\alpha} \right\rVert^{2}$, thus
\begin{eqnarray*}
    \left\lVert \boldsymbol{F}_n(\boldsymbol{\alpha}) \right\rVert&=&\sqrt{ \left\lVert \boldsymbol{F}_n(\boldsymbol{\alpha}) \boldsymbol{F}_n^\top(\boldsymbol{\alpha})\right\rVert}=\sqrt{\left\lVert \boldsymbol{f}(\boldsymbol{\alpha})-\boldsymbol{\alpha}\right\rVert^2}=\left\lVert \boldsymbol{f}(\boldsymbol{\alpha})-\boldsymbol{\alpha}\right\rVert\nonumber\\
    &\leq& \left\lVert \boldsymbol{f}(\boldsymbol{\alpha})-\boldsymbol{\beta}_{0s1}\right\rVert+\left\lVert \boldsymbol{\alpha}-\boldsymbol{\beta}_{0s1}\right\rVert.
\end{eqnarray*}
Since
\begin{eqnarray*}
    \sup_{\boldsymbol{\alpha}\in H_{n1}}\left\lVert f(\boldsymbol{\alpha})-\boldsymbol{\alpha} \right\rVert&\leq& \sup_{\boldsymbol{\alpha}\in H_{n1}}  \left\lVert f_(\boldsymbol{\alpha})-\boldsymbol{\beta}_{0s1}\right\rVert+\sup_{\boldsymbol{\alpha}\in H_{n1}}\left\lVert \boldsymbol{\alpha}-\boldsymbol{\beta}_{0s1}\right\rVert\\
    &=&O_p\left(\sqrt{p_n/n} \right)+\delta\left(\sqrt{p_n/n} \right)\\
    &=&O_p\left(\sqrt{p_n/n} \right),
\end{eqnarray*}
therefore
\begin{eqnarray}
    \label{G10}
    \sup_{\boldsymbol{\alpha}\in H_{n1}}\left\lVert \boldsymbol{F}_n(\boldsymbol{\alpha}) \right\rVert=O_p\left(\sqrt{p_n/n} \right).
\end{eqnarray}
On the other hand, we have
\begin{eqnarray*}
    \frac{\boldsymbol{P}_n^\top(\boldsymbol{\alpha})\boldsymbol{P}_n(\boldsymbol{\alpha})}{n^2}=\sum_{j=1}^{q_n}\left(\frac{1}{n}\frac{\partial \omega_j^\top(\boldsymbol{\alpha})}{\partial \boldsymbol{\alpha}}\right)\left(\frac{1}{n}\frac{\partial \omega_j^\top(\boldsymbol{\alpha})}{\partial \boldsymbol{\alpha}}\right)^\top.
\end{eqnarray*}
Therefore, we obtain
\begin{eqnarray*}
    \left\lVert \frac{\boldsymbol{P}_n^\top(\boldsymbol{\alpha})\boldsymbol{P}_n(\boldsymbol{\alpha})}{n^2} \right\rVert&\leq& \sum_{j=1}^{q_n}\left\lVert\left(\frac{1}{n}\frac{\partial \omega_j^\top(\boldsymbol{\alpha})}{\partial \boldsymbol{\alpha}}\right)\left(\frac{1}{n}\frac{\partial \omega_j^\top(\boldsymbol{\alpha})}{\partial \boldsymbol{\alpha}}\right)^\top\right\rVert\\
    &=&\sum_{j=1}^{q_n}\lambda_{\text{max}}\left[\left(\frac{1}{n}\frac{\partial \omega_j^\top(\boldsymbol{\alpha})}{\partial \boldsymbol{\alpha}}\right)\left(\frac{1}{n}\frac{\partial \omega_j^\top(\boldsymbol{\alpha})}{\partial \boldsymbol{\alpha}}\right)^\top\right].
\end{eqnarray*}
Since the trace of a symmetric matrix is equal to the sum of its eigenvalues, we obtain 
\begin{eqnarray*}
    \left\lVert \frac{\boldsymbol{P}_n^\top(\boldsymbol{\alpha})\boldsymbol{P}_n(\boldsymbol{\alpha})}{n^2} \right\rVert&\leq&\sum_{j=1}^{q_n}\text{trace}\left[\left(\frac{1}{n}\frac{\partial \omega_j^\top(\boldsymbol{\alpha})}{\partial \boldsymbol{\alpha}}\right)\left(\frac{1}{n}\frac{\partial \omega_j^\top(\boldsymbol{\alpha})}{\partial \boldsymbol{\alpha}}\right)^\top\right]\\
    &=&\sum_{j=1}^{q_n}\sum_{k=1}^{q_n}\sum_{h=1}^{q_n}\left(\frac{1}{n}\frac{\partial \omega _{jk}(\boldsymbol{\alpha})}{\partial\boldsymbol{\alpha}_h} \right)^2.
\end{eqnarray*}
Noticing  
\begin{eqnarray*}
    \omega_{j}^\top(\boldsymbol{\alpha})=\left( \frac{\partial^2 [\sum_{i=1}^n \log f_n(v_{ni},(\boldsymbol{\alpha}^\top, \boldsymbol{0}^\top),\widetilde{\boldsymbol{\Lambda}})]}{\partial \alpha_j \partial\alpha_1},\ldots,\frac{\partial^2 [\sum_{i=1}^n \log f_n(v_{ni},(\boldsymbol{\alpha}^\top, \boldsymbol{0}^\top),\widetilde{\boldsymbol{\Lambda}})]}{\partial \alpha_j \partial\alpha_{q_n}}\right),
\end{eqnarray*}
by Cauchy-Schwarz inequality and condition (C9), we have 
\begin{eqnarray*}
    \left[\frac{1}{n}\frac{\partial \omega_{jk}(\boldsymbol{\alpha})}{\partial \boldsymbol{\alpha}_h} \right]^2&=&\left[\frac{1}{n}\frac{\partial^3[\sum_{i=1}^n \log f_n(v_{ni},(\boldsymbol{\alpha}^\top, \boldsymbol{0}^\top),\widetilde{\boldsymbol{\Lambda}})]}{\partial \alpha_j \partial\alpha_k\partial \alpha_h} \right]^2\\
    &=&\frac{1}{n^2}\left[\sum_{i=1}^n\frac{\partial^3 [ \log f_n(v_{ni},(\boldsymbol{\alpha}^\top, \boldsymbol{0}^\top),\widetilde{\boldsymbol{\Lambda}})]}{\partial \alpha_j \partial\alpha_k\partial \alpha_h} \right]^2\\
    &\leq& \frac{n}{n^2}\sum_{i=1}^n\left[\frac{\partial^3 [\log f_n(v_{ni},(\boldsymbol{\alpha}^\top, \boldsymbol{0}^\top),\widetilde{\boldsymbol{\Lambda}})]}{\partial \alpha_j \partial\alpha_k\partial \alpha_h} \right]^2\\
    &\leq& \frac{1}{n}\sum_{i=1}^n {M^2_n}_{jkh}(v_{ni}).
\end{eqnarray*}
Hence
\begin{eqnarray*}
    \sup_{\boldsymbol{\alpha}\in H_{n1}}\left\lVert \frac{\boldsymbol{P}_n^\top(\boldsymbol{\alpha})\boldsymbol{P}_n(\boldsymbol{\alpha})}{n^2} \right\rVert\leq \frac{1}{n}\sum_{j=1}^{q_n}\sum_{k=1}^{q_n}\sum_{h=1}^{q_n}\sum_{i=1}^{n}{M^2_n}_{jkh}(v_{ni}).
\end{eqnarray*}
Since (C9) indicates $E_{(\boldsymbol{\beta},\boldsymbol{\Lambda})}\left\{{M^2_n}_{jkh}(v_{ni})\right\}<M_d<\infty$, we have
\begin{eqnarray*}
    E_{(\boldsymbol{\beta},\boldsymbol{\Lambda})}\left[\frac{1}{n}\sum_{j=1}^{q_n}\sum_{k=1}^{q_n}\sum_{h=1}^{q_n}{M^2_n}_{jkh}(v_{ni})\right]\leq M_{d}q_{n}^3,
\end{eqnarray*}
which implies $\sum_{j=1}^{q_n}\sum_{k=1}^{q_n}\sum_{h=1}^{q_n}{M^2_n}_{jkh}(v_{ni})/n=O_p(q_n^3)$. 
As a result, we obtain that
\begin{eqnarray}
\label{G11}
    \sup_{\boldsymbol{\alpha}\in H_{n1}}\left\lVert \frac{\boldsymbol{P}_n^\top(\boldsymbol{\alpha})\boldsymbol{P}_n(\boldsymbol{\alpha})}{n^2} \right\rVert=O_p(q_n^3).
\end{eqnarray}
Finally, by (\ref{G10}) and (\ref{G11}), we obtain
\begin{eqnarray*}
    \sup_{\boldsymbol{\alpha}\in H_{n1}}\left\lVert \boldsymbol{F}_n(\boldsymbol{\alpha})\boldsymbol{P}_n(\boldsymbol{\alpha})/n\right\rVert&\leq& O_p\left(\sqrt{p_n/n}q_n^{3/2} \right)=O_p\left(\sqrt{p_n q_n^{3}/n}\right)\\
    &&\leq O_p\left(\sqrt{p_n^2 q_n^{2}/n}\right)=O_p\left(p_n q_n/\sqrt{n}\right).
\end{eqnarray*}
Consequently, by (C6), $p_nq_n/\sqrt{n}\longrightarrow 0$, we have 
\begin{eqnarray*}
    \sup_{\boldsymbol{\alpha}\in H_{n1}}\left\lVert \boldsymbol{F}_n(\boldsymbol{\alpha})\boldsymbol{P}_n(\boldsymbol{\alpha})/n\right\rVert=o_p(1),
\end{eqnarray*}
which means that (\ref{G9}) holds. 

By (\ref{G7}), we have
\begin{eqnarray}
    \label{G12}
    \sup_{\boldsymbol{\alpha}\in H_{n1}}\left\lVert \left( \frac{\boldsymbol{\Omega}_n^{(1)}(\boldsymbol{\alpha})}{n}+\frac{\lambda_n}{n}\boldsymbol{D}_1(\boldsymbol{\alpha})\right)\dot{f}(\boldsymbol{\alpha})\right\rVert=o_p(1).
\end{eqnarray}
Subsequently, we aim to demonstrate that with probability tending to 1,
\begin{eqnarray*}
    \sup_{\boldsymbol{\alpha}\in H_{n1}}\left\lVert \dot{f}(\boldsymbol{\alpha})\right\rVert\longrightarrow 0.
\end{eqnarray*}
Since for any two matrices $\boldsymbol{A}$ and $\boldsymbol{B}$, by the 2-norm properties, we have
\begin{eqnarray*}
    \lambda_{\text{min}}(\boldsymbol{A})\left\lVert \boldsymbol{B}\right\rVert\leq \left\lVert \boldsymbol{AB}\right\rVert\leq \lambda_{\text{max}}(\boldsymbol{A})\left\lVert \boldsymbol{B}\right\rVert.
\end{eqnarray*}
According to (C5), we can conclude that
\begin{eqnarray*}
    \left\lVert\frac{\boldsymbol{\Omega}_n^{(1)}(\boldsymbol{\alpha})}{n}\dot{f}(\boldsymbol{\alpha}) \right\rVert\geq \frac{1}{c_0}\left\lVert \dot{f}(\boldsymbol{\alpha}) \right\rVert.
\end{eqnarray*}

Then by (C7), when $n$ is large enough, $\forall j \in \{ 1,\ldots,q_n\}$,
\begin{eqnarray*}
    |\alpha_j|\geq |\beta_{0s1,j}|-|\alpha_j-\beta_{0s1,j}|\geq |\beta_{0s1,j}|-\frac{a_0}{2}\geq \frac{a_0}{2}>0.
\end{eqnarray*}
Then
\begin{eqnarray*}
    \left\lVert \boldsymbol{D}_1(\boldsymbol{\alpha})\right\rVert=\lambda_{\text{max}}(\boldsymbol{D}_1(\boldsymbol{\alpha}))=\max_{1\leq j\leq q_n}(\alpha_j^{-2})\leq (a_0/2)^{-2},
\end{eqnarray*}
and
\begin{eqnarray*}
    \frac{\lambda_n}{n}\left\lVert \boldsymbol{D}_1(\boldsymbol{\alpha})\dot{f}(\boldsymbol{\alpha})\right\rVert\leq\frac{\lambda_n}{n}\lambda_{\text{max}}(\boldsymbol{D}_1(\boldsymbol{\alpha}))\left\lVert \dot{f}(\boldsymbol{\alpha})\right\rVert \leq  \frac{\lambda_n}{n}(a_0/2)^{-2}\left\lVert \dot{f}(\boldsymbol{\alpha})\right\rVert.
\end{eqnarray*}
Therefore, we have
\begin{eqnarray}
   \label{G13}\left\lVert\left(\frac{\boldsymbol{\Omega}_n^{(1)}(\boldsymbol{\alpha})}{n}+\frac{\lambda_n}{n}\boldsymbol{D}_1(\boldsymbol{\alpha})\right)\dot{f}(\boldsymbol{\alpha}) \right\rVert &\geq& \left\lVert\left(\frac{\boldsymbol{\Omega}_n^{(1)}(\boldsymbol{\alpha})}{n}\right)\dot{f}(\boldsymbol{\alpha}) \right\rVert -\frac{\lambda_n}{n}\left\lVert\boldsymbol{D}_1(\boldsymbol{\alpha})\dot{f}(\boldsymbol{\alpha}) \right\rVert \nonumber\\
   &\geq& \frac{1}{c_0}\left\lVert \dot{f}(\boldsymbol{\alpha})  \right\rVert-\frac{\lambda_n}{n}(a_0/2)^{-2}\left\lVert \dot{f}(\boldsymbol{\alpha})  \right\rVert\nonumber\\
   &=&\left[\frac{1}{c_0}-\frac{\lambda_n}{n}(a_0/2)^{-2} \right]\left\lVert \dot{f}(\boldsymbol{\alpha})\right\rVert. 
\end{eqnarray}
By (\ref{G12}) and (\ref{G13}) we obtain
\begin{eqnarray*}
    o_p(1)\geq \left[\frac{1}{c_0}-\frac{\lambda_n}{n}(\alpha_0/2)^{-2} \right]\sup_{\boldsymbol{\alpha}\in H_{n1}}\left\lVert \dot{f}(\boldsymbol{\alpha})\right\rVert,
\end{eqnarray*}
and $\sup_{\boldsymbol{\alpha}\in H_{n1}}\left\lVert \dot{f}(\boldsymbol{\alpha})\right\rVert=o_p(1)$, which implies that $f(\cdot)$ is a contraction mapping from $H_{n1}$ to itself with probability tending to 1. Hence, according to the contraction mapping theorem, there exists one unique fixed-point $\widehat{\boldsymbol{\alpha}}^*\in H_{n1}$ such that 
\begin{eqnarray}
    \label{A11}\widehat{\boldsymbol{\alpha}}^*=(\boldsymbol{\Omega}_n^{(1)}(\widehat{\boldsymbol{\alpha}}^*)+\lambda_n \boldsymbol{D}_1(\widehat{\boldsymbol{\alpha}}^*))^{-1}\boldsymbol{v}_n^{(1)}(\widehat{\boldsymbol{\alpha}}^*).
\end{eqnarray}
This completes the proof of Lemma~\ref{lemma3}.
\end{proof}

\begin{proof}[Proof of Theorem~\ref{theorem1}] 
(i)
By definition of $\widehat{\boldsymbol{\beta}}^*$ and $\widehat{\boldsymbol{\beta}}^{(m)}$, we know that ${\widehat{\boldsymbol{\beta}}}^*=\lim_{m\rightarrow \infty}\widehat{\boldsymbol{\beta}}^{(m)}$, and ${\widehat{\boldsymbol{\beta}}}_{s2}^*=\lim_{m\rightarrow\infty}\widehat{\boldsymbol{\beta}}_{s2}^{(m)}$. Since $\widehat{\boldsymbol{\beta}}^{(m)}\in H_n$, by Lemma~\ref{lemma1} (ii), 
\begin{eqnarray*}
    \widehat{\boldsymbol{\beta}}_{s2}^{(m)}=\boldsymbol{\gamma}^*(\widehat{\boldsymbol{\beta}}^{(m-1)})< \frac{1}{c_0}\left\lVert \widehat{\boldsymbol{\beta}}_{s2}^{(m-1)} \right\rVert< \ldots< \left(\frac{1}{c_0}\right)^{m}\left\lVert \widehat{\boldsymbol{\beta}}_{s2}^{(0)}\right\rVert.
\end{eqnarray*}
Since $\left(1/c_0\right)^m\rightarrow 0$, $m\rightarrow \infty$, then, $\lim_{m\rightarrow \infty}\widehat{\boldsymbol{\beta}}_{s2}^{(m)}=0$, which implies that $\widehat{\boldsymbol{\beta}}_{s2}^{*}=0$ with probability tending to 1. 
\end{proof}

\begin{proof}[Proof of Theorem~\ref{theorem1}] 
(ii) In Lemma~\ref{lemma3}, we have shown that the following equation
\begin{eqnarray}
    \label{R1}
    \boldsymbol{\alpha}=(\boldsymbol{\Omega}_n^{(1)}(\boldsymbol{\alpha})+\lambda_n\boldsymbol{D}_1(\boldsymbol{\alpha}))^{-1}\boldsymbol{v}_n^{(1)}(\boldsymbol{\alpha})
\end{eqnarray}
has a unique fixed-point $\widehat{\boldsymbol{\alpha}}^*$ in the domain $H_{n1}$
 such that 
 \begin{eqnarray}
     \label{R11}
    \widehat{\boldsymbol{\alpha}}^*=(\boldsymbol{\Omega}_n^{(1)}(\widehat{\boldsymbol{\alpha}}^*)+\lambda_n\boldsymbol{D}_1(\widehat{\boldsymbol{\alpha}}^*))^{-1}\boldsymbol{v}_n^{(1)}(\widehat{\boldsymbol{\alpha}}^*),
 \end{eqnarray}
where 
\begin{eqnarray*}
    \boldsymbol{\Omega}_n^{(1)}(\widehat{\boldsymbol{\alpha}}^*)=\boldsymbol{\Omega}_n^{(1)}(\boldsymbol{\beta})\Big|_{\boldsymbol{\beta}_{s1}=\widehat{\boldsymbol{\alpha}}^*,\boldsymbol{\beta}_{s2}=0},
\end{eqnarray*}
\begin{eqnarray*}
    \boldsymbol{v}_n^{(1)}(\widehat{\boldsymbol{\alpha}}^*)=\boldsymbol{v}_n^{(1)}(\boldsymbol{\beta})\Big|_{\boldsymbol{\beta}_{s1}=\widehat{\boldsymbol{\alpha}}^*,\boldsymbol{\beta}_{s2}=0}.
\end{eqnarray*}
The next part is to show that with probability tending to 1, $\widehat{\boldsymbol{\beta}}_{s1}^*=\widehat{\boldsymbol{\alpha}}^*$, or $P(\widehat{\boldsymbol{\beta}}_{s1}^*=\widehat{\boldsymbol{\alpha}}^*)=1$, i.e., with probability tending to 1, $\widehat{\boldsymbol{\beta}}_{s1}^*$ is the unique fixed-point of (\ref{R1}).

First, by (\ref{A2}),  that is 
\begin{eqnarray*}
\begin{pmatrix}
    \boldsymbol{\alpha}^*(\boldsymbol{\beta}) - \boldsymbol{\beta}_{0s1} \\
    \boldsymbol{\gamma}^*(\boldsymbol{\beta})
\end{pmatrix}
+ \frac{\lambda_n}{n} 
\begin{pmatrix}
    \textbf{A}(\boldsymbol{\beta})\textbf{D}_1(\boldsymbol{\beta}_{s1})\boldsymbol{\alpha}^*(\boldsymbol{\beta}) + \textbf{B}(\boldsymbol{\beta})\textbf{D}_2(\boldsymbol{\beta}_{s2})\boldsymbol{\gamma}^*(\boldsymbol{\beta}) \\
    \textbf{B}^\top(\boldsymbol{\beta}) \textbf{D}_1(\boldsymbol{\beta}_{s1})\boldsymbol{\alpha}^*(\boldsymbol{\beta}) + \textbf{G}(\boldsymbol{\beta})\textbf{D}_2(\boldsymbol{\beta}_{s2})\boldsymbol{\gamma}^*(\boldsymbol{\beta}) 
\end{pmatrix}
= \widehat{\boldsymbol{b}}(\boldsymbol{\beta}) - \boldsymbol{\beta}_0,
\end{eqnarray*}
we obtain 
\begin{eqnarray*}
    \boldsymbol{\gamma}^*(\boldsymbol{\beta})+\frac{\lambda_n}{n}(\textbf{B}^\top(\boldsymbol{\beta}) \textbf{D}_1(\boldsymbol{\beta}_{s1})\boldsymbol{\alpha}^*(\boldsymbol{\beta}) + \textbf{G}(\boldsymbol{\beta})\textbf{D}_2(\boldsymbol{\beta}_{s2})\boldsymbol{\gamma}^*(\boldsymbol{\beta}) )=(\widehat{\boldsymbol{b}}(\boldsymbol{\beta}) - \boldsymbol{\beta}_0)^{(2)}.
\end{eqnarray*}
We want to show that $\lim_{\boldsymbol{\beta}_{s2}\rightarrow 0}\boldsymbol{\gamma}^*(\boldsymbol{\beta})=0$. By Lemma~\ref{lemma1} (ii) when $\boldsymbol{\beta}\in H_n$,
\begin{eqnarray*}
    \left\lVert \boldsymbol{\gamma}^*(\boldsymbol{\beta}) \right\rVert\leq \left\lVert \boldsymbol{\beta}_{s2} \right\rVert. 
\end{eqnarray*}
Therefore, $\lim_{\boldsymbol{\beta}_{s2}\rightarrow 0}\boldsymbol{\gamma}^*(\boldsymbol{\beta})=0$.
By multiplying $(\boldsymbol{\Omega}_n(\boldsymbol{\beta})+\lambda_n \boldsymbol{D}(\boldsymbol{\beta}))$ on both sides of (\ref{A1}), one can get
\begin{eqnarray}
\{\boldsymbol{\Omega}_n(\boldsymbol{\beta}) + \lambda_n \boldsymbol{D}(\boldsymbol{\beta})\}
\begin{pmatrix}
    \boldsymbol{\alpha}^*(\boldsymbol{\beta}) \\
    \boldsymbol{\gamma}^*(\boldsymbol{\beta})
\end{pmatrix} 
= \boldsymbol{v}_n(\boldsymbol{\beta}),
\end{eqnarray}
which can be rewritten as 
\begin{eqnarray*}
\begin{bmatrix}
    \begin{pmatrix}
        \boldsymbol{\Omega}_n^{(1)}(\boldsymbol{\beta})&\boldsymbol{\Omega}_n^{(12)}(\boldsymbol{\beta})\\
        \boldsymbol{\Omega}_n^{(21)}(\boldsymbol{\beta})&\boldsymbol{\Omega}_n^{(2)}(\boldsymbol{\beta})
    \end{pmatrix}+
    \begin{pmatrix}
        \lambda_n\boldsymbol{D}_1(\boldsymbol{\beta}_{s1})&0\\
        0&\lambda_n\boldsymbol{D}_2(\boldsymbol{\beta}_{s2})
    \end{pmatrix}
\end{bmatrix}\begin{pmatrix}\boldsymbol{\alpha}^*(\boldsymbol{\beta})\\
\boldsymbol{\gamma}^*(\boldsymbol{\beta})
    
\end{pmatrix}=\begin{pmatrix}
    \boldsymbol{v}_n^{(1)}(\boldsymbol{\beta})\\
    \boldsymbol{v}_n^{(2)}(\boldsymbol{\beta})
\end{pmatrix}.
\end{eqnarray*}
Consequently,
\begin{eqnarray*}
    \left(\boldsymbol{\Omega}_n^{(1)}(\boldsymbol{\beta})+\lambda_n\boldsymbol{D}_1(\boldsymbol{\beta})\right)\boldsymbol{\alpha}^*(\boldsymbol{\beta})+\boldsymbol{\Omega}_n^{(12)}(\boldsymbol{\beta})\boldsymbol{\gamma}^*(\boldsymbol{\beta})=\boldsymbol{v}_n^{(1)}(\boldsymbol{\beta}).
\end{eqnarray*}
Then, we have
\begin{eqnarray*}
    \boldsymbol{\alpha}^*(\boldsymbol{\beta})=\left( \boldsymbol{\Omega}_n^{(1)}(\boldsymbol{\beta})+\lambda_n\boldsymbol{D}_1(\boldsymbol{\beta})\right)^{-1}\left[\boldsymbol{v}_n^{(1)}(\boldsymbol{\beta})-\boldsymbol{\Omega}_n^{(12)}(\boldsymbol{\beta})\boldsymbol{\gamma}^*(\boldsymbol{\beta}) \right].
\end{eqnarray*}
Since $\lim_{\boldsymbol{\beta}_{s2}\rightarrow 0}\boldsymbol{\gamma}^*(\boldsymbol{\beta})=0$, we have 
\begin{eqnarray*}
    \lim_{\boldsymbol{\beta}_{s2}\rightarrow 0}\left[\boldsymbol{\Omega}_n^{(12)}(\boldsymbol{\beta})\boldsymbol{\gamma}^*(\boldsymbol{\beta})\right]=0,
\end{eqnarray*}
and  
\begin{eqnarray*}
    \lim_{\boldsymbol{\beta}_{s2}\rightarrow 0}\boldsymbol{\alpha}^*(\boldsymbol{\beta})=\left(\boldsymbol{\Omega}_n^{(1)}(\boldsymbol{\beta}_{s1})+\lambda_n\boldsymbol{D}_1(\boldsymbol{\beta}_{s1}) \right)^{-1}\boldsymbol{v}_n^{(1)}(\boldsymbol{\beta}_{s1})=f(\boldsymbol{\beta}_{s1}).
\end{eqnarray*}
Since $\boldsymbol{\alpha}^*(\boldsymbol{\beta})$ is continuous and thus continuous on the compact set $\boldsymbol{\beta}\in H_n$, as $m\rightarrow\infty$, $\widehat{\boldsymbol{\beta}}_{s2}^{(m)}\rightarrow 0$, we obtain
\begin{eqnarray}
\label{R20}
    \eta_m\equiv \sup_{\boldsymbol{\beta}\in H_{n1}}\left\lVert \boldsymbol{\alpha}^*(\boldsymbol{\beta}_{s1},\widehat{\boldsymbol{\beta}}_{s2}^{(m)})-f(\boldsymbol{\beta}_{s1}) \right\rVert \longrightarrow 0.
\end{eqnarray}
Since $f(\cdot)$ is a contract mapping, and $\sup_{\boldsymbol{\alpha}\in H_{n1}}\left\lVert \dot{f}(\boldsymbol{\alpha})\right\rVert \longrightarrow 0$, $n\rightarrow \infty$, then, with probability tending to 1, we have 
\begin{eqnarray*}
    \sup_{\boldsymbol{\alpha}\in H_{n1}}\left\lVert \dot{f}(\boldsymbol{\alpha})\right\rVert \leq \frac{1}{c_3},
\end{eqnarray*}
for some $c_3>1$, and 
\begin{eqnarray*}
    \left\lVert f(\widehat{\boldsymbol{\beta}}_{s1}^{(m)})-\widehat{\boldsymbol{\alpha}}^*\right\rVert =\left\lVert f(\widehat{\boldsymbol{\beta}}_{s1}^{(m)})-f(\widehat{\boldsymbol{\alpha}}^*)\right\rVert\leq \frac{1}{c_3}\left\lVert \widehat{\boldsymbol{\beta}}_{s1}^{(m)}-\widehat{\boldsymbol{\alpha}}^*\right\rVert.
\end{eqnarray*}
Note: $\widehat{\boldsymbol{\beta}}^{(m+1)}=\boldsymbol{\alpha}^*(\widehat{\boldsymbol{\beta}}^{(m)})$, i.e., $\widehat{\boldsymbol{\beta}}^{(m+1)}$ updates $\widehat{\boldsymbol{\beta}}^{(m)}$.
Now, let $h_m=\left\lVert \widehat{\boldsymbol{\beta}}_{s1}^{(m)}-\widehat{\boldsymbol{\alpha}}^* \right\rVert$, then
\begin{eqnarray*}
    h_{m+1}&=&\left\lVert \widehat{\boldsymbol{\beta}}_{s1}^{(m+1)}-\widehat{\boldsymbol{\alpha}}^* \right\rVert=\left\lVert \boldsymbol{\alpha}^*(\widehat{\boldsymbol{\beta}}^{(m)})-\widehat{\boldsymbol{\alpha}}^* \right\rVert\\
    &\leq& \left\lVert \boldsymbol{\alpha}^*(\widehat{\boldsymbol{\beta}}^{(m)})-f(\widehat{\boldsymbol{\beta}}_{s1}^{(m)}) \right\rVert+\left\lVert f(\widehat{\boldsymbol{\beta}}_{s1}^{(m)})-f(\widehat{\boldsymbol{\alpha}}^*) \right\rVert\\
    &\leq& \left\lVert \boldsymbol{\alpha}^*(\widehat{\boldsymbol{\beta}}_{s1}^{(m)},\widehat{\boldsymbol{\beta}}_{s2}^{(m)}) -f(\widehat{\boldsymbol{\beta}}_{s1}^{(m)})\right\rVert+\left\lVert f(\widehat{\boldsymbol{\beta}}_{s1}^{(m)})-f(\widehat{\boldsymbol{\alpha}}^*) \right\rVert\\
    &\leq&\eta_m+\frac{1}{c_3}\left\lVert \widehat{\boldsymbol{\beta}}_{s1}^{(m)}-\widehat{\boldsymbol{\alpha}}^*\right\rVert\\
    &\leq& \eta_m+\frac{1}{c_3} h_m.
\end{eqnarray*}
By (\ref{R20}), for any $\epsilon>0$, there exists an $N>0$ such that for all $m>N$, $\eta_m<\epsilon$. Therefore, for $m>N$, or $m-N>0$, we have
\begin{eqnarray*}
    h_{m+1}&\leq& \frac{1}{c_3}h_m+\eta_m\\
    &\leq& \frac{1}{c_3}(\frac{1}{c_3}h_{m-1}+\eta_{m-1})+\eta_m\\
    &=&\frac{1}{c_3^2}h_{m-1}+\frac{1}{c_3}\eta_{m-1}+\eta_m\\
    &\leq& \frac{h_1}{c_3^m}+\frac{\eta_1}{c_3^{m-1}}+\frac{\eta_2}{c_3^{m-2}}+\cdots+\frac{\eta_N}{c_3^{m-N}}+\frac{\eta_{N+1}}{c_3^{m-(N+1)}}+\cdots+\frac{\eta_{m-1}}{c_3}+\eta_m\\
    &=&\frac{h_1}{c_3^m}+\frac{\eta_1}{c_3^{m-1}}+\frac{\eta_2}{c_3^{m-2}}+\cdots+\frac{\eta_N}{c_3^{m-N}}+\left(\frac{\eta_{N+1}}{c_3^{m-(N+1)}}+\cdots+\frac{\eta_{m-1}}{c_3}+\eta_m \right)\\
    &\leq& (h_1+\eta_1+\cdots+\eta_N)\frac{1}{c_3^{m-N}}+\left( \frac{1}{c_3^{m-(N+1)}}+\cdots+\frac{1}{c_3}+1\right)\epsilon\\
    &=&(h_1+\eta_1+\cdots+\eta_N)\frac{1}{c_3^{m-N}}+\frac{1-(1/c_3)^{m-N}}{1-(1/c_3)},~ \text{(by sum of the geometric series)}
\end{eqnarray*}
Since $1/c_3^{m-N}\rightarrow 0$ and $\frac{1-(1/c_3)^{m-N}}{1-(1/c_3)}\rightarrow\frac{c_3}{c_3-1}\epsilon$, when $m\rightarrow\infty$, there exists $N_0>N$ such that when $m>N_0$,
\begin{eqnarray*}
    (h_1+\eta_1+\cdots+\eta_N)\frac{1}{c_3^{m-N}}<\epsilon,
\end{eqnarray*}
and 
\begin{eqnarray*}
    \frac{1-(1/c_3)^{m-N}}{1-(1/c_3)}<2\frac{c_3}{c_3-1}\epsilon,
\end{eqnarray*}
which implies 
\begin{eqnarray*}
    h_{m+1}<\left(1+\frac{2c_3}{c_3-1} \right)\epsilon=\frac{3c_3-1}{c_3-1}\epsilon.
\end{eqnarray*}
Then, $h_{m+1}\rightarrow 0$
when $m\rightarrow\infty$.
Hence, with probability tending to 1, we have $h_m=\left\lVert \widehat{\boldsymbol{\beta}}_{s1}^{(m)}-\widehat{\boldsymbol{\alpha}}^*\right\rVert\rightarrow 0$ as $m\rightarrow\infty$ because $\widehat{\boldsymbol{\beta}}_{s1}^*=\lim_{m\rightarrow\infty}\widehat{\boldsymbol{\beta}}_{s1}^{(m)}$ and 
\begin{eqnarray*}
    \left\lVert \widehat{\boldsymbol{\beta}}_{s1}^*-\widehat{\boldsymbol{\alpha}}^* \right\rVert \leq \left\lVert \widehat{\boldsymbol{\beta}}_{s1}^*-\widehat{\boldsymbol{\beta}}_{s1}^{(m)} \right\rVert+ \left\lVert \widehat{\boldsymbol{\beta}}_{s1}^{(m)} -\widehat{\boldsymbol{\alpha}}^* \right\rVert\longrightarrow 0, 
\end{eqnarray*}
when $m\rightarrow\infty$. This implies $P(\widehat{\boldsymbol{\beta}}_{s1}^*=\widehat{\boldsymbol{\alpha}}^*)=1$ and the proof of Theorem~\ref{theorem1} (ii) is completed.
\end{proof}
\begin{proof}[Proof of Theorem~\ref{theorem1}] 
(iii). From (\ref{A11}), we have
\begin{eqnarray*}
\widehat{\boldsymbol{\alpha}}^*=(\boldsymbol{\Omega}_n^{(1)}(\widehat{\boldsymbol{\alpha}}^*)+\lambda_n\boldsymbol{D}_1(\widehat{\boldsymbol{\alpha}}^*))^{-1}\boldsymbol{v}_n^{(1)}(\widehat{\boldsymbol{\alpha}}^*)
\end{eqnarray*}
and
\begin{eqnarray*}
    \sqrt{n}(\widehat{\boldsymbol{\alpha}}^*-\boldsymbol{\beta}_{0s1})=\pi_1+\pi_2,
\end{eqnarray*}
where
\begin{eqnarray*}
    \pi_1&\equiv& \sqrt{n}\left[ (\boldsymbol{\Omega}_n^{(1)}(\widehat{\boldsymbol{\alpha}}^*)+\lambda_n\boldsymbol{D}_1(\widehat{\boldsymbol{\alpha}}^*))^{-1}\boldsymbol{\Omega}_n^{(1)}(\widehat{\boldsymbol{\alpha}}^*)-\textbf{I}_{q_n}\right]\boldsymbol{\beta}_{0s1}, \\
    \pi_2&\equiv& \sqrt{n} (\boldsymbol{\Omega}_n^{(1)}(\widehat{\boldsymbol{\alpha}}^*)+\lambda_n\boldsymbol{D}_1(\widehat{\boldsymbol{\alpha}}^*))^{-1}\left( \boldsymbol{v}_n^{(1)}(\widehat{\boldsymbol{\alpha}}^*)-\boldsymbol{\Omega}_n^{(1)}(\widehat{\boldsymbol{\alpha}}^*)\boldsymbol{\beta}_{0s1}\right).
\end{eqnarray*}
Noticing that for any two conformable invertible matrices $\boldsymbol{\zeta}$ and $\boldsymbol{\Psi}$, we have
\begin{eqnarray*}
    (\boldsymbol{\zeta}+\boldsymbol{\Psi})^{-1}=\boldsymbol{\zeta}^{-1}-\boldsymbol{\zeta}^{-1}\boldsymbol{\Psi}(\boldsymbol{\zeta}+\boldsymbol{\Psi})^{-1}.
\end{eqnarray*}
Then
\begin{eqnarray*}
    \left(\boldsymbol{\Omega}_n^{(1)}(\widehat{\boldsymbol{\alpha}}^*)+\lambda_n\boldsymbol{D}_1(\widehat{\boldsymbol{\alpha}}^*) \right)^{-1}&=&\left(\boldsymbol{\Omega}_n^{(1)}(\widehat{\boldsymbol{\alpha}}^*) \right)^{-1}-\lambda_n\left(\boldsymbol{\Omega}_n^{(1)}(\widehat{\boldsymbol{\alpha}}^*) \right)^{-1}\\
    &&\boldsymbol{D}_1(\widehat{\boldsymbol{\alpha}}^*)\left(\boldsymbol{\Omega}_n^{(1)}(\widehat{\boldsymbol{\alpha}}^*)+\lambda_n\boldsymbol{D}_1(\widehat{\boldsymbol{\alpha}}^*) \right)^{-1}.
\end{eqnarray*}
Therefore, we obtain
\begin{eqnarray}
    \label{R25}
    &&(\boldsymbol{\Omega}_n^{(1)}(\widehat{\boldsymbol{\alpha}}^*)+\lambda_n\boldsymbol{D}_1(\widehat{\boldsymbol{\alpha}}^*))^{-1}(\boldsymbol{\Omega}_n^{(1)}(\widehat{\boldsymbol{\alpha}}^*))=\nonumber\\
    &&\textbf{I}_{q_n}-\lambda_n(\boldsymbol{\Omega}_n^{(1)}(\widehat{\boldsymbol{\alpha}}^*))^{-1}\boldsymbol{D}_1(\widehat{\boldsymbol{\alpha}}^*)\left(\boldsymbol{\Omega}_n^{(1)}(\widehat{\boldsymbol{\alpha}}^*)+\lambda_n\boldsymbol{D}_1(\widehat{\boldsymbol{\alpha}}^*) \right)^{-1}\boldsymbol{\Omega}_n^{(1)}(\widehat{\boldsymbol{\alpha}}^*)
\end{eqnarray}
and 
\begin{eqnarray*}
    \pi_1&=&\sqrt{n}\left[ -\lambda_n(\boldsymbol{\Omega}_n^{(1)}(\widehat{\boldsymbol{\alpha}}^*))^{-1}\boldsymbol{D}_1(\widehat{\boldsymbol{\alpha}}^*)(\boldsymbol{\Omega}_n^{(1)}(\widehat{\boldsymbol{\alpha}}^*)+\lambda_n\boldsymbol{D}_1(\widehat{\boldsymbol{\alpha}}^*))^{-1}\boldsymbol{\Omega}_n^{(1)}(\widehat{\boldsymbol{\alpha}}^*)\boldsymbol{\beta}_{0s1}\right]\\
    &=&-\frac{\lambda_n}{\sqrt{n}}\left(\frac{1}{n}\boldsymbol{\Omega}_n^{(1)}(\widehat{\boldsymbol{\alpha}}^*) \right)^{-1}\boldsymbol{D}_1(\widehat{\boldsymbol{\alpha}}^*)\left(\frac{1}{n}\boldsymbol{\Omega}_n^{(1)}(\widehat{\boldsymbol{\alpha}}^*)+\frac{\lambda_n}{n}\boldsymbol{D}_1(\widehat{\boldsymbol{\alpha}}^*)\right)^{-1}\frac{1}{n}\boldsymbol{\Omega}_n^{(1)}(\widehat{\boldsymbol{\alpha}}^*)\boldsymbol{\beta}_{0s1}.
\end{eqnarray*}
By conditions (C5) and (C6), we have 
\begin{eqnarray}
    \label{R30}
    \left\lVert\pi_1 \right\rVert=O_p(\lambda_n\sqrt{q_n/n})\longrightarrow 0.
\end{eqnarray}
Next, we consider $\pi_2$. It follows from (\ref{R25}) and Condition (C6): $\lambda_n/\sqrt{n}\rightarrow 0$, that
\begin{eqnarray*}
    \pi_2&\equiv& \sqrt{n}(\boldsymbol{\Omega}_n^{(1)}(\widehat{\boldsymbol{\alpha}}^*)+\lambda_n\boldsymbol{D}_1(\widehat{\boldsymbol{\alpha}}^*))^{-1}\left(\boldsymbol{v}_n^{(1)}(\widehat{\boldsymbol{\alpha}}^*) -\boldsymbol{\Omega}_n^{(1)}(\widehat{\boldsymbol{\alpha}}^*)\boldsymbol{\beta}_{0s1}\right)\\
    &=&\sqrt{n}\left[
    (\boldsymbol{\Omega}_n^{(1)}(\widehat{\boldsymbol{\alpha}}^*))^{-1}-\lambda_n(\boldsymbol{\Omega}_n^{(1)}(\widehat{\boldsymbol{\alpha}}^*))^{-1}\boldsymbol{D}_1(\widehat{\boldsymbol{\alpha}}^*)(\boldsymbol{\Omega}_n^{(1)}(\widehat{\boldsymbol{\alpha}}^*)+\lambda_n\boldsymbol{D}_1(\widehat{\boldsymbol{\alpha}}^*))^{-1}\right]\\
    &&\left( \boldsymbol{v}_n^{(1)}(\widehat{\boldsymbol{\alpha}}^*)-\boldsymbol{\Omega}_n^{(1)}(\widehat{\boldsymbol{\alpha}}^*)\boldsymbol{\beta}_{0s1}\right)\\
    &=&\sqrt{n}\left[
    \left(\frac{1}{n}\boldsymbol{\Omega}_n^{(1)}(\widehat{\boldsymbol{\alpha}}^*)\right)^{-1}-\frac{\lambda_n}{n}\left(\frac{1}{n}\boldsymbol{\Omega}_n^{(1)}(\widehat{\boldsymbol{\alpha}}^*)\right)^{-1}\boldsymbol{D}_1(\widehat{\boldsymbol{\alpha}}^*)\left(\frac{1}{n}\boldsymbol{\Omega}_n^{(1)}(\widehat{\boldsymbol{\alpha}}^*)\right)+\frac{\lambda_n}{n}\boldsymbol{D}_1(\widehat{\boldsymbol{\alpha}}^*))^{-1}\right]\\
    &&\left( \frac{1}{n}\boldsymbol{v}_n^{(1)}(\widehat{\boldsymbol{\alpha}}^*)-\frac{1}{n}\boldsymbol{\Omega}_n^{(1)}(\widehat{\boldsymbol{\alpha}}^*)\boldsymbol{\beta}_{0s1}\right).
\end{eqnarray*}
By Condition (C6), $\lambda_n/n=(\lambda_n/\sqrt{n})(1/\sqrt{n})=o(1)\cdot(1/\sqrt{n})=o(1/\sqrt{n})$, we have 
\begin{eqnarray*}
    \pi_2=\sqrt{n}\left[\left(\frac{1}{n}\boldsymbol{\Omega}_n^{(1)}(\widehat{\boldsymbol{\alpha}}^*)\right)^{-1}-o_p(1/\sqrt{n})
    \right]\left( \frac{1}{n}\boldsymbol{v}_n^{(1)}(\widehat{\boldsymbol{\alpha}}^*)-\frac{1}{n}\boldsymbol{\Omega}_n^{(1)}(\widehat{\boldsymbol{\alpha}}^*)\boldsymbol{\beta}_{0s1}\right).
\end{eqnarray*}
Using the first-order Taylor expansion on
\begin{eqnarray*}
\boldsymbol{v}_n(\widehat{\boldsymbol{\alpha}}^*)=\boldsymbol{v}_n(\boldsymbol{\beta})\Big|_{\boldsymbol{\beta}_{s1}=\widehat{\boldsymbol{\alpha}}^*,\boldsymbol{\beta}_{s2}=0}=\dot{\ell}_n(\widehat{\boldsymbol{\alpha}}^*|\widetilde{\boldsymbol{\Lambda}})-\Ddot{\ell}_n(\widehat{\boldsymbol{\alpha}}^*|\widetilde{\boldsymbol{\Lambda}})\begin{pmatrix}
       \widehat{\boldsymbol{\alpha}}^*\\
       0
   \end{pmatrix},
\end{eqnarray*}
we obtain
\begin{eqnarray*}
    \boldsymbol{v}_n^{(1)}(\widehat{\boldsymbol{\alpha}}^*)&=&\dot{\ell}_n^{(1)}(\widehat{\boldsymbol{\alpha}}^*|\widetilde{\boldsymbol{\Lambda}})+\boldsymbol{\Omega}_n^{(1)}(\widehat{\boldsymbol{\alpha}}^*)\widehat{\boldsymbol{\alpha}}^*\\
    &=&\dot{\ell}_n^{(1)}(\boldsymbol{\beta}_{0s1}|\widetilde{\boldsymbol{\Lambda}})+\Ddot{\ell}_n(\widetilde{\boldsymbol{\alpha}}^*|\widetilde{\boldsymbol{\Lambda}})(\widehat{\boldsymbol{\alpha}}^*-\boldsymbol{\beta}_{0s1})+\boldsymbol{\Omega}_n^{(1)}(\widehat{\boldsymbol{\alpha}}^*)\widehat{\boldsymbol{\alpha}}^*,
\end{eqnarray*}
where $\widetilde{\boldsymbol{\alpha}}^*$ is between $\widehat{\boldsymbol{\alpha}}^*$ and $\boldsymbol{\beta}_{0s1}$, $\left\lVert \widetilde{\boldsymbol{\alpha}}^*-\boldsymbol{\beta}_{0s1}\right\rVert=o_p(1)$, and $\left\lVert \widetilde{\boldsymbol{\alpha}}^*-\widehat{\boldsymbol{\alpha}}^*\right\rVert=o_p(1)$. By Condition (C4), we have 
\begin{eqnarray*}
    \frac{1}{n}\boldsymbol{\Omega}_n^{(1)}(\widehat{\boldsymbol{\alpha}}^*)- \frac{1}{n}\boldsymbol{\Omega}_n^{(1)}(\widetilde{\boldsymbol{\alpha}}^*)=o_p(1),
\end{eqnarray*}
then 
\begin{eqnarray*}
    &&\frac{1}{n}\boldsymbol{v}_n^{(1)}(\widehat{\boldsymbol{\alpha}}^*)-\frac{1}{n}\boldsymbol{\Omega}_n^{(1)}(\widehat{\boldsymbol{\alpha}}^*)\boldsymbol{\beta}_{0s1}\\
    &&=\frac{1}{n}\dot{\ell}_n^{(1)}(\boldsymbol{\beta}_{0s1}|\widetilde{\boldsymbol{\Lambda}})-\left(-\frac{1}{n}\Ddot{\ell}_n^{(1)}(\widetilde{\boldsymbol{\alpha}}^*|\widetilde{\boldsymbol{\Lambda}})\right)(\widehat{\boldsymbol{\alpha}}^*-\boldsymbol{\beta}_{0s1})+\left(\frac{1}{n}\boldsymbol{\Omega}_n^{(1)}(\widehat{\boldsymbol{\alpha}}^*) \right)(\widehat{\boldsymbol{\alpha}}^*-\boldsymbol{\beta}_{0s1})\\
    &&=\frac{1}{n}\dot{\ell}_n^{(1)}(\boldsymbol{\beta}_{0s1}|\widetilde{\boldsymbol{\Lambda}})+\left(\frac{1}{n}\boldsymbol{\Omega}_n^{(1)}(\widehat{\boldsymbol{\alpha}}^*)-\frac{1}{n}\boldsymbol{\Omega}_n^{(1)}(\widetilde{\boldsymbol{\alpha}}^*)\right)(\widehat{\boldsymbol{\alpha}}^*-\boldsymbol{\beta}_{0s1})\\
    &&=\frac{1}{n}\dot{\ell}_n^{(1)}(\boldsymbol{\beta}_{0s1}|\widetilde{\boldsymbol{\Lambda}})+o_p(1).
\end{eqnarray*}
Hence, we have 
\begingroup
\allowdisplaybreaks
\begin{eqnarray*}
    \sqrt{n}(\widehat{\boldsymbol{\alpha}}^*-\boldsymbol{\beta}_{0s1})&=&\pi_2+\pi_1\\
    &=&\sqrt{n}\left[(I^{(1)}(\boldsymbol{\beta}_{0s1}))^{-1}+o_p(1)-o_p(1/\sqrt{n}) \right]\\
    &&\left[\frac{1}{n}\dot{\ell}_n^{(1)}(\boldsymbol{\beta}_{0s1}|\widetilde{\boldsymbol{\Lambda}})+o_p(1)(\widehat{\boldsymbol{\alpha}}^*-\boldsymbol{\beta}_{0s1}) 
    \right]+o_p(1)\\
    &=&\left[(I^{(1)}(\boldsymbol{\beta}_{0s1}))^{-1}+o_p(1) \right]\left[n^{-1/2}\dot{\ell}_n^{(1)}(\boldsymbol{\beta}_{0s1 }|\widetilde{\boldsymbol{\Lambda}})\right]\\
    &+&o_p(1)\sqrt{n}(\widehat{\boldsymbol{\alpha}}^*-\boldsymbol{\beta}_{0s1})+o_p(1).
\end{eqnarray*}
\endgroup
Further, we obtain
\begin{eqnarray}
\label{extratheorem1}
    \sqrt{n}(\widehat{\boldsymbol{\alpha}}^*-\boldsymbol{\beta}_{0s1})(1+o_p(1))&=&\left[(I^{(1)}(\boldsymbol{\beta}_{0s1}))^{-1}+o_p(1) \right]\left[n^{-1/2}\dot{\ell}_n^{(1)}(\boldsymbol{\beta}_{0s1 }|\widetilde{\boldsymbol{\Lambda}})\right]\nonumber\\
    &+&o_p(1).
\end{eqnarray}
By simplifying (\ref{extratheorem1}), we have 
\begin{eqnarray*}
    \sqrt{n}(\widehat{\boldsymbol{\alpha}}^*-\boldsymbol{\beta}_{0s1})=(I^{(1)}(\boldsymbol{\beta}_{0s1}))^{-1}\left[
    n^{-1/2} \dot{\ell}_n^{(1)}(\boldsymbol{\beta}_{0s1 }|\widetilde{\boldsymbol{\Lambda}})\right]+o_p(1).
\end{eqnarray*}
Let $\boldsymbol{\Sigma}=\left(I^{(1)}(\boldsymbol{\beta}_{0s1})\right)^{-1}$, then for any $\boldsymbol{b}_n$ being a $q_n$-vector, assume $\left\lVert \boldsymbol{b}_n\right\rVert=1$ or $\boldsymbol{b}_n^\top \boldsymbol{b}_n=1$, we have
\begin{eqnarray*}
    \sqrt{n}\boldsymbol{b}_n^\top \boldsymbol{\Sigma}^{-\frac{1}{2}}(\widehat{\boldsymbol{\alpha}}^*-\boldsymbol{\beta}_{0s1})&=&\boldsymbol{b}_n^\top \boldsymbol{\Sigma}^{-\frac{1}{2}}(I^{(1)}(\boldsymbol{\beta}_{0s1}))^{-1}\left[n^{-1/2}\dot{\ell}_n^{(1)}(\boldsymbol{\beta}_{0s1 }|\widetilde{\boldsymbol{\Lambda}})\right]+o_p(1)\\
    &=&\boldsymbol{b}_n^\top 
    (I^{(1)}(\boldsymbol{\beta}_{0s1}))^{-\frac{1}{2}}
    \left[n^{-1/2}\dot{\ell}_n^{(1)}(\boldsymbol{\beta}_{0s1 }|\widetilde{\boldsymbol{\Lambda}})\right]+o_p(1).
\end{eqnarray*}
Since $\dot{\ell}_n^{(1)}(\boldsymbol{\beta}_{0s1 }|\widetilde{\boldsymbol{\Lambda}})$ is the partial score about $\boldsymbol{\beta}$ and can be considered as the semiparametric efficient score  \cite[see][]{bickel1993efficient}, we have 
\begin{eqnarray*}
\lefteqn{\text{Cov}\left\{ \boldsymbol{b}_n^\top(I^{(1)}(\boldsymbol{\beta}_{0s1}))^{-\frac{1}{2}}\left[n^{-1/2}\dot{\ell}_n^{(1)}(\boldsymbol{\beta}_{0s1 }|\widetilde{\boldsymbol{\Lambda}}) \right]\right\}
} \\
&=&\boldsymbol{b}_n^\top(I^{(1)}(\boldsymbol{\beta}_{0s1}))^{-\frac{1}{2}}I^{(1)}(\boldsymbol{\beta}_{0s1})(I^{(1)}(\boldsymbol{\beta}_{0s1}))^{-\frac{1}{2}}\boldsymbol{b}_n\\
    &=&\boldsymbol{b}_n^\top \boldsymbol{b}_n=1.
\end{eqnarray*}
Therefore, by the Central Limit Theorem and Slutsky's Theorem, we have
\begin{eqnarray*}
    \sqrt{n}\boldsymbol{b}_n^\top \boldsymbol{\Sigma}^{-\frac{1}{2}}(\widehat{\boldsymbol{\alpha}}^*-\boldsymbol{\beta}_{0s1})\longrightarrow N(0,1)
\end{eqnarray*}
in distribution, and equivalently, 
\begin{eqnarray*}
    \sqrt{n}\boldsymbol{b}_n^\top \boldsymbol{\Sigma}^{-\frac{1}{2}}(\widehat{\boldsymbol{\beta}}_1^*-
    \boldsymbol{\beta}_{0s1}) \longrightarrow N(0,1)
\end{eqnarray*}
in distribution.
The proof of Theorem \ref{theorem1} (iii) is completed.
\end{proof}
\end{document}